\documentclass[12pt,3p,times]{elsarticle}
\usepackage{geometry}
\usepackage{amsmath}
\usepackage{amsthm}
\usepackage{amssymb}
\usepackage{amsfonts}
\usepackage{bbold}

\usepackage{bm}
\usepackage{multirow}
\usepackage{mathrsfs}
\usepackage{fixltx2e}
\usepackage[lofdepth,lotdepth]{subfig}
\usepackage{graphicx}
\usepackage{float}
\usepackage{comment}
\usepackage{dutchcal}
\usepackage[colorlinks=true,linkcolor=blue,citecolor=blue,pdfborder={0 0 0}]{hyperref}
\usepackage{verbatim}
\usepackage{xpatch}
\xpatchbibmacro{cite}{\printnames{labelname}}%
{\ifciteseen{\printnames{labelname}}{\printnames[][-\value{listtotal}]{labelname}}}
{}{}
\xpatchbibmacro{textcite}{\printnames{labelname}}%
{\ifciteseen{\printnames{labelname}}{\printnames[][-\value{listtotal}]{labelname}}}
{}{}

\numberwithin{equation}{section}

\makeatletter
\newenvironment{sub}[1]{%
  \def\subtheoremcounter{#1}%
  \refstepcounter{#1}%
  \protected@edef\theparentnumber{\csname the#1\endcsname}%
  \setcounter{parentnumber}{\value{#1}}%
  \setcounter{#1}{0}%
  \expandafter\def\csname the#1\endcsname{\theparentnumber.\Alph{#1}}%
  \ignorespaces
}{%
  \setcounter{\subtheoremcounter}{\value{parentnumber}}%
  \ignorespacesafterend
}
\makeatother
\newcounter{parentnumber}

\theoremstyle{plain}
\newtheorem{Assump}{Assumption}
\newtheorem{Theo}{Theorem}[section]
\newtheorem{Prop}{Proposition}[section]
\newtheorem{Lemma}{Lemma}

\newtheorem{Algorithm}{Algorithm}[section]

\newtheorem*{Hypo}{Hypotheses}

\theoremstyle{remark}
\newtheorem{Remark}{Remark}[section]

\newcommand{\change}[1]{{\color{black}{{#1}}}}

\newcommand{\ep}{\epsilon}
\newcommand{\bc}{\bm{c}}

\newcommand{\bB}{\bm{B}}
\newcommand{\cB}{\mathcal{B}}

\newcommand{\bH}{\bm{H}}
\newcommand{\tr}{\tilde{r}}
\newcommand{\bU}{\bm{U}}
\newcommand{\bV}{\bm{V}}
\newcommand{\bW}{\bm{W}}
\newcommand{\bX}{\bm{X}}
\newcommand{\bY}{\bm{Y}}

\newcommand{\bep}{\bm{\epsilon}}
\newcommand{\up}{\upsilon}
\newcommand{\bUp}{\bm{\Upsilon}}
\newcommand{\bxi}{\bm{\xi}}

\newcommand{\bPsi}{\bm{\Psi}}
\newcommand{\bGamma}{\bm{\Gamma}}
\newcommand{\bTheta}{\bm{\Theta}}
\newcommand{\bLambda}{\bm{\Lambda}}
\newcommand{\bOmega}{\bm{\Omega}}
\newcommand{\bzeta}{\bm{\zeta}}
\newcommand{\tgamma}{\tilde{\gamma}}
\newcommand{\tbGamma}{\tilde{\bm{\Gamma}}}
\newcommand{\tV}{\tilde{V}}
\newcommand{\tzeta}{\tilde{\zeta}}
\newcommand{\ttheta}{\tilde{\theta}}

\newcommand{\hphi}{\hat{\phi}}

\newcommand{\thphi}{\tilde{\hat{\phi}}}

\newcommand{\tpsi}{\tilde{\psi}}
\newcommand{\tbPsi}{\tilde{\bm{\Psi}}}

\newcommand{\op}{o_{p}(1)}

\graphicspath{ {Images/} }
\allowdisplaybreaks
\begin{document}
\begin{frontmatter}
\title{Bootstrap seasonal unit root test under periodic variation}
\author[1]{Nan Zou}
\ead{nan.zou@utoronto.ca}
\author[2]{Dimitris N. Politis}
\ead{dpolitis@ucsd.edu}
\address[1]{Department of Statistical Sciences, University of Toronto, Toronto, ON M5S 3G3.} 
\address[2]{Department of Mathematics, University of California-San Diego, La Jolla, CA 92093.} 
\begin{abstract}
Both seasonal unit roots and periodic variation can be prevalent in seasonal data. When testing seasonal unit roots under periodic variation, the validity of the existing methods, such as the HEGY test, remains unknown. This paper analyzes the behavior of the augmented HEGY test and the unaugmented HEGY test under periodic variation. It turns out that the asymptotic null distributions of the HEGY statistics testing the single roots at $1$ or $-1$ when there is periodic variation are identical to the asymptotic null distributions when there is no periodic variation. On the other hand, the asymptotic null distributions of the statistics testing any coexistence of roots at $1$, $-1$, $i$, or $-i$ when there is periodic variation are non-standard and are different from the asymptotic null distributions when there is no periodic variation. Therefore, when periodic variation exists, HEGY tests are not directly applicable to the joint tests for any concurrence of seasonal unit roots. As a remedy, \change{bootstrap is proposed}; in particular, the augmented HEGY test with seasonal independent and identically distributed (iid) bootstrap and the unaugmented HEGY test with seasonal block bootstrap \change{are implemented}. The consistency of these bootstrap procedures is established. The finite-sample behavior of these bootstrap tests is illustrated via simulation and prevails over their competitors'. Finally, these bootstrap tests \change{are applied} to detect the seasonal unit roots in various economic time series.  
\end{abstract}
\begin{keyword}
Seasonality \sep Unit root \sep AR sieve bootstrap \sep Block bootstrap \sep Functional central limit theorem \sep
\end{keyword}
\end{frontmatter}

\noindent

\section{Introduction}
As deterministic trend and unit root exist in time series, deterministic seasonality and seasonal unit root occur in seasonal time series. Intuitively, seasonal unit root process is a process with non-stationary stochastic seasonality. The hypothesis testing for seasonal unit root dates back to \cite{hasza1982testing,dickey1984testing}. The most widely-used test may be the HEGY test proposed by \cite{hylleberg1990seasonal}. Recent advances in this vein include \cite{del2015augmented,del2016performance,del2018semi,cavaliere2017wild}.

In addition to non-stationary stochastic seasonality, the generating processes of seasonal time series may consist of periodically varying coefficients, for example, the process may be an AutoRegressive (AR) process with periodically varying AR parameters. Examples of such periodically varying time series include the consumption series in \cite{osborn1988seasonality}, the air pollutant series in \cite{mccollister1975linear}, and the river flows in \cite{hipel1994time}. Theoretical research on periodically varying processes includes, among others, \cite{gladyshev1963periodically,tiao1980hidden}. For more information on periodically varying time series, see \cite{ghysels2001econometric,franses2004periodic,gardner2006cyclostationarity}. 

Indeed, seasonal unit roots and periodic variation sometimes coexist in seasonal data. For example, the seasonal consumption in UK has been found periodically varying and seasonally integrated by \cite{Osborn1989} and by \cite{hylleberg1990seasonal}, respectively. As a result, it is important to design seasonal unit root tests that allow for periodic variation. In particular, consider \change{quarterly} data $\{Y_{4t+s}: t=1,\dots,T$, $s=-3,\dots,0\}$ generated by 
\begin{equation}
\alpha_{s}(L)Y_{4t+s}=V_{4t+s}, \label{PAR1}
\end{equation}
where $\alpha_{s}(L)$ are seasonally varying AR filters, and $\bV_{t}=(V_{4t-3},\dots,V_{4t})'$ is a weakly stationary vector-valued process. If for all $s=-3,\dots,0$, $\alpha_{s}(L)$ have roots at $1$, $-1$, or $\pm i$, then respectively $\{Y_{4t+s}\}$ has non-stationary stochastic components with period $+\infty$, $2$, or $4$. The test for the seasonal roots at 1, $-1$, or $\pm i$ indeed precedes the removal of these non-stationary stochastic components and the inference on the detrended time series. 
To carry out this test for seasonal roots, \cite{franses1994multivariate} applies Johansen's method by \cite{johansen1988statistical}, \change{while \cite{boswijk1997} refer to the idea of likelihood ratio; however, both approaches limit scopes to finite order seasonal AR time series} and cannot directly test the existence of a certain root without first checking the number of seasonal unit roots.
As a remedy, \cite{ghysels1996periodic} design a Wald test that directly tests whether a certain root exists. \change{However, the asymptotics of \cite{ghysels1996periodic} is not totally correct according to \cite{osborn2002asymptotic}, and the simulation in \cite{ghysels1996periodic} shows the Wald test less powerful than the augmented HEGY test.}

Can we directly apply the HEGY test in the periodic setting \eqref{PAR1}? To the best of our knowledge, no literature has offered a satisfactory answer. \cite{burridge2001} analyze the behavior of the augmented HEGY test when only seasonal heteroscedasticity exists; \cite{cavaliere2017wild} take into consideration the seasonal non-stationary heteroscedasticity and the seasonal conditional heteroscedasticity but again limit their scope to heteroscedasticity; \cite{del2008} analyze the augmented HEGY test in the periodically integrated model, a model related to but different from model \eqref{PAR1}. No literature has ever touched on the behavior of the unaugmented HEGY test proposed by \cite{breitung1998}, the important semi-parametric version of the HEGY test. Since the unaugmented HEGY test does not assume the noise having an AR structure, it may suit our non-parametric model \eqref{PAR1} better.

To check the legitimacy of the HEGY test in the periodic setting \eqref{PAR1}, this paper derives the asymptotics of the unaugmented HEGY test and the augmented HEGY test. It turns out that, the asymptotic null distributions of the statistics testing the single roots at 1 or $-1$ are standard. More specifically, for each single root at 1 or $-1$, the asymptotic null distribution of the augmented HEGY statistic is identical to that of Augmented Dickey-Fuller (ADF) test by \cite{dickey1979distribution}, and the asymptotic null distribution of the unaugmented HEGY statistic is identical to that of Phillips-Perron test by \cite{phillips1988}. However, the asymptotic null distributions of the statistics testing any combination of roots at 1, $-1$, $i$, or $-i$ depend on the periodically varying coefficients, are non-standard and non-pivotal, and cannot be directly pivoted. Therefore, when periodic variation exists, the augmented and the unaugmented HEGY tests can be applied to single roots at 1 or $-1$ but cannot be straightforwardly applied to the coexistence of any roots. 

As a remedy, this paper proposes the application of bootstrap. In general, bootstrap's advantages are two fold. Firstly, bootstrap helps when the asymptotic distributions of the statistics of interest cannot be found or simulated. Secondly, even when the asymptotic distributions can be found and simulated, bootstrap method may enjoy second-order efficiency when these asymptotic distributions are pivotal. For the aforementioned problem, bootstrap serves as an appealing solution. Firstly, it is hard to estimate the periodically varying parameters in the asymptotic null distributions, and it is hard to simulate these asymptotic null distributions. Secondly, it can be conjectured that the bootstrap seasonal unit root test inherits second order efficiency from the bootstrap non-seasonal unit root test when the asymptotic distributions are pivotal; see \cite{park2003}. The methodological literature we find on bootstrapping the HEGY test only includes \cite{burridge2004bootstrapping,cavaliere2017wild}. It will be shown in Remark \ref{Re:seasonal iid bootstrap} that none of these bootstrap approaches is consistent under the general periodic setting \eqref{PAR1}. 

To cater to the general periodic setting \eqref{PAR1}, this paper designs new bootstrap tests, namely 1) the seasonal iid bootstrap augmented HEGY test, and 2) the seasonal block bootstrap unaugmented HEGY test. When calculating the test statistics, the two tests run HEGY regression using all data in order to preserve the orthogonal structure of the HEGY regression. On the other hand, when generating bootstrap replicates, both tests conduct season-by-season regressions to duplicate the periodic structure of the original data. In particular, the first test obtains residuals from season-by-season augmented HEGY regressions, and then applies the seasonal iid bootstrap to the whitened regression errors, while the second test starts with season-by-season unaugmented HEGY regressions, and then handles the correlated errors with the seasonal block bootstrap proposed by \cite{dudek2014generalized}. We establish the Functional Central Limit Theorem (FCLT) for both bootstrap tests and then demonstrates the consistency of both bootstrap procedures. 

The paper proceeds as follows. Section 2 formalizes the settings, states the assumptions, and presents the hypotheses. Section 3 gives the asymptotic null distributions of the augmented HEGY test statistics, details the algorithm of the seasonal iid bootstrap augmented HEGY test, and establishes the consistency of the bootstrap. Section 4 presents the asymptotic null distributions of the unaugmented HEGY test statistics, specifies the algorithm of the seasonal block bootstrap unaugmented HEGY test, and proves the consistency of the bootstrap. Section 5 shows that in simulation our two bootstrap tests outperform \change{their competitors, namely, the non-seasonal bootstrap augmented HEGY test by \cite{burridge2004bootstrapping}} and the Wald test by \cite{ghysels1996periodic}. Section 6 applies our two bootstrap tests to various economic time series. Appendix includes all technical proofs. 

\section{Periodically varying time series}\label{sec:settings}
Consider the real-valued quarterly data $\{Y_{4t+s}: t=1,\dots,T$, $s=-3,\dots,0\}$ generated by the seasonal model 
\begin{equation}
\alpha_{s}(L)Y_{4t+s}=V_{4t+s}, \label{PAR2}
\end{equation}
where $LY_{4t+s}=Y_{4t+s-1}$ and $\alpha_{s}(L)=1-\sum_{j=1}^{4}\alpha_{j,s}L^{j}$. 
\change{Suppose that for all $s=-3,\dots,0$, the roots of $\alpha_{s}(L)$ are on or outside the unit circle. If for all $s=-3,\dots,0$, $\alpha_{s}(L)$ has roots on the unit circle, then suppose that for $s=-3,\dots,0$, $\alpha_{s}(L)$ share the same set of roots on the unit circle}, this set of roots is a subset of $\{1,-1,\pm i\}$, and $Y_{-3}=Y_{-2}=Y_{-1}=Y_{0}=0$; otherwise, suppose that our data is a stretch of the process $\{Y_{4t+s}, t=\dots,-1,0,1,\dots$, $s=-3,\dots,0\}$. Let $V_{4t+s}$ and $\alpha_{j,s}$ be the prediction errors and coefficients of \eqref{PAR2}, respectively. More specifically, $\alpha_{j,s}$ is defined such that \change{for each $s=-3,\dots,0$,}
$$V_{4t+s}\stackrel{def}{=}\alpha_{s}(L)Y_{4t+s}$$
is orthogonal to $Y_{4t+s-1},\dots,Y_{4t+s-4}$. Let $\bep_{t}=(\ep_{4t-3},\dots,\ep_{4t})'$ and $\bB\bep_{t}=\bep_{t-1}$. Then $\bB=L^{4}$. Denote by AR$(p)$ an AR process with order $p$, by MA Moving Average, by VMA$(\infty)$ a Vector MA process with infinite moving average order, and by VARMA$(p,q)$ a Vector ARMA process with AR order $p$ and MA order $q$. Let $Re(z)$ be the real part of complex number $z$, $\lfloor x \rfloor$ be the largest integer smaller or equal to real number $x$, and $\lceil x \rceil$ be the smallest integer larger or equal to $x$.

\begin{sub}{Assump}
\begin{Assump} \label{assump 1a}
Assume $$\bV_{t}=\bTheta(\bB)\bep_{t}$$
where $\bTheta(\bB)=\sum_{i=0}^{\infty}\bTheta_{i}\bB^{i}$; the $(j,k)$ entry of $\bTheta_{i}$, denoted by $\bTheta_{i}^{(j,k)}$, satisfies $\sum_{i=1}^{\infty}i|\bTheta_{i}|^{(j,k)}<\infty$ for all $j$ and $k$; the determinant of $\bTheta(z)$ has all roots outside the unit circle; $\bTheta_{0}$ is a lower diagonal matrix whose diagonal entries equal 1; $\bep_{t}$ is a vector-valued white noise process with mean zero and covariance matrix $\bOmega$; and $\bOmega$ is diagonal.
\end{Assump}
Assumption \ref{assump 1a} assumes that $\{\bV_{t}\}$ is VMA$(\infty)$ with respect to white noise innovations. This is equivalent to the assumption that $\{\bV_{t}\}$ is a weakly stationary process with no deterministic part in the multivariate Wold decomposition. The assumptions on $\bTheta_{0}$ and the determinant of $\bTheta(z)$ ensure the causality and the invertibility of $\{\bV_{t}\}$ and the identifiability of $\bOmega$. 
\begin{Assump} \label{assump 1b}
Assume $$\bV_{t}=\bPsi(\bB)^{-1}\bLambda(\bB)\bep_{t}\equiv\bTheta(\bB)\bep_{t}$$
where $\bPsi(\bB)=\sum_{i=0}^{p}\bPsi_{i}\bB^{i}$; $\bLambda(\bB)=\sum_{i=0}^{q}\bLambda_{i}\bB^{i}$; the determinants of $\bPsi(z)$ and $\bLambda(z)$ have all roots outside the unit circle; $\bPsi_{0}$ and $\bLambda_{0}$ are lower diagonal matrices whose diagonal entries are 1; $\bep_{t}$ is a vector-valued white noise process with mean zero and covariance matrix $\bOmega$; and $\bOmega$ is diagonal. 
\end{Assump}
\end{sub}
Assumption \ref{assump 1b} restricts $\{\bV_{t}\}$ to be VARMA$(p,q)$ with respect to white noise innovation. Compared to the VMA$(\infty)$ model in Assumption \ref{assump 1a}, VARMA$(p,q)$'s main constraint is its exponentially decaying autocovariance. Again, the assumptions on $\bPsi_{0}$, $\bLambda_{0}$ and the determinant of $\bPsi(z)$ and $\bLambda(z)$ in Assumption \ref{assump 1b} ensure the causality and the invertibility of $\{\bV_{t}\}$ and the identifiablity of $\bOmega$. 

At this stage $\{\bep_{t}\}$ is only assumed to be a white noise sequence of random vectors. In fact, $\{\bep_{t}\}$ needs to be weakly dependent as well; however, $\{\bep_{t}\}$ needs not to be iid.  
\begin{sub}{Assump} 
\begin{Assump}
\label{assump 2a}
(i) $\{\bep_{t}\}$ is a fourth-order stationary, martingale difference vector-valued process. (ii) $\exists K>0$, $\forall$ $i$, $j$, $k$, and $l$, $\sum_{h=-\infty}^{\infty}|Cov(\ep_{i}\ep_{j},\ep_{k-h}\ep_{l-h})|<K$.
\end{Assump}
\begin{Assump}
\label{assump 2b}
(i) $\{\bep_{t}\}$ is a strictly stationary, strong-mixing vector-valued process with finite $4+\delta$ moment for some $\delta>0$. (ii) $\{\bep_{t}\}$'s strong mixing coefficient $a(k)$ satisfies $\sum_{k=1}^{\infty}k(a(k))^{\delta/(4+\delta)}<\infty$.
\end{Assump}
Notice that the assumption on the stationarity of the vector-valued process $\{\bep_{t}\}$ is weaker than an assumption on the stationarity of the scalar-valued process $\{\ep_{4t+s}\}$. In addition, the strong mixing condition in Assumption \ref{assump 2b} actually guarantees (ii) of Assumption \ref{assump 2a}; see Lemma \ref{boundedness}.
\end{sub}
\begin{Hypo}
We tackle the following set of null hypotheses. The alternative hypotheses are the complements of the null hypotheses. 
\begin{align*}
&H_{0}^{1}:&&\alpha_{s}(1)=0,\ &&\forall s=-3,\dots,0.\\
&H_{0}^{2}:&&\alpha_{s}(-1)=0 ,\ &&\forall s=-3,\dots,0.\\
&H_{0}^{1,2}:&&\alpha_{s}(1)=\alpha_{s}(-1)=0, \ &&\forall s=-3,\dots,0.\\
&H_{0}^{3,4}:&&\alpha_{s}(i)=\alpha_{s}(-i)=0,\ &&\forall s=-3,\dots,0.\\
&H_{0}^{1,3,4}:&&\alpha_{s}(1)=\alpha_{s}(i)=\alpha_{s}(- i)=0,\ &&\forall s=-3,\dots,0.\\
&H_{0}^{2,3,4}:&&\alpha_{s}(-1)=\alpha_{s}(i)=\alpha_{s}(-i)=0,\ &&\forall s=-3,\dots,0.\\
&H_{0}^{1,2,3,4}:&&\alpha_{s}(1)=\alpha_{s}(-1)=\alpha_{s}(i)=\alpha_{s}(-i)=0,\ &&\forall s=-3,\dots,0.
\end{align*}
\end{Hypo}

Indeed, the alternative hypotheses can be written as one-sided hypotheses. Notice that for all $s=-3,\dots,0$, $\alpha_{s}(0)=1$, $\alpha_{s}(\cdot)$ is continuous, and the roots of $\alpha_{s}(\cdot)$ are either on or outside the unit circle. By the intermediate value theorem, $\alpha_{s}(1)\neq0$ implies that $\alpha_{s}(1)>0$, $\alpha_{s}(-1)\neq0$ implies that $\alpha_{s}(-1)>0$, and $\alpha_{s}(i)\neq0$ implies that $Re(\alpha_{s}(i))>0$. 

To analyze the roots of $\alpha_{s}(L)$, \cite{hylleberg1990seasonal} propose the partial fraction decomposition 
$$
  \frac{\alpha_{s}(L)}{1-L^{4}}=
  \lambda_{0,s}+\frac{\lambda_{1,s}}{1-L}+\frac{\lambda_{2,s}}{1+L}+\frac{\lambda_{3,s}L+\lambda_{4,s}}{1+L^{2}}; 
$$
thus
\begin{equation}
\begin{aligned}
\alpha_{s}(L)&=\lambda_{0,s}(1-L^{4})\\
&+\lambda_{1,s}(1+L)(1+L^{2})+\lambda_{2,s}(1-L)(1+L^{2}) \\
&+\lambda_{3,s}(1-L)(1+L)L+\lambda_{4,s}(1-L)(1+L).
\end{aligned} 
\label{partial fraction}
\end{equation}
\\
Substituting \eqref{partial fraction} into \eqref{PAR2}, we get 
\begin{equation}
(1-L^{4})Y_{4t+s}=\sum_{j=1}^{4}\pi_{j,s}Y_{j,4t+s-1}+V_{4t+s},
\label{PHEGY}
\end{equation}
where
\begin{equation}
\begin{aligned}
Y_{1,4t+s}&=(1+L)(1+L^{2})Y_{4t+s},&
Y_{2,4t+s}&=-(1-L)(1+L^{2})Y_{4t+s},\\
Y_{3,4t+s}&=-L(1-L^{2})Y_{4t+s},&
Y_{4,4t+s}&=-(1-L^{2})Y_{4t+s},\\
\pi_{1,s}&=-\lambda_{1,s},&
\pi_{2,s}&=-\lambda_{2,s},\\ 
\pi_{3,s}&=-\lambda_{4,s},&
\pi_{4,s}&=\lambda_{3,s}\change{.}
\end{aligned}
\label{PHEGY2}
\end{equation}
By \eqref{partial fraction} and \eqref{PHEGY2}, $\pi_{j,s}$ relates to the root of $\alpha_{s}(z)$.
\begin{Prop}[\cite{hylleberg1990seasonal}]
\label{prop:HEGY}
\begin{align*}
&\alpha_{s}(1)=0 \iff  \pi_{1,s}=0, &&\alpha_{s}(1)\neq0 \iff  \pi_{1,s}<0,\\
&\alpha_{s}(-1)=0 \iff  \pi_{2,s}=0, &&\alpha_{s}(-1)\neq0 \iff  \pi_{2,s}<0,\\
&\alpha_{s}(i)=0 \iff \alpha_{s}(-i)=0 \iff  \pi_{3,s}=\pi_{4,s}=0,
&&\alpha_{s}(i)\neq 0 \iff \alpha_{s}(-i)\neq0 \iff \pi_{3,s}<0.
\end{align*}
\end{Prop}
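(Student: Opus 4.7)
The plan is to read off the residues in the partial fraction decomposition \eqref{partial fraction} by the usual clear-the-denominator trick, translate them into the $\pi_{j,s}$'s via the definitions in \eqref{PHEGY2}, and then invoke the one-sidedness argument already recorded just above the proposition to upgrade the equivalences to strict inequalities on the alternative side.

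First I would recover $\lambda_{1,s}$ and $\lambda_{2,s}$ by multiplying both sides of
$$\frac{\alpha_{s}(L)}{1-L^{4}}=\lambda_{0,s}+\frac{\lambda_{1,s}}{1-L}+\frac{\lambda_{2,s}}{1+L}+\frac{\lambda_{3,s}L+\lambda_{4,s}}{1+L^{2}}$$
by $(1-L)$ or $(1+L)$ and evaluating at $L=1$ and $L=-1$, respectively. This yields $\alpha_{s}(1)=4\lambda_{1,s}$ and $\alpha_{s}(-1)=4\lambda_{2,s}$, hence by \eqref{PHEGY2}, $\pi_{1,s}=-\alpha_{s}(1)/4$ and $\pi_{2,s}=-\alpha_{s}(-1)/4$. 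Both equivalences in the first two lines of the proposition follow immediately, and the paper's intermediate-value argument (using $\alpha_{s}(0)=1$ together with the assumption that all roots of $\alpha_{s}$ lie on or outside the unit circle) forces $\alpha_{s}(1)>0$ whenever $\alpha_{s}(1)\ne0$, which is exactly $\pi_{1,s}<0$; the argument at $L=-1$ is identical.

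Next, for the complex conjugate pair I would multiply through by $(1+L^{2})$ and evaluate at $L=i$ and $L=-i$ to get
$$\frac{\alpha_{s}(i)}{(1-i)(1+i)}=\lambda_{3,s}i+\lambda_{4,s},\qquad \frac{\alpha_{s}(-i)}{(1+i)(1-i)}=-\lambda_{3,s}i+\lambda_{4,s}.$$
Since $\alpha_{s}$ has real coefficients, $\alpha_{s}(-i)=\overline{\alpha_{s}(i)}$, so adding and subtracting these two identities produces $\lambda_{4,s}=Re(\alpha_{s}(i))/2$ and $\lambda_{3,s}=Im(\alpha_{s}(i))/2$. Via \eqref{PHEGY2} this gives $\pi_{3,s}=-Re(\alpha_{s}(i))/2$ and $\pi_{4,s}=Im(\alpha_{s}(i))/2$. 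Hence $\pi_{3,s}=\pi_{4,s}=0$ is equivalent to $Re(\alpha_{s}(i))=Im(\alpha_{s}(i))=0$, i.e., $\alpha_{s}(i)=0$; and conjugation shows this is in turn equivalent to $\alpha_{s}(-i)=0$.

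Finally, for the strict inequality on the $(\pm i)$ alternative I would again rely on the remark preceding the proposition: the assumption that no root of $\alpha_{s}$ lies strictly inside the unit circle, combined with $\alpha_{s}(0)=1$, forces $Re(\alpha_{s}(i))>0$ when $\alpha_{s}(i)\ne0$, which is exactly $\pi_{3,s}<0$. The entire proof is thus a short computation plus a single appeal to an already-established sign fact; the only place where care is required is the complex case, where one must use the reality of the coefficients of $\alpha_{s}$ to identify $\alpha_{s}(-i)$ with $\overline{\alpha_{s}(i)}$, and this is the only even mildly non-routine step.
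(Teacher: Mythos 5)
Your proposal is correct and follows essentially the route the paper indicates: the paper gives no proof of this proposition (it is cited from HEGY, 1990, with only the hint $\alpha_{s}(1)=4\lambda_{1,s}$ and the intermediate-value sign remark in the preceding paragraph), and your residue computation at $L=1,-1,\pm i$ combined with \eqref{PHEGY2} and that sign fact is exactly the intended argument. The identities $\pi_{1,s}=-\alpha_{s}(1)/4$, $\pi_{2,s}=-\alpha_{s}(-1)/4$, $\pi_{3,s}=-Re(\alpha_{s}(i))/2$, and $\pi_{4,s}=Im(\alpha_{s}(i))/2$ all check out.
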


By Proposition \ref{prop:HEGY}, the test for the null hypotheses can be carried on by checking the corresponding $\pi_{j,s}$, where $\pi_{j,s}$ can be estimated by Ordinary Least Squares (OLS) regression. To estimate $\pi_{j,s}$ by OLS, one might first attempt to implement the OLS season by season with season-specific coefficients. Unfortunately, this season-by-season regression indeed has a non-orthogonal design matrix; see \cite{ghysels2001econometric}, \change{p. 158,} and Lemma \ref{le:unaug real1}. On the other hand, since the non-periodic regressions \eqref{aug HEGY} and \eqref{unaug HEGY} preserve the orthogonality, we will instead apply the non-periodic regression equations \eqref{aug HEGY} and \eqref{unaug HEGY}. 

When we regress $\{Y_{4t+s}\}$ with non-periodic regression equations \eqref{aug HEGY} and \eqref{unaug HEGY}, the periodically varying sequence $\{V_{4t+s}\}$ is fitted in misspecified non-periodic AR models. Consider, as an example, fitting $\{V_{4t+s}\}$ in a misspecified AR(1) model $V_{\tau}=\hphi V_{\tau-1}+\hat{\zeta}_{\tau}$. Then $\hphi=\tgamma(1)/\tgamma(0)+o_{p}(1)$, where 
\begin{equation}
\tgamma(h)=\frac{1}{4}\sum_{s=-3}^{0}E[V_{4t+s}V_{4t+s-h}].
\label{eqn:tgamma}    
\end{equation}
Since $\tgamma(\cdot)$ is positive semi-definite, we can find a weakly stationary sequence $\{\tV_{\tau}\}$ with mean zero and autocovariance function $\tgamma(\cdot)$. We call $\{\tV_{\tau}\}$ a misspecified constant parameter representation of $\{V_{4t+s}\}$; see also \cite{osborn1991implications}. We will refer to $\{\tV_{\tau}\}$ in later sections. 
\section{Seasonal iid bootstrap augmented HEGY Test}
\subsection{Augmented HEGY test}
\cite{hylleberg1990seasonal} assume that the $\pi_{j,s}$ and $V_{4t+s}$ in \eqref{PHEGY} do not depend on $s$. Consequently, they propose to run the OLS regression equation 
\begin{equation}\label{aug HEGY}
(1-L^{4})Y_{\tau}=\sum_{j=1}^{4}\hat{\pi}_{j}^{A}Y_{j,\tau-1}+\sum^{k}_{i=1}\hphi_{i}(1-L^{4})Y_{\tau-i}+\hat{\zeta}_{\tau}^{A},
\end{equation}
where augmentations $(1-L^{4})Y_{\tau-i}$, $i=1,2,\dots,k$, pre-whiten the time series $(1-L^{4})Y_{\tau}$ up to an order of $k$. If $k\rightarrow \infty$ as sample size $T\rightarrow \infty$, the residual $\{\hat{\zeta}_{\tau}^{A}\}$ will be asymptotically uncorrelated.

Let $A$ stands for ``Augmented''. Let $\hat{\pi}_{j}^{A}$ be the OLS estimator in \eqref{aug HEGY}, $t_{j}^{A}$ be the t-statistics corresponding to $\hat{\pi}_{j}^{A}$, and $F_{3,4}^{A}$ be the F-statistic corresponding to $\hat{\pi}_{3}^{A}$ and $\hat{\pi}_{4}^{A}$.
Other $F$-statistics $F_{1,2}^{A}$, $F_{1,3,4}^{A}$, $F_{2,3,4}^{A}$, and $F_{1,2,3,4}^{A}$ can be defined similarly. Where there is no periodic variation, \cite{hylleberg1990seasonal} proposes to reject $H_{0}^{1}$ if $\hat{\pi}_{1}^{A}$ is too small, reject $H_{0}^{2}$ if $\hat{\pi}_{2}^{A}$ is too small, reject $H_{0}^{3,4}$ if $F_{3,4}^{A}$ is too large, and reject other composite hypotheses if their corresponding $F$-statistics are too large.

\subsection{Augmented HEGY test under model misspecification}
Now we apply the augmented HEGY test to periodically varying processes. Namely, we run regression equation \eqref{aug HEGY} with $\{Y_{4t+s}\}$ generated by \eqref{PAR2}. Our results show that when testing roots at 1 or $-1$ separately, the t-statistics $t_{1}^{A}$, $t_{2}^{A}$, and the $F$-statistics have pivotal asymptotic distributions. On the other hand, when testing joint roots at 1 and $-1$, and when testing hypotheses that involve roots at $\pm i$, the asymptotic distributions of the testing statistics are non-pivotal and cannot be easily pivoted. 
\begin{Theo}\label{aug real}
Assume that Assumption \ref{assump 1b} and one of Assumption \ref{assump 2a} or \ref{assump 2b} hold. Further, assume $T\rightarrow \infty$, $k=k_{T}\rightarrow \infty$, $k=o(T^{1/3})$, and $ck>T^{1/\alpha}$ for some $c>0$ and $\alpha>0$. Then under $H_{0}^{1,2,3,4}$, the asymptotic distributions of $t_{j}^{A}$, $j=1,2$, and $F$-statistics are given by 
\begin{align*}
t_{j}^{A}&\Rightarrow\frac{\int_{0}^{1}W_{j}(r)dW_{j}(r)}{\sqrt{\int_{0}^{1}W_{j}^{2}(r)dr}}\equiv\mathscr{\xi}_{j},\ \text{j=1,2,}\\
F_{1,2}^{A}&\Rightarrow\frac{1}{2}(\mathscr{\xi}_{1}^{2}+\mathscr{\xi}_{2}^{2}), \quad
F_{3,4}^{A}\Rightarrow\frac{1}{2}(\mathscr{\xi}_{3}^{2}+\mathscr{\xi}_{4}^{2}), \\
F_{1,3,4}^{A}&\Rightarrow\frac{1}{3}(\mathscr{\xi}_{1}^{2}+\mathscr{\xi}_{3}^{2}+\mathscr{\xi}_{4}^{2}), \quad
F_{2,3,4}^{A}\Rightarrow\frac{1}{3}(\mathscr{\xi}_{2}^{2}+\mathscr{\xi}_{3}^{2}+\mathscr{\xi}_{4}^{2}), \\
F_{1,2,3,4}^{A}&\Rightarrow\frac{1}{4}(\mathscr{\xi}_{1}^{2}+\mathscr{\xi}_{2}^{2}+\mathscr{\xi}_{3}^{2}+\mathscr{\xi}_{4}^{2}), \quad \text{with}\\
\mathscr{\xi}_{3}&=
\frac{\lambda_{3}^{2}\int_{0}^{1}W_{3}(r)dW_{3}(r)+\lambda_{4}^{2}\int_{0}^{1}W_{4}(r)dW_{4}(r)}{\sqrt{(\lambda_{3}^{2}+\lambda_{4}^{2})(\frac{1}{2}\lambda_{3}^{2}\int_{0}^{1}W_{3}^{2}(r)dr+\frac{1}{2}\lambda_{4}^{2}\int_{0}^{1}W_{4}^{2}(r)dr)}},\\
\mathscr{\xi}_{4}&=
\frac{\lambda_{3}\lambda_{4}(\int_{0}^{1}W_{3}(r)dW_{4}(r)-\int_{0}^{1}W_{4}(r)dW_{3}(r))}{\sqrt{(\lambda_{3}^{2}+\lambda_{4}^{2})(\frac{1}{2}\lambda_{3}^{2}\int_{0}^{1}W_{3}^{2}(r)dr+\frac{1}{2}\lambda_{4}^{2}\int_{0}^{1}W_{4}^{2}(r)dr)}},
\end{align*}
where $\bc_{1}=(1,1,1,1)'$, $\bc_{2}=(1,-1,1,-1)'$, $\bc_{3}=(0,-1,0,1)'$, $\bc_{4}=(-1,0,1,0)'$, $\lambda_{j}=\sqrt{\bc_{j}'\bTheta(1)\bOmega\bTheta(1)'\bc_{j}/4}$, $W_{j}=\bc_{j}'\bTheta(1)\bOmega^{1/2}\bW/\change{(}2\lambda_{j}\change{)}$, and $\bW(\cdot)$ is a four-dimensional standard Brownian motion.
\end{Theo}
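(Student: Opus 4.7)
The plan is to reduce the seasonally heterogeneous augmented HEGY regression to two cleaner objects: the joint Brownian limit of the four ``seasonally contracted'' partial sums of $\bV_{t}$, and a misspecified AR($k$) fit whose residuals play the role of the whitened errors. Under $H_{0}^{1,2,3,4}$ one has $(1-L^{4})Y_{4t+s}=V_{4t+s}$, so each regressor $Y_{j,t-1}$ in \eqref{aug HEGY} is, up to asymptotically negligible boundary terms, a partial sum of a linear combination $\bc_{j}'\bV_{u}$ of the seasonal innovation vectors. Under Assumption~1.B and either Assumption~2.A or 2.B, a multivariate invariance principle for $\bep_{t}$ propagated through the VARMA filter gives $T^{-1/2}\sum_{t\le\lfloor Tr\rfloor}\bV_{t}\Rightarrow \bTheta(1)\bOmega^{1/2}\bW(r)$. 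Contracting with $\bc_{j}/2$ yields the one-dimensional Brownian motions $W_{j}(r)$ with variances $\lambda_{j}^{2}$; the fact that $W_{1},\ldots,W_{4}$ are independent hinges on $\bOmega$ being diagonal together with the orthogonality of the $\bc_{j}$'s, and needs to be recorded carefully.

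I would next establish the block structure of the normal equations. By Lemma~\ref{le:unaug real1}, $T^{-2}\sum Y_{i,t-1}Y_{j,t-1}\to 0$ in probability for $i\ne j$, so the $4\times 4$ Gram matrix is asymptotically block-diagonal with singleton blocks at $j=1,2$ and a $2\times 2$ block at $j=3,4$. Within each singleton block, $T\hat\pi_{j}$ has numerator $T^{-1}\sum Y_{j,t-1}V_{t}$ and denominator $T^{-2}\sum Y_{j,t-1}^{2}$, converging jointly to $\lambda_{j}^{2}\int_{0}^{1}W_{j}dW_{j}$ and $\lambda_{j}^{2}\int_{0}^{1}W_{j}^{2}dr$ by continuous mapping plus a martingale stochastic-integral argument; in forming $t_{j}$ the factors of $\lambda_{j}$ cancel, producing the pivotal $\mathscr{\xi}_{j}$. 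For the $(3,4)$ block one inverts a genuine $2\times 2$ matrix whose diagonal entries are $\tfrac{1}{2}\lambda_{3}^{2}\int W_{3}^{2}dr+\tfrac{1}{2}\lambda_{4}^{2}\int W_{4}^{2}dr$ (the factor $1/2$ coming from the $(1+L^{2})$ skeleton of the complex block) and whose cross-terms vanish; substituting into the OLS t-ratio and recombining with the numerator stochastic integrals in $W_{3},W_{4}$ reproduces the stated $\mathscr{\xi}_{3},\mathscr{\xi}_{4}$, whose non-cancellation of $\lambda_{3},\lambda_{4}$ is exactly the reason pivotality fails.

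The hardest ingredient, and the one I expect to occupy the bulk of the work, is the augmentation: the non-periodic AR($k$) augmentation with $k\to\infty$ is being fit to the periodically correlated $\{V_{4t+s}\}$. The goal is to show that the OLS coefficients $\hat\phi_{1},\dots,\hat\phi_{k}$ estimate the Yule--Walker coefficients of the misspecified constant-parameter representation $\{\tV_{t}\}$ with autocovariance $\tgamma$ from \eqref{eqn:tgamma}, and that the resulting regression residual is uniformly close to a white-noise-like object that is asymptotically orthogonal to each $Y_{j,t-1}$. This is a Said--Dickey type argument adapted to the heterogeneous setting: the rate conditions $k=o(T^{1/3})$ and $ck>T^{1/\alpha}$ balance the truncation bias of the AR approximation against the estimation variance, while the summability in Assumption~1.B delivers the coefficient-level decay needed to bound $\sup_{i\le k}|\hat\phi_{i}-\tphi_{i}|$. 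The subtlety is that the noise being approximated is not stationary, so one must first prove that the pooled sample autocovariances converge to $\tgamma$ uniformly in the required range of lags, then feed this into the AR($k$) projection. Once this is in hand, the contribution of the augmentation to the numerators of the $\hat\pi_{j}$'s is $o_{p}(T^{-1})$ because the augmentation regressors are stationary while $Y_{j,t-1}$ diverges like $T^{1/2}$.

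Finally I would assemble the three ingredients: (i) the FCLT and independence of $W_{1},\dots,W_{4}$, (ii) the asymptotic block-diagonality of the Gram matrix, and (iii) the Said--Dickey control of the augmentation. A joint continuous-mapping argument then gives the stated limits of $t_{1},t_{2}$ and of the complex-block building blocks $\mathscr{\xi}_{3},\mathscr{\xi}_{4}$; the composite F-statistics follow by summing independent block contributions, yielding the averages of squared $\mathscr{\xi}_{j}$'s in the statement. The main obstacle remains step (iii); secondary care is required to verify the independence claim for $W_{1},\dots,W_{4}$ and to keep the boundary contributions from $Y_{-3}=\cdots=Y_{0}=0$ under control at the right rate.
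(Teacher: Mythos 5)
Your overall architecture --- a Said--Dickey two-stage argument, the misspecified constant-parameter representation $\{\tV_{t}\}$, a multivariate FCLT for the seasonally contracted partial sums, asymptotic block-diagonality of the scaled Gram matrix, and F-statistics recovered as averages of squared t-ratios --- is the same route the paper takes. However, two of your key claims are wrong. First, $W_{1},\ldots,W_{4}$ are \emph{not} independent under seasonal heterogeneity: $Cov(W_{i}(1),W_{j}(1))=\bc_{i}'\bTheta(1)\bOmega\bTheta(1)'\bc_{j}/(4\lambda_{i}\lambda_{j})$, and orthogonality of the $\bc_{i}$ together with diagonality of $\bOmega$ does not make this vanish, because $\bTheta(1)$ mixes the seasons and $\bTheta(1)\bOmega\bTheta(1)'$ is a general positive-definite matrix. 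This dependence is exactly the point of Remarks \ref{pm1} and \ref{heter} and is why the joint statistics are non-pivotal. The stated limits of the F-statistics survive, but only because they follow from the off-diagonal entries of the normalized Gram matrix vanishing; your justification by ``summing independent block contributions'' is false and would lead you to the wrong conclusion about pivotality of $F_{1,2}$.

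Second, your identification of the numerator of $T\hat{\pi}_{j}$ as $T^{-1}\sum_{t}Y_{j,t-1}V_{t}$ converging to $\lambda_{j}^{2}\int W_{j}\,dW_{j}$ is incorrect for the augmented statistic: that sum converges to $\lambda_{j}^{2}\int W_{j}\,dW_{j}+\Gamma^{(j)}$ with a one-sided autocovariance correction built from $\tgamma$ (this is precisely the unaugmented limit of Theorem \ref{unaug real}, and the correction does not cancel in the t-ratio). What the augmentation buys is that, after partialling out the lagged fourth differences, the effective error is $\zeta_{t}=V_{t}-\sum_{i\geq1}\tpsi_{i}V_{t-i}$, the AR($\infty$)-prewhitened error of $\{\tV_{t}\}$ --- a periodically correlated sequence that is white only on seasonal average --- and the correction term is killed by the seasonal-average orthogonality $\tfrac{1}{4}\sum_{s}Cov(V_{4t+s-j},\zeta_{4t+s})=0$ (the paper's Lemma \ref{zeta}). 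One must then convert the resulting limits, which carry factors $\ttheta(1)$, $\ttheta(-1)$, $\ttheta(\pm i)$ and $Var(\tzeta_{t})$, into the stated $\lambda_{j}$, $W_{j}$ form via Osborn-type identities such as $\tfrac{1}{4}\bc_{1}'\bTheta(1)\bOmega\bTheta(1)'\bc_{1}=Var(\tzeta_{t})\ttheta(1)^{2}$ (the paper's Lemma \ref{asy zeta}). Your third paragraph gestures at the prewhitening but never reconciles it with the explicit (and contradictory) numerator claim of your second paragraph, and the Osborn conversion step is missing entirely; without both, the derivation of $\mathscr{\xi}_{3}$ and $\mathscr{\xi}_{4}$ does not close.
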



\begin{Remark}
The asymptotic distributions presented in Theorem \ref{aug real} degenerate to the distributions in \cite{burridge2001properties} and \cite{del2012augmented} when $\{V_{4t+s}\}$ has neither periodic variation nor seasonal heteroscedasticity, and to the distributions in \cite{burridge2001} when $\{V_{4t+s}\}$ is a heteroscedastic, finite-order AR sequence with non-periodic AR coefficients.
\end{Remark}

\begin{Remark} 
Notice that each of $W_{j}$ is a standard Brownian motion. When $\{V_{4t+s}\}$ has no periodic variation, $W_{j}$'s are independent, so are the asymptotic distributions of $t_{1}^{A}$ and $t_{2}^{A}$. On the other hand, when $\{V_{4t+s}\}$ has periodic variation, $W_{j}$'s are in general dependent, so $t_{1}^{A}$ and $t_{2}^{A}$ are in general dependent, even asymptotically. Hence, when testing $H_{0}^{1,2}$, it is problematic to test $H_{0}^{1}$ and $H_{0}^{2}$ separately and calculate the size of the test with the independence of $t_{1}^{A}$ and $t_{2}^{A}$ in mind. Instead, the test of $H_{0}^{1,2}$ should be handled with $F_{1,2}^{A}$. 
\end{Remark}

\begin{Remark}\label{pm1}
Because of the dependence of $t_{1}^{A}$ and $t_{2}^{A}$, the asymptotic distribution of $F_{1,2}^{A}$ under periodic variation is different from its non-periodic counterpart. More generally, the asymptotic distributions of any aforementioned $F$-statistics under periodic variation is different from their non-periodic counterparts. Hence, the augmented HEGY test cannot be directly applied to test any coexistence of roots at $1$, $-1$, $i$, or $-i$ under potential periodic variation. (From another point of view, the asymptotic distribution of any $F$-statistics does not solely depend on the distribution of $\{\tV_{\tau}\}$, the misspecified constant parameter representation of $\{V_{4t+s}\}$; hence the $F$-tests are truly affected by the periodic variation.)
\end{Remark}

\begin{Remark} \label{heter}
When $\{V_{4t+s}\}$ is only seasonally heteroscedastic, as in \cite{burridge2001}, $\bTheta(1)$ does not occur in the asymptotic distributions of the $F$-statistics. On the other hand, when $\{V_{4t+s}\}$ has generic periodic variation, $\bTheta(1)$ impacts first the correlation between Brownian motions $W_{3}$ and $W_{4}$, and second the weights $\lambda_{3}$ and $\lambda_{4}$.
\end{Remark}

\begin{Remark} 
As \cite{burridge2001} point out, the dependence of the asymptotic distributions on weights $\lambda_{3}$ and $\lambda_{4}$ can be expected.
Indeed, $Y_{3,4t+s}=Y_{4,4t+s-1}$ is the partial sum of $\{-V_{4t+s-1},V_{4t+s-3},\dots\}$, while $Y_{3,4t+s+1}=Y_{4,4t+s}$ is the partial sum of $\{-V_{4t+s},V_{4t+s-2},\dots\}$. Since these two partial sums differ in their variances, both $\sum_{s,t} Y_{3,4t+s}$ and $\sum_{s,t} Y_{4,4t+s}$ involve two different weights $\lambda_{3}$ and $\lambda_{4}$.
\end{Remark}

\begin{Remark}\label{root}
Theorem \ref{aug real} presents the asymptotics when $\{Y_{4t+s}\}$ is generated under $H_{0}^{1,2,3,4}$, that is, when $\{Y_{4t+s}\}$ has all roots at 1, $-1$, and $\pm i$. When $\{Y_{4t+s}\}$ is generated under other null hypotheses in Section 2, that is, when $\{Y_{4t+s}\}$ has some but not all roots at 1, $-1$, and $\pm i$, we let $U_{\tau}=(1-L^{4})Y_{\tau}$, $\bU_{t}=(U_{4t-3},U_{4t-2},U_{4t-1},U_{4t})'$, and define $\bm{H}(z)$ such that $\bU_{t}=\bm{H}(\bB)\bep_{t}$. The asymptotic distributions under other null hypotheses has exactly the same form as those in Theorem \ref{aug real}, except that $\bTheta(1)$ is replaced by $\bm{H}(1)$. When there is no periodic variation, these asymptotic distributions degenerate to those in \cite{del2007using} \change{and \cite{smith2009regression}}.
\end{Remark}
\begin{Remark}
The preceding results give the asymptotic behaviors of the testing statistics under the null hypotheses. \change{Under the alternative hypotheses, we conjecture that the power of the augmented HEGY tests tends to one as the sample size goes to infinity. Indeed, if $\{Y_{4t+s}\}$ does not have a certain unit root at 1, $-1$, or $\pm i$, then by the asymptotic orthogonality of regression equation \eqref{aug HEGY}, we can without loss of generality assume that $\{Y_{4t+s}\}$ has none of the unit roots at 1, $-1$, or $\pm i$. If $\{Y_{4t+s}\}$ has none of the unit roots at 1, $-1$, or $\pm i$, then it has a stationary misspecified constant parameter representation. Then, by \cite{berk1974}, for $j=1,2,3$, $\hat{\pi}_{j}^{A}$ converge in probability; by Proposition \ref{prop:HEGY}, the limits of $\hat{\pi}_{j}^{A}$, $j=1,2,3,$ are negative.} See also Theorem 2.2 of \cite{paparoditis2016}.
\end{Remark}
\subsection{Seasonal iid bootstrap algorithm}
To accommodate the non-pivotal asymptotic null distributions of the augmented HEGY test statistics, we propose the application of bootstrap. Specifically, we first pre-whiten the data season by season to obtain uncorrelated noises. Although these noises are uncorrelated, they are not identically distributed due to seasonal heteroscedasticity. Hence, we second resample season by season to generate bootstrapped noise, as in \cite{burridge2004bootstrapping}. Finally, we post-color the bootstrapped noise. The detailed algorithm of this seasonal iid bootstrap augmented HEGY test is given below.
\begin{Algorithm}\label{seasonal iid bootstrap}
Step 1: calculate $t_{1}^{A}$ and $t_{2}^{A}$, the t-statistics corresponding to $\hat{\pi}_{1}^{A}$ and $\hat{\pi}_{2}^{A}$, and the $F$-statistics $F_{\cB}^{A}$, $\cB=\{1,2\},\{3,4\},\{1,3,4\},\{2,3,4\}$, $\{1,2,3,4\},$ from the augmented non-periodic HEGY test regression
$$ (1-L^{4})Y_{\tau}=\sum_{j=1}^{4}\hat{\pi}_{j}^{A}Y_{j,\tau-1}+\sum_{i=1}^{k}\hphi_{i}(1-L^{4})Y_{\tau-i}+\hat{\zeta}_{\tau}^{A}; $$
Step 2: record OLS estimators $\hat{\pi}_{j,s}^{A}$, $\hphi_{i,s}$ and residuals $\hat{\ep}_{4t+s}$ from the season-by-season regression 
$$(1-L^{4})Y_{4t+s}=\sum_{j=1}^{4}\hat{\pi}_{j,s}^{A}Y_{j,4t+s-1}+\sum_{i=1}^{k}\hphi_{i,s}(1-L^{4})Y_{4t+s-i}+\hat{\ep}_{4t+s};$$
Step 3: let $\check{\ep}_{4t+s}=\hat{\ep}_{4t+s}-\frac{1}{T}\sum_{t=\lfloor k/4 \rfloor +1}^{T}\hat{\ep}_{4t+s}$. Store demeaned residuals $\{\check{\ep}_{4t+s}\}$ of the four seasons separately, then independently draw four iid samples from each of their empirical distributions, and then combine these four samples into a vector $\{\ep_{4t+s}^{\star}\}$, with their seasonal orders preserved; \\
\\
Step 4: set all $\hat{\pi}_{j,s}^{A}$ corresponding to the null hypothesis to be zero. For example, set $\hat{\pi}_{3,s}^{A}=\hat{\pi}_{4,s}^{A}=0$ for all $s$ when testing roots at $\pm i$. Let $\{Y_{4t+s}^{\star}\}$ be generated by
$$(1-L^{4})Y^{\star}_{4t+s}=\sum_{j=1}^{4}\hat{\pi}_{j,s}^{A}Y^{\star}_{j,4t+s-1}+\sum_{i=1}^{k}\hphi_{i,s}(1-L^{4})Y^{\star}_{4t+s-i}+\ep^{\star}_{4t+s};$$
Step 5: calculate $t_{1}^{\star}$ and $t_{2}^{\star}$, the t-statistics corresponding to $\hat{\pi}^{\star}_{1}$ and $\hat{\pi}^{\star}_{2}$, and $F$-statistics $F_{\cB}^{\star}$ from the non-periodic regression
$$ (1-L^{4})Y^{\star}_{\tau}=\sum_{j=1}^{4}\hat{\pi}^{\star}_{j}Y_{j,\tau-1}^{\star}+\sum_{i=1}^{k}\hphi_{i}^{\star}(1-L^{4})Y_{\tau-i}^{\star}+\hat{\zeta}_{\tau}^{\star};$$
Step 6: repeat steps 3, 4, and 5 for $B$ times to get $B$ sets of t-statistics $t_{1}^{\star}$, $t_{2}^{\star}$, and $F$-statistics $F_{\cB}^{\star}$. Count separately the numbers of $t_{1}^{\star}$, $t_{2}^{\star}$, and $F_{\cB}^{\star}$, than which $t_{1}^{A}$, $t_{2}^{A}$, and the $F$-statistics $F_{\cB}^{A}$ are more extreme. If these numbers are higher than $B(1-size)$, then we consider $t_{1}^{A}$, $t_{2}^{A}$, and the $F$-statistics $F_{\cB}^{A}$ extreme, and reject the corresponding hypotheses.
\end{Algorithm}

\begin{Remark}
It is also reasonable to keep steps 1, 2, 3, 5, and 6 of the Algorithm \ref{seasonal iid bootstrap}, but change the generation of $\{Y_{4t+s}^{\star}\}$ in step 4 to
\begin{equation}
(1-L^{4})Y^{\star}_{4t+s}=\sum_{i=1}^{k}\hphi_{i,s}(1-L^{4})Y^{\star}_{4t+s-i}+\ep^{\star}_{4t+s}.
\label{algor}
\end{equation}
This new algorithm is in fact theoretically invalid for the tests of any coexistence of roots (see Remark \ref{pm1}, \ref{heter}, and \ref{root}), but it is valid for tests of any single roots at 1 or $-1$, due to the pivotal asymptotic distributions of $t_{1}^{A}$ and $t_{2}^{A}$ in Theorem \ref{aug real}. 
\end{Remark}
\begin{Remark}
\label{Re:seasonal iid bootstrap}
If we let steps 1, 3, 5, and 6 be the same as in Algorithm \ref{seasonal iid bootstrap}, but run non-periodic regression equations with non-periodic coefficients $\hat{\pi}_{j}^{A}$ and $\hphi_{i}$ in steps 2 and 4, then this version of algorithm is identical with \cite{burridge2004bootstrapping}. However, the step 2 of this new version cannot fully pre-whiten the time series, and consequently leaves the regression error $\{\hat{\zeta}_{\tau}^{A}\}$ serially correlated. When $\{\hat{\zeta}_{\tau}^{A}\}$ is bootstrapped by the seasonal iid bootstrap in step 3, this serial correlation structure is ruined. As a result, $(1-L^{4})Y_{4t+s}^{\star}$ differs from $(1-L^{4})Y_{4t+s}$ in its correlation structure, in particular $\bTheta(1)$, and consequently the conditional distributions of the bootstrap $F$-statistics $F_{\cB}^{\star}$ differ from the distributions of the original $F$-statistics $F_{\cB}^{A}$; see Remark \ref{pm1} and \ref{heter}. Similarly, the conditional distributions of the wild bootstrap $F$-statistics in \cite{cavaliere2017wild} differ from the real-world distributions of these $F$-statistics. 
\end{Remark}

\subsection{Consistency of seasonal iid bootstrap}
Now we justify the seasonal iid bootstrap augmented HEGY test (Algorithm \ref{seasonal iid bootstrap}). Since the derivation of the real-world asymptotic distributions in Theorem \ref{aug real} calls on FCLT (see Lemma \ref{le:unaug real1}), the justification of bootstrap approach also requires FCLT in the bootstrap world. From now on, let $P^{\circ}$, $E^{\circ}$, $Var^{\circ}$, $Std^{\circ}$, $Cov^{\circ}$ be the bootstrap probability, expectation, variance, standard deviation, and covariance, respectively, conditional on our data $\{Y_{4t+s}\}$.
\begin{Prop}\label{iid FCLT}
Suppose the assumptions in Theorem \ref{aug real} hold. Let $S_{T}^{\star}(u_{1},u_{2},u_{3},u_{4})$
$$=\frac{1}{\sqrt{4T}}\change{\bigg(}\sum_{\tau=1}^{\lfloor 4Tu_{1} \rfloor}\ep_{\tau}^{\star}/\sigma_{1}^{\star},\sum_{\tau=1}^{\lfloor 4Tu_{2} \rfloor}(-1)^{\tau}\ep_{\tau}^{\star}/\sigma_{2}^{\star},\sum_{\tau=1}^{\lfloor 4Tu_{3} \rfloor}\sqrt{2}\sin\change{\Big(}\frac{\pi \tau}{2}\change{\Big)}\ep_{\tau}^{\star}/\sigma_{3}^{\star},\sum_{\tau=1}^{\lfloor 4Tu_{4} \rfloor}\sqrt{2}\cos\change{\Big(}\frac{\pi \tau}{2}\change{\Big)}\ep_{\tau}^{\star}/\sigma_{4}^{\star}\change{\bigg)}',$$
where
\begin{align*}
\sigma_{1}^{\star}&=Std^{\circ}\change{\bigg[}\frac{1}{\sqrt{4T}}\sum _{\tau=1}^{ 4T }\ep_{\tau}^{\star}\change{\bigg]},&
\sigma_{2}^{\star}&=Std^{\circ}\change{\bigg[}\frac{1}{\sqrt{4T}}\sum _{\tau=1}^{ 4T }(-1)^{\tau}\ep_{\tau}^{\star}\change{\bigg]},\\
\sigma_{3}^{\star}&=Std^{\circ}\change{\bigg[}\frac{1}{\sqrt{4T}}\sum_{\tau=1}^{ 4T }\sqrt{2}\sin\change{\Big(}\frac{\pi \tau}{2}\change{\Big)}\ep_{\tau}^{\star}\change{\bigg]},&
\sigma_{4}^{\star}&=Std^{\circ}\change{\bigg[}\frac{1}{\sqrt{4T}}\sum _{\tau=1}^{ 4T }\sqrt{2}\cos\change{\Big(}\frac{\pi \tau}{2}\change{\Big)}\ep_{\tau}^{\star}\change{\bigg]}.
\end{align*}
Then, no matter which hypothesis is true, $S_{T}^{\star} \Rightarrow \bW^{\star}$ in probability as $T\rightarrow \infty$, where $\bW^{\star}(\cdot)$ is a four-dimensional standard Brownian motion.
\end{Prop}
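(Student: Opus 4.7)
The plan is to establish a bootstrap-world FCLT by treating $S_{T}^{\star}$ as a partial-sum process of conditionally independent, non-identically distributed scalar variables, then applying the Lindeberg--Feller CLT for the finite-dimensional distributions and a Kolmogorov--Chentsov criterion for tightness. All probabilistic statements in the bootstrap world will be read ``in probability'' over the sample. The starting point is to transfer estimation error from the season-by-season augmented HEGY regression in Step 2 of Algorithm \ref{seasonal iid bootstrap} to the bootstrap innovations. Under Assumption 1.B and the rate restrictions $k\to\infty$, $k=o(T^{1/3})$, $ck>T^{1/\alpha}$, the OLS residuals $\hat{\ep}_{4t+s}$ obey a sieve-type approximation
$$\max_{s}\frac{1}{T}\sum_{t}(\hat{\ep}_{4t+s}-\ep_{4t+s})^{2}=o_{p}(1),$$
together with a uniform $(2+\delta)$-th moment bound, as a consequence of Baxter's inequality and the non-stationary-regressor controls already used to prove Theorem \ref{aug real}. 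The demeaning step is negligible because the sample means $T^{-1}\sum_{t}\hat{\ep}_{4t+s}$ vanish in probability. Consequently $\tau_{s}^{2}:=(T-\lfloor k/4\rfloor)^{-1}\sum_{t}\check{\ep}_{4t+s}^{2}\to\bOmega_{s,s}$ in probability, $E^{\circ}|\ep_{4t+s}^{\star}|^{2+\delta}=O_{p}(1)$, and the normalizers $\sigma_{j}^{\star 2}$ converge in probability ($\sigma_{1}^{\star 2},\sigma_{2}^{\star 2}\to\tfrac{1}{4}tr(\bOmega)$ and $\sigma_{3}^{\star 2},\sigma_{4}^{\star 2}$ to the corresponding half-sums), staying bounded away from zero with probability approaching one.

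Next, finite-dimensional convergence. Because seasons are bootstrapped independently and within-season draws are iid, the array $\{\ep_{t}^{\star}\}_{t=1}^{4T}$ is conditionally independent with variance $\tau_{s(t)}^{2}$. For any fixed $u_{1},\dots,u_{m}\in[0,1]$ and coefficients $\alpha_{j,\ell}$, the linear combination $\sum_{j,\ell}\alpha_{j,\ell}A_{j,T}^{\star}(u_{\ell})$ (with $A_{j,T}^{\star}$ denoting the $j$th coordinate of $S_{T}^{\star}$) reduces to $(4T)^{-1/2}\sum_{t}a_{T}(t)\,\ep_{t}^{\star}$ with bounded deterministic weights $a_{T}(t)$. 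The conditional Lindeberg condition follows in probability from the uniform $(2+\delta)$-th moment bound, and Lindeberg--Feller delivers conditional Gaussianity with variance equal to the limit of $(4T)^{-1}\sum_{t}a_{T}(t)^{2}\tau_{s(t)}^{2}$, itself a linear combination of the cross-sums $\sum_{s}c_{j}^{(s)}c_{k}^{(s)}\bOmega_{s,s}$ weighted by $\min(u_{\ell},u_{\ell'})$. Cram\'er--Wold then identifies the finite-dimensional law of the limit as that of the Gaussian process $\bW$ described in the statement.

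Tightness is obtained from conditional independence: $E^{\circ}[(A_{j,T}^{\star}(u+h)-A_{j,T}^{\star}(u))^{2}]\le Ch+O(T^{-1})$ with probability approaching one, and Rosenthal's inequality together with the $(2+\delta)$-th moment bound yields $E^{\circ}|A_{j,T}^{\star}(u+h)-A_{j,T}^{\star}(u)|^{2+\delta}\le C h^{1+\delta/2}$, satisfying the Kolmogorov--Chentsov criterion. Hence $\{S_{T}^{\star}\}$ is tight in $D[0,1]^{4}$ in probability, and combined with the finite-dimensional limits this gives $S_{T}^{\star}\Rightarrow\bW$ in probability. The conclusion is insensitive to which null is imposed because $\tau_{s}^{2}\to\bOmega_{s,s}$ regardless of which roots $\{Y_{t}\}$ actually carries; the imposed hypothesis only affects how $\hat{\pi}_{j,s}$ and $\hat{\phi}_{i,s}$ enter the bootstrap DGP in Step 4 of Algorithm \ref{seasonal iid bootstrap}, not the law of the bootstrap innovations driving $S_{T}^{\star}$.

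The main obstacle is the sieve estimation step. The regressors $Y_{j,4t+s-1}$ in the season-by-season augmented HEGY regression are non-stationary and their partial-sum normalizations scale differently in $T$ for each $j$, while the AR truncation bias must simultaneously be controlled; the rate conditions in Theorem \ref{aug real} are precisely what render these two sources of error negligible. Once the residuals are shown to mimic the true innovations in mean square and in $(2+\delta)$-th moment, the remaining Lindeberg--Feller and Kolmogorov--Chentsov arguments are essentially textbook.
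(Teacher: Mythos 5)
Your architecture is the same as the paper's: first transfer the season-by-season sieve estimation error out of the bootstrap innovations so that $S_T^{\star}$ can be treated as the partial-sum process of conditionally independent draws from (approximately) the true innovations, then invoke an FCLT for independent arrays. The paper does the first step by decomposing $\frac{1}{\sqrt{4T}}\sum_{t\le 4Tu}\ep_t^{\star}$ into resampled true regression errors minus terms $B_T(u)$ and $C_T(u)$ driven by $\hat{\pi}_{j,s}-\pi_{j,s}$ and $\hat{\phi}_{i,s}-\phi_{i,s}$, each controlled uniformly in $u$ (with a separate tightness argument for the resampled integrated regressors); your single bound $\max_s T^{-1}\sum_t(\hat{\ep}_{4t+s}-\ep_{4t+s})^2=o_p(1)$ accomplishes the same reduction more compactly, since conditional independence plus a maximal inequality then gives uniformity in $u$ for free. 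Be aware, though, that this bound is where all of the paper's $B_T$/$C_T$ work lives: under the null you must pair $\hat{\pi}_{j,s}-\pi_{j,s}=O_p(T^{-1})$ with $\sum_t Y_{j,4t+s-1}^{2}=O_p(T^{2})$, and the $k$ accumulated $\hat{\phi}$ errors must be $o_p(1)$ in aggregate; in your writeup this is asserted rather than proved. The choice of endgame machinery (Lindeberg--Feller plus Rosenthal/Kolmogorov--Chentsov versus the paper's appeal to Helland's theorem carried to the bootstrap world) is immaterial.

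The genuine gap is in the identification of the finite-dimensional limit. You correctly reduce the conditional covariance between coordinates $j$ and $k$ of $S_T^{\star}$ to $\min(u,u')\sum_s c_j^{(s)}c_k^{(s)}\bOmega_{ss}/(\sigma_j\sigma_k)$, but these cross-sums do not vanish for $j\neq k$ under seasonal heteroscedasticity: the within-year weights of the first and third coordinates are $(1,1,1,1)$ and $\sqrt{2}(1,0,-1,0)$, so the cross-sum is $\sqrt{2}(\bOmega_{11}-\bOmega_{33})$, nonzero whenever the innovation variances of the first and third seasons differ (e.g.\ the paper's ``heter'' design). The scalar normalizers $\sigma_j^{\star}$ standardize each marginal but do not orthogonalize the coordinates, so Cram\'er--Wold identifies the limit as a Gaussian process with correlated coordinates, not the standard four-dimensional Brownian motion $\bW$; the last sentence of your fidi argument does not follow from the line before it. To be fair, the paper's own proof asserts $Cov^{\circ}(\cdot,\cdot)\stackrel{p}{\rightarrow}0$ for $i\neq j$ without computation and so contains exactly the same unverified step, but your explicit covariance formula makes the inconsistency visible on the page. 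Either the cross-covariances must be shown to vanish (they do not in general), or the limit must be restated as the correlated Gaussian process with cross-covariances proportional to $\sum_s c_j^{(s)}c_k^{(s)}\bOmega_{ss}$ --- which is in fact what is needed for the bootstrap to reproduce the correlated $W_i$'s appearing in the real-world limits of Theorem \ref{aug real}.
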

By the FCLT given by Proposition \ref{iid FCLT} and the proof of Theorem \ref{aug real}, in probability the conditional distributions of $t^{\star}_{j}$, $j=1,2,$ and $F_{\cB}^{\star}$ converge to the limiting distributions of $t_{j}^{A}$, $j=1,2,$ and $F_{\cB}^{A}$, respectively. Indeed, since conditional on $\{Y_{4t+s}\}$, $\{Y_{4t+s}^{\star}\}$ is a finite-order seasonal AR process, the derivation of the conditional distributions of $t^{\star}_{j}$, $j=1,2,$ and $F_{\cB}^{\star}$ turns out easier than that of Theorem \ref{aug real}, and in particular does not involve the fourth moments of $\{Y_{4t+s}^{\star}\}$. Hence the consistency of the bootstrap. 
\begin{Theo}
Suppose the assumptions in Theorem \ref{aug real} hold. Let $P^{\cB}$ be the probability measure corresponding to the null hypothesis $H_{0}^{\cB}$. For example, $P^{1,2}$ corresponds to the null hypothesis $H_{0}^{1,2}$. Then,
$$\sup_{x}|P^{\circ}(t_{j}^{\star}\leq x)-P^{j}(t_{j}^{A}\leq x)|\stackrel{p}\rightarrow 0,\ j=1,2,$$
$$\sup_{x}|P^{\circ}(F_{\cB}^{\star}\leq x)-P^{\cB}(F_{\cB}^{A}\leq x)|\stackrel{p}\rightarrow 0, \ \text{where} \ \cB=\{1,2\},\{3,4\},\{1,3,4\},\{2,3,4\}, \ \text{or} \ \{1,2,3,4\}.$$
\end{Theo}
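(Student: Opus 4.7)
The plan is to reproduce, in the bootstrap world, the argument that yields Theorem \ref{aug real}, with Proposition \ref{iid FCLT} playing the role of the real-world FCLT (Lemma \ref{le:unaug real1}). Concretely, I would show that, conditional on the data, the bootstrap statistics $t_1^{\star},t_2^{\star}$ and $F_A^{\star}$ converge in distribution (in probability over the data) to exactly the limits $\xi_1,\xi_2,\tfrac12(\xi_1^2+\xi_2^2),\ldots$ displayed in Theorem \ref{aug real}. Since those limits are absolutely continuous, a Polya--type argument (applied along subsequences) upgrades pointwise conditional CDF convergence to convergence of the sup-norm in probability, which is exactly the conclusion of the theorem; the fact that the limits coincide with the null distributions under $P^i$ and $P^A$ is already built into Theorem \ref{aug real}.

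The first block of work is to control the season-by-season OLS outputs of Step~2 of Algorithm \ref{seasonal iid bootstrap}. Under the null, the asymptotic orthogonality of $Y_{j,4t+s-1}$, $j=1,\dots,4$, makes the estimated root coefficients $\hat{\pi}_{j,s}$ be $O_p(T^{-1})$ on the unit-root coordinates, and the augmentation coefficients $\hat{\phi}_{i,s}$ approximate the true AR representation of the misspecified constant-parameter process season by season. The key quantitative requirement is a Baxter-type $\ell^1$ control $\sum_{i=1}^{k}|\hat{\phi}_{i,s}-\phi_{i,s}|=o_p(1)$ uniformly in $s$, from which the induced bootstrap Wold coefficients $\hat{\bTheta}^{\star}_{i}$ satisfy $\hat{\bTheta}^{\star}(1)\stackrel{p}\rightarrow\bTheta(1)$. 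This is what guarantees that the bootstrap weights $\lambda_3^{\star},\lambda_4^{\star}$ and the correlation structure of the bootstrap Brownian motions $W_3^{\star},W_4^{\star}$ match those appearing in the limit of Theorem \ref{aug real}.

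The second block is to run the proof of Theorem \ref{aug real} on $\{Y^{\star}_{4t+s}\}$. Since, conditional on the data, this is a finite-order seasonal AR process driven by iid $\{\ep^{\star}_{4t+s}\}$, Proposition \ref{iid FCLT} gives $S_T^{\star}\Rightarrow\bW$ in probability, and continuous mapping delivers joint convergence of the bootstrap partial sums $T^{-1/2}\sum Y^{\star}_{j,t-1}\ep^{\star}_{t}$ and normalized quadratic forms $T^{-2}\sum (Y^{\star}_{j,t-1})^{2}$ to $\int_0^1 W_j^{\star}dW_j^{\star}$ and $\int_0^1 (W_j^{\star})^2\,dr$, respectively. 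The bootstrap analog of Lemma \ref{le:unaug real1} decouples the four regressors via asymptotic orthogonality, so $t_i^{\star}$ and $F_A^{\star}$ are identified as the same continuous functionals of $(W_1^{\star},\ldots,W_4^{\star})$ that appear in Theorem \ref{aug real}, and their conditional laws therefore converge in probability to $P^i$ and $P^A$.

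The main obstacle is the first block: because the truncation order $k=k_T$ grows with $T$, the Baxter-type bound and tail control on $\hat{\phi}_{i,s}$ must be verified in the present heterogeneous, non-parametric setting, leaning on the moment and mixing conditions of Assumption \ref{assump 2a} or \ref{assump 2b}. This is the seasonal analogue of the AR-sieve control already used in the proof of Theorem \ref{aug real}. By contrast, the second block is comparatively light: as the paper remarks just before the theorem statement, fourth-moment control of $\{Y^{\star}_{4t+s}\}$ reduces to sample-fourth-moment control of $\{\check{\ep}_{4t+s}\}$, which is immediate from Assumption \ref{assump 2a}(i) or \ref{assump 2b}(i), so the bootstrap side avoids the delicate cumulant bookkeeping needed in the real world.
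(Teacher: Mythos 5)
Your proposal follows essentially the same route as the paper: the bootstrap FCLT of Proposition \ref{iid FCLT} (whose proof is where the paper absorbs exactly the estimation-error control you flag as the main obstacle, via the terms $B_{T}$ and $C_{T}$ built from $\hat{\pi}_{j,s}-\pi_{j,s}=O_{p}(T^{-1})$ and $\hat{\phi}_{i,s}-\phi_{i,s}=O_{p}(T^{-1/2})$ with $k=o(T^{1/3})$), followed by a conditional rerun of the argument for Theorem \ref{aug real} and the observation that the finite-order bootstrap DGP spares the fourth-moment bookkeeping. The only slight imprecision is that the Step-2 coefficients $\hat{\phi}_{i,s}$ target the seasonally varying AR$(\infty)$ representation of $\{V_{4t+s}\}$ rather than that of the misspecified constant-parameter process, but this does not affect the structure of the argument.
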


\section{Seasonal block bootstrap unaugmented HEGY test}

\subsection{Unaugmented HEGY test}
In the preceding section our analysis focuses on the augmented HEGY test, an extension of the ADF test to the seasonal unit root setting. An important alternative of the ADF test is the Phillips-Perron test (\cite{phillips1988}). While the ADF test assumes an AR structure over the noise and thus becomes parametric, its semi-parametric counterpart, Phillips-Perron test, allows a wide class of weakly dependent noises. The unaugmented HEGY test (\cite{breitung1998}), as the extension of Phillips-Perron test to the seasonal unit root, inherits the semi-parametric nature and does not assume the noise to be AR. Given periodic variation, it will be shown in Theorem \ref{unaug real} that the unaugmented HEGY test estimates seasonal unit roots consistently under a very general VMA$(\infty)$ class of noise (Assumption \ref{assump 1a}), instead of a more restrictive VARMA$(p,q)$ class of noise (Assumption \ref{assump 1b}), which we need for the augmented HEGY test. 

Now we specify the unaugmented HEGY test. Consider the non-periodic regression equation
\begin{equation}
(1-L^{4})Y_{\tau}=\sum_{j=1}^{4}\hat{\pi}_{j}^{U}Y_{j,\tau-1}+\hat{V}_{\tau},
\label{unaug HEGY}
\end{equation}
where $U$ stands for ``Unaugmented". Let $\hat{\pi}_{j}^{U}$ be the OLS estimator in \eqref{unaug HEGY}, $\hat{V}_{\tau}$ be the OLS residual, $t_{j}^{U}$ be the t-statistic corresponding to $\hat{\pi}_{j}^{U}$, and $F_{3,4}^{U}$ be the F-statistic corresponding to $\hat{\pi}_{3}^{U}$ and $\hat{\pi}_{4}^{U}$. Other $F$-statistics $F_{1,2}^{U}$, $F_{1,3,4}^{U}$, $F_{2,3,4}^{U}$, and $F_{1,2,3,4}^{U}$ can be defined analogously. Similar to the Phillips-Perron test, the unaugmented HEGY test can apply both $\hat{\pi}_{j}^{U}$ and $t_{j}^{U}$ when testing roots at 1 or $-1$. As in the augmented HEGY test, we reject $H_{0}^{1}$ if $\hat{\pi}_{1}^{U}$ (or $t_{1}^{U}$) is too small, reject $H_{0}^{2}$ if $\hat{\pi}_{2}^{U}$ (or $t_{2}^{U}$) is too small, and reject the joint hypotheses if the corresponding $F$-statistics are too large. The following results give the asymptotic null distributions of $\hat{\pi}_{j}^{U}$, $t_{j}^{U}$, $j=1,\dots,4$, and the $F$-statistics.
\subsection{Unaugmented HEGY test under model misspecification}
\begin{Theo} \label{unaug real}
Assume that Assumption \ref{assump 1a} and one of Assumption \ref{assump 2a} or Assumption \ref{assump 2b} hold. Then under $H_{0}^{1,2,3,4}$, as $T\rightarrow\infty$,
\begin{align*}
&(4T)\hat{\pi}_{j}^{U}\Rightarrow\frac{\lambda_{j}^{2}\int_{0}^{1} W_{j}(r)d W_{j}(r)+\Gamma^{(j)}}{\lambda_{j}^{2}\int_{0}^{1}W_{j}^{2}(r)dr}, \text{ for } j=1,2,\\    &(4T)\hat{\pi}_{3}^{U}\Rightarrow\frac{\lambda_{3}^{2}\int_{0}^{1} W_{3}(r)d W_{3}(r)+\lambda_{4}^{2}\int_{0}^{1} W_{4}(r)d W_{4}(r)+\Gamma^{(3)}}{\frac{1}{2}(\lambda_{3}^{2}\int_{0}^{1}W_{3}^{2}(r)dr+\lambda_{4}^{2}\int_{0}^{1}W_{4}^{2}(r)dr)},\\    
&(4T)\hat{\pi}_{4}^{U}\Rightarrow\frac{\lambda_{3}\lambda_{4}(\int_{0}^{1} W_{3}(r)d W_{4}(r)-\int_{0}^{1} W_{4}(r)d W_{3}(r))  +\Gamma^{(4)}}{\frac{1}{2}(\lambda_{3}^{2}\int_{0}^{1}W_{3}^{2}(r)dr+\lambda_{4}^{2}\int_{0}^{1}W_{4}^{2}(r)dr)},\\
t_{j}^{U}&\Rightarrow\frac{\lambda_{j}^{2}\int_{0}^{1} W_{j}(r)d W_{j}(r)+\Gamma^{(j)}}{\sqrt{\tgamma(0)\lambda_{j}^{2}\int_{0}^{1}W_{j}^{2}(r)dr}}\equiv\mathscr{D}_{j}, \text{ for } j=1,2,\\
t_{3}^{U}&\Rightarrow\frac{\lambda_{3}^{2}\int_{0}^{1} W_{3}(r)d W_{3}(r)+\lambda_{4}^{2}\int_{0}^{1} W_{4}(r)d W_{4}(r)+\Gamma^{(3)}}{\sqrt{\tgamma(0)\frac{1}{2}(\lambda_{3}^{2}\int_{0}^{1}W_{3}^{2}(r)dr+\lambda_{4}^{2}\int_{0}^{1}W_{4}^{2}(r)dr)}}\equiv\mathscr{D}_{3}\\
t_{4}^{U}&\Rightarrow\frac{\lambda_{3}\lambda_{4}(\int_{0}^{1} W_{3}(r)d W_{4}(r)-\int_{0}^{1} W_{4}(r)d W_{3}(r))  +\Gamma^{(4)}}{\sqrt{\tgamma(0)\frac{1}{2}(\lambda_{3}^{2}\int_{0}^{1}W_{3}^{2}(r)dr+\lambda_{4}^{2}\int_{0}^{1}W_{4}^{2}(r)dr)}}\equiv\mathscr{D}_{4}\\
F_{1,2}^{U}&\Rightarrow\frac{1}{2}(\mathscr{D}_{1}^{2}+\mathscr{D}_{2}^{2}), \quad
F_{3,4}^{U}\Rightarrow\frac{1}{2}(\mathscr{D}_{3}^{2}+\mathscr{D}_{4}^{2}), \\
F_{1,3,4}^{U}&\Rightarrow\frac{1}{3}(\mathscr{D}_{1}^{2}+\mathscr{D}_{3}^{2}+\mathscr{D}_{4}^{2}), \quad
F_{2,3,4}^{U}\Rightarrow\frac{1}{3}(\mathscr{D}_{2}^{2}+\mathscr{D}_{3}^{2}+\mathscr{D}_{4}^{2}), \\
F_{1,2,3,4}^{U}&\Rightarrow\frac{1}{4}(\mathscr{D}_{1}^{2}+\mathscr{D}_{2}^{2}+\mathscr{D}_{3}^{2}+\mathscr{D}_{4}^{2}),
\end{align*}
where $\bc_{1}=(1,1,1,1)'$, $\bc_{2}=(1,-1,1,-1)'$, $\bc_{3}=(0,-1,0,1)'$,  $\bc_{4}=(-1,0,1,0)'$,  $\lambda_{j}=\linebreak \sqrt{\bc_{j}'\bTheta(1)\bOmega\bTheta(1)'\bc_{j}/4}$, $W_{j}=\bc_{j}'\bTheta(1)\bOmega^{1/2}\bW/\change{(}2\lambda_{j}\change{)}$, $\bW(\cdot)$ is the same four-dimensional standard Brownian motion as in Theorem \ref{aug real}, $\tgamma(j)$ are defined in \eqref{eqn:tgamma},
$\Gamma^{(1)}=\sum_{j=1}^{\infty}\tgamma(j)$, $\Gamma^{(2)}=\sum_{j=1}^{\infty}(-1)^{j}\tgamma(j)$, $\Gamma^{(3)}=\sum_{j=1}^{\infty}\cos(\pi j/2)\tgamma(j)$, and $\Gamma^{(4)}=-\sum_{j=1}^{\infty}\sin(\pi j/2)\tgamma(j)$. 
\end{Theo}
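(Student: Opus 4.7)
The plan is to adapt the classical Phillips--Perron argument to the seasonal multivariate setting. Under $H_{0}^{1,2,3,4}$ the model reduces to $(1-L^{4})Y_{t}=V_{t}$ with zero initial conditions, so each regressor admits an explicit partial-sum representation $Y_{j,t-1}=\sum_{s\leq t-1}a_{j,s}V_{s}$ derived from the factorizations in \eqref{PHEGY2}. I would first establish a joint vector FCLT for the seasonally weighted partial sums of $\bV_{t}$. Using the Beveridge--Nelson decomposition $\bTheta(B)=\bTheta(1)+(1-B)\tilde{\bTheta}(B)$ together with Assumption~1.A and the weak-dependence delivered by Assumption~2.A (martingale-difference CLT) or Assumption~2.B (mixing CLT), show that $(4T)^{-1/2}\sum_{t=1}^{\lfloor 4Tu\rfloor}\bep_{t}\Rightarrow\bOmega^{1/2}\bW(u)$. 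Projecting on the season-selector vectors $\bc_{1}=(1,1,1,1)',\ldots,\bc_{4}$ --- exactly the eigenvectors at frequencies $0,\pi,\pi/2$ --- yields joint convergence of the four partial-sum processes weighted by $1,\,(-1)^{t},\,\sqrt{2}\sin(\pi t/2),\,\sqrt{2}\cos(\pi t/2)$ to $\lambda_{i}W_{i}(\cdot)$ with the variances and correlations stated in the theorem.

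By the continuous mapping theorem, $T^{-2}\sum Y_{j,t-1}^{2}\Rightarrow\lambda_{j}^{2}\int_{0}^{1}W_{j}^{2}$ for $j=1,2$. For the complex block the sine--cosine oscillation between $Y_{3,t-1}$ and $Y_{4,t-1}$ produces a $2\times 2$ limiting matrix whose two diagonal entries both equal $\tfrac{1}{2}(\lambda_{3}^{2}\int W_{3}^{2}+\lambda_{4}^{2}\int W_{4}^{2})$ and whose off-diagonal entries vanish after sin/cos averaging --- this, together with the asymptotic block-orthogonality across $\{1\},\{2\},\{3,4\}$, is the content of Lemma~\ref{le:unaug real1}. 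The numerators $T^{-1}\sum Y_{j,t-1}V_{t}$ are handled via the Phillips--Solo device: combining the Beveridge--Nelson representation of $V_{t}$ with continuous-mapping convergence of $Y_{j,t-1}/\sqrt{T}$ produces an Ito term $\lambda_{j}^{2}\int W_{j}dW_{j}$ (respectively the skew-symmetric $\lambda_{3}\lambda_{4}(\int W_{3}dW_{4}-\int W_{4}dW_{3})$ in the $\pi/4$ coordinate) plus a deterministic long-run covariance bias, and a direct season-by-season computation using the seasonally-averaged autocovariance $\tgamma(h)$ defined in \eqref{eqn:tgamma} identifies this bias with $\Gamma^{(j)}$.

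Dividing numerators by denominators yields the stated limits for $(4T)\hat{\pi}_{j}$ --- scalarly for $j=1,2$ and via inversion of the asymptotically diagonal $2\times 2$ block for $(\hat{\pi}_{3},\hat{\pi}_{4})$. For t-statistics, the OLS variance estimator satisfies $\hat{\sigma}^{2}\stackrel{p}{\to}\tgamma(0)$, since under the null the residuals differ from $V_{t}$ by terms that are $O_{p}(T^{-1/2})$ in sup-norm and $(4T)^{-1}\sum V_{t}^{2}\to\tgamma(0)$ by averaging across seasons; hence $t_{j}\Rightarrow\mathscr{D}_{j}$. The F-statistics then follow immediately from block-orthogonality: each joint F equals the average of the corresponding squared $\mathscr{D}_{j}$'s in the limit. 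The main technical obstacle is the correct accounting of the bias terms $\Gamma^{(j)}$ under seasonal heterogeneity: the bias is driven by the \emph{seasonally averaged} autocovariance $\tgamma$ rather than by any stationary autocovariance of $\{V_{t}\}$ itself, which forces a season-by-season decomposition of $\sum Y_{j,t-1}V_{t}$ before reassembly. A secondary obstacle is the non-orthogonality of $Y_{3,t-1}$ and $Y_{4,t-1}$ (indeed $Y_{3,t}=Y_{4,t-1}$); the $2\times 2$ inversion step needed to separate $\hat{\pi}_{3}$ from $\hat{\pi}_{4}$ is resolved precisely by the sin/cos orthogonality surfaced in the FCLT.
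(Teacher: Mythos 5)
Your proposal is correct and follows essentially the same route as the paper: a joint FCLT and weak convergence to stochastic integrals for the vector process (the paper's Lemma~\ref{le:unaug real1}, cited from Chan--Wei and De~Jong--Davidson), projection onto the frequency-selector vectors $\bc_{1},\dots,\bc_{4}$ with a season-by-season decomposition that produces the bias terms $\Gamma^{(j)}$ and the asymptotic orthogonality of the regressors (Lemma~\ref{le:unaug real2}), then $\hat{\sigma}^{2}\stackrel{p}{\to}\mathrm{tr}(\bGamma_{0})/4=\tgamma(0)$ and the F-statistics as averages of squared t-statistics. The only quibble is attribution: the block-orthogonality and the vanishing of the $(3,4)$ off-diagonal term are the content of Lemma~\ref{le:unaug real2}(a), not of Lemma~\ref{le:unaug real1}.
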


\begin{Remark}
The results in Theorem \ref{unaug real} degenerate to the asymptotics in \cite{burridge2001} and \cite{burridge2001properties} when $\{V_{4t+s}\}$ is serially uncorrelated, to the asymptotics in \cite{breitung1998} when $\{V_{4t+s}\}$ has no periodic variation, and to the asymptotics in \cite{del2015semi} when $\{V_{4t+s}\}$ is seasonally heteroscedastic. 
\end{Remark}
\begin{Remark}
When $\{V_{4t+s}\}$ has no periodic variation, as in \cite{breitung1998}, the asymptotic distributions of $(\hat{\pi}_{1}^{U}, t_{1}^{U})$ and $(\hat{\pi}_{2}^{U}, t_{2}^{U})$ are independent. On the other hand, when $\{V_{4t+s}\}$ has periodic variation, $(\hat{\pi}_{1}^{U}, t_{1}^{U})$ and $(\hat{\pi}_{2}^{U}, t_{2}^{U})$ are dependent, as what we have seen for the augmented HEGY test in Remark \ref{pm1}. Hence, when testing $H_{0}^{1,2}$, it is problematic to test $H_{0}^{1}$ and $H_{0}^{2}$ separately and calculate the size of the test with the independence of $(\hat{\pi}_{1}^{U}, t_{1}^{U})$ and $(\hat{\pi}_{2}^{U}, t_{2}^{U})$ in mind. Instead, the test of $H_{0}^{1,2}$ should be handled with $F_{1,2}^{U}$. 
\end{Remark}
\begin{Remark}
The parameters $\lambda_{j}$ have the same definition as in Theorem \ref{aug HEGY}. Since $\lambda_{1}^{2}=\sum_{j=-\infty}^{\infty}\tgamma(j)$, and $\lambda_{2}^{2}=\sum_{j=-\infty}^{\infty}(-1)^{j}\tgamma(j)$, the asymptotic distributions of $\hat{\pi}_{j}^{U}$ and $t_{j}^{U}$, $j=1,2$, only depends on the autocorrelation function of $\{\tV_{\tau}\}$, a misspecified constant parameter representation of $\{V_{4t+s}\}$ whose autocovariance function is given by \eqref{eqn:tgamma}. Since $\{\tV_{\tau}\}$ can be considered as a non-periodic version of $\{V_{4t+s}\}$, we can conclude that the asymptotic behaviors of the tests for $H_{0}^{1}$ and $H_{0}^{2}$ are not affected by the periodic variation in $\{V_{4t+s}\}$. On the other side, the asymptotic distributions of the $F$-statistics do not solely depend on the distribution of $\{\tV_{\tau}\}$. Hence, the tests for all hypotheses other than $H_{0}^{1}$ and $H_{0}^{2}$ are affected by the periodic variation.
\end{Remark}

\begin{Remark}
To remove the nuisance parameters in the asymptotic distributions, we notice that the asymptotic behaviors of $\hat{\pi}_{j}^{U}$ and $t_{j}^{U}$, $j=1,2,$ have identical forms as in \cite{phillips1988}. In light of their approach, we can construct pivotal versions of $\hat{\pi}_{j}^{U}$ and $t_{j}^{U}$, $j=1,2,$ that converge in distribution to standard Dickey-Fuller distributions in \cite{dickey1979distribution}; see also \cite{del2015semi}. More specifically, for $j=1,2$, by Theorem \ref{unaug real} we have
\begin{align*}
(4T)\hat{\pi}_{j}^{U}-\frac{\frac{1}{2}(\lambda_{j}^{2}-\tgamma(0))}{(4T)^{-2}\sum_{\tau=1}^{4T}Y_{j,\tau-1}^{2}}&\Rightarrow\frac{\int_{0}^{1}W_{j}(r)dW_{j}(r)}{\int_{0}^{1}W_{j}^{2}(r)dr},\\
\frac{\sqrt{\tgamma(0)}}{\lambda_{j}}t_{j}^{U}-\frac{\frac{1}{2}(\lambda_{j}^{2}-\tgamma(0))}{\lambda_{j}^{2}\sqrt{(4T)^{-2}\sum_{\tau=1}^{4T}Y_{j,\tau-1}^{2}}}&\Rightarrow\frac{\int_{0}^{1}W_{j}(r)dW_{j}(r)}{\sqrt{\int_{0}^{1}W_{j}^{2}(r)dr}}.
\end{align*}
where $\lambda_{j}^{2}$ and $\tgamma(0)$ can by substituted by their consistent estimators.
\end{Remark}

\begin{Remark}
However, there is no easy way to construct pivotal statistics for $\hat{\pi}_{3}^{U}$, $t_{3}^{U}$, $\hat{\pi}_{4}^{U}$, $t_{4}^{U}$, and $F$-statistics such as $F_{3,4}^{U}$. The difficulties are two-fold. Firstly the denominators of the asymptotic distributions of these statistics contain weighted sums with unknown weights $\lambda_{3}^{2}$ and $\lambda_{4}^{2}$; secondly $W_{3}$ and $W_{4}$ are in general correlated standard Brownian motions as in Theorem \ref{aug real}. 
\end{Remark}
\begin{Remark}
The result in Theorem \ref{unaug real} can be generalized. Suppose $\{Y_{4t+s}\}$ is not generated by $H_{0}^{1,2,3,4}$, and only has some of the seasonal unit roots. Let $U_{\tau}=(1-L^{4})Y_{\tau}$, and $\bU_{t}=(U_{4t-3},U_{4t-2},U_{4t-1},U_{4t})'$. Define $\bH(z)$ such that $\bU_{t}=\bH(\bB)\bep_{t}$. The asymptotic distributions of $\hat{\pi}_{j}^{U}$, $t_{j}^{U}$, $j=1,2,$ and the $F$-statistics have the same forms as those in Theorem \ref{unaug real}, with $\bTheta(1)$ substituted by $\bH(1)$, and $\tgamma$ based on $\{U_{\tau}\}$. 
\end{Remark}
\begin{Remark}
Under one of the alternative hypotheses, we conjecture that for $j=1,2,3$, the OLS estimators $\hat{\pi}_{j}^{U}$ in \eqref{unaug HEGY} converge in probability to $\pi_{j}$, the prediction coefficient of the misspecified constant parameter representation of $\{Y_{4t+s}\}$. Since under the alternative hypotheses we can without loss of generality assume $\{Y_{4t+s}\}$ is stationary, we have $\pi_{j}<0$. Hence, as a result of this conjecture, the power of the unaugmented HEGY tests tends to one as the sample size goes to infinity.
\end{Remark}
\subsection{Seasonal block bootstrap algorithm}
Since many of the asymptotic distributions delivered in Theorem \ref{unaug real} are non-standard and non-pivotal and cannot be easily pivoted, we propose the application of bootstrap. Since the regression error $\{V_{4t+s}\}$ of \eqref{unaug HEGY} has periodic structure, we may apply the seasonal block bootstrap of \cite{dudek2014generalized}. The algorithm of the seasonal block bootstrap unaugmented HEGY test is illustrated below. 
\begin{Algorithm}\label{seasonal block boot}
Step 1: get the OLS estimators $\hat{\pi}_{1}^{U}$, $\hat{\pi}_{2}^{U}$, t-statistics $t_{1}^{U}$, $t_{2}^{U}$, and the $F$-statistics $F_{\cB}^{U}$, $\cB=\{1,2\},\{3,4\},\{1,3,4\},\{2,3,4\},$ and $\{1,2,3,4\},$ from the unaugmented HEGY regression  
$$ (1-L^{4})Y_{\tau}=\sum_{j=1}^{4}\hat{\pi}_{j}^{U}Y_{j,\tau-1}+\hat{\zeta}_{\tau}^{U}, \quad \tau=1,\dots,4T; $$
Step 2: record residual $\hat{V}_{4t+s}$ from regression
$$(1-L^{4})Y_{4t+s}=\sum_{j=1}^{4}\hat{\pi}_{j,s}^{U}Y_{j,4t+s-1}+\hat{V}_{4t+s};$$
Step 3: let $\check{V}_{4t+s}=\hat{V}_{4t+s}-\frac{1}{T}\sum_{t=1}^{T}\hat{V}_{4t+s}$, choose a integer block size $b$, and let $l=\lfloor 4T/b \rfloor$. For $t=1, b+1,\dots, (l-1)b+1$, let
$$(V_{t}^{*},\dots,V_{t+b-1}^{*})=(\check{V}_{I_{t}},\dots,\check{V}_{I_{t}+b-1}),$$
where $\{I_{t}\}$ is a sequence of iid uniform random variables taking values in $\{t-4R_{1,n},\dots,t-4,t,t+4,\dots,t+4R_{2,n}\}$
with $R_{1,n}=\lfloor (t-1)/4 \rfloor$ and $R_{2,n}=\lfloor (n-b-t+1)/4 \rfloor;$\\
\\
Step 4: set the $\hat{\pi}_{j,s}^{U}$ corresponding to the null hypothesis to be zero. For example, set $\hat{\pi}_{3,s}^{U}=\hat{\pi}_{4,s}^{U}=0$ for all $s$ when testing roots at $\pm i$. Generate $\{Y_{4t+s}^{*}\}$ by
$$(1-L^{4})Y_{4t+s}^{*}=\sum_{j=1}^{4}\hat{\pi}_{j,s}^{U}Y^{*}_{j,4t+s-1}+V_{4t+s}^{*};$$
Step 5: get OLS estimates $\hat{\pi}_{1}^{*}$, $\hat{\pi}_{2}^{*}$, t-statistics $t_{1}^{*}$, $t_{2}^{*}$, and $F$-statistics $F_{\cB}^{*}$ from regression
$$(1-L^{4})Y^{*}_{\tau}=\sum_{j=1}^{4}\hat{\pi}^{*}_{j}Y_{j,\tau-1}^{*}+\hat{\zeta}_{\tau}^{*}, \quad \tau=1,\dots,4T; $$
Step 6: repeat steps 3, 4, and 5 for $B$ times to get $B$ sets of statistics $\hat{\pi}_{1}^{*}$, $\hat{\pi}_{2}^{*}$, $t_{1}^{*}$, $t_{2}^{*}$, and $F_{\cB}^{*}$. Count separately the numbers of $\hat{\pi}_{1}^{*}$, $\hat{\pi}_{2}^{*}$, $t_{1}^{*}$, $t_{2}^{*}$, and $F_{\cB}^{*}$ than which $\hat{\pi}_{1}^{U}$, $\hat{\pi}_{2}^{U}$, $t_{1}^{U}$, $t_{2}^{U}$, and $F_{\cB}^{U}$ are more extreme. If these numbers are higher than $B(1-size)$, then consider $\hat{\pi}_{1}^{U}$, $\hat{\pi}_{2}^{U}$, $t_{1}^{U}$, $t_{2}^{U}$ and $F_{\cB}^{U}$ extreme, and reject the corresponding hypotheses. 
\end{Algorithm}

\subsection{Consistency of seasonal block bootstrap}
\begin{Prop}\label{SBB FCLT}
Let $S_{T}^{*}(u_{1},u_{2},u_{3},u_{4})$
$$=\frac{1}{\sqrt{4T}}\change{\bigg(}\sum_{\tau=1}^{\lfloor 4Tu_{1} \rfloor}V_{\tau}^{*}/\sigma_{1}^{*},\sum_{\tau=1}^{\lfloor 4Tu_{2} \rfloor}(-1)^{\tau}V_{\tau}^{*}/\sigma_{2}^{*},\sum_{\tau=1}^{\lfloor 4Tu_{3} \rfloor}\sqrt{2}\sin\change{\Big(}\frac{\pi \tau}{2}\change{\Big)}V_{\tau}^{*}/\sigma_{3}^{*},\sum_{\tau=1}^{\lfloor 4Tu_{4} \rfloor}\sqrt{2}\cos\change{\Big(}\frac{\pi \tau}{2}\change{\Big)}V_{\tau}^{*}/\sigma_{4}^{*}\change{\bigg)}', $$
where
\begin{align*}
\sigma_{1}^{*}&=Std^{\circ}\change{\bigg[}\frac{1}{\sqrt{4T}}\sum _{\tau=1}^{ 4T }V_{\tau}^{*}\change{\bigg]},&
\sigma_{2}^{*}&=Std^{\circ}\change{\bigg[}\frac{1}{\sqrt{4T}}\sum _{\tau=1}^{ 4T }(-1)^{\tau}V_{\tau}^{*}\change{\bigg]},\\
\sigma_{3}^{*}&=Std^{\circ}\change{\bigg[}\frac{1}{\sqrt{4T}}\sum _{\tau=1}^{ 4T }\sqrt{2}\sin\change{\Big(}\frac{\pi \tau}{2}\change{\Big)}V_{\tau}^{*}\change{\bigg]},&
\sigma_{4}^{*}&=Std^{\circ}\change{\bigg[}\frac{1}{\sqrt{4T}}\sum_{\tau=1}^{ 4T }\sqrt{2}\cos\change{\Big(}\frac{\pi \tau}{2}\change{\Big)}V_{\tau}^{*}\change{\bigg]}.
\end{align*}
If $b\rightarrow \infty$, $T\rightarrow \infty$, $b/\sqrt{T}\rightarrow 0$, then no matter which hypothesis is true, $S_{T}^{*} \Rightarrow \bW^{*}$ in probability, where $\bW^{*}(\cdot)$ is a four-dimensional standard Brownian motion.
\end{Prop}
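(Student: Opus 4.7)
The plan is to establish the FCLT by a Cramér–Wold reduction to a one-dimensional conditional CLT for iid block sums, together with tightness of each coordinate in $D[0,1]$. The key structural fact is that, conditional on the data $\{Y_{4t+s}\}$, the $l=\lfloor 4T/b\rfloor$ length-$b$ blocks $(V_{(k-1)b+1}^{*},\ldots,V_{kb}^{*})$ are iid, since the starting indices $I_{(k-1)b+1}$ are drawn iid from a seasonality-preserving index set (displacements are multiples of $4$). Before invoking the CLT, I would first replace the residuals $\hat{V}_{t}$ inside the resampling by the true $V_{t}$: the discrepancy $\hat{V}_{t}-V_{t}$ is a linear combination of $Y_{j,4t+s-1}$ with coefficients $\hat{\pi}_{j,s}-\pi_{j,s}$, and this is negligible in the block-sum scale by the super-consistency of $\hat{\pi}_{j,s}$ under the null (Theorem \ref{unaug real}) or its consistency under the alternative (cf.\ Remark \ref{power}).

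Next I would verify that each $\sigma_{i}^{*2}$ converges in probability to a finite positive constant. Writing $Var^{\circ}[(4T)^{-1/2}\sum_{t=1}^{4T}c_{i}(t)V_{t}^{*}]$, with $c_{1}(t)=1$, $c_{2}(t)=(-1)^{t}$, $c_{3}(t)=\sqrt{2}\sin(\pi t/2)$, $c_{4}(t)=\sqrt{2}\cos(\pi t/2)$, as a sum over $l$ independent block variances and then averaging over the uniform index $I_{k}$ inside each block, reduces the calculation to a seasonal long-run variance of $\{\check{V}_{4t+s}\}$ weighted by $c_{i}$. The phase-preserving sampling is essential here: the multiples-of-$4$ constraint ensures $c_{i}(I_{k}+t)$ separates in $I_{k}$ and $t$ for $i\in\{2,3,4\}$, so the weighted long-run variance is well defined and, under Assumption 1.A plus summability of autocovariances of $\{\bV_{t}\}$, equals $\lambda_{i}^{2}$ in the limit. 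Off-diagonal terms $Cov^{\circ}[\sum_{t\in\text{blk}}c_{i}(t)V_{t}^{*},\sum_{t\in\text{blk}}c_{j}(t)V_{t}^{*}]/b$ must be shown to vanish in probability for $i\neq j$, which follows from asymptotic orthogonality of the periodic weights $c_{i},c_{j}$ over long stretches together with the summability of autocovariances.

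For the finite-dimensional convergence I would fix $(u_{1},\ldots,u_{4})\in[0,1]^{4}$ and arbitrary $(\alpha_{1},\ldots,\alpha_{4})\in\mathbb{R}^{4}$, write the linear combination $Z_{T}^{*}=\sum_{i}\alpha_{i}S_{T}^{*,(i)}(u_{i})$ as an iid block sum $\sum_{k=1}^{l}X_{k,T}$ plus an end-of-series remainder that is $o_{p}(1)$ since $b/\sqrt{T}\to 0$, and apply a Lindeberg–Feller CLT conditional on the data. The conditional variance of $Z_{T}^{*}$ converges to $\sum_{i}\alpha_{i}^{2}u_{i}$ by the preceding step, matching $Var(\sum_{i}\alpha_{i}W_{(i)}(u_{i}))$; the Lindeberg condition follows from a uniform $L^{2+\delta}$ bound on $X_{k,T}$ available from the $4+\delta$-moment assumption on $\bep_{t}$ and the rate $b/\sqrt{T}\to 0$. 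Tightness of each coordinate in $D[0,1]$ then follows from a standard second-moment bound $E^{\circ}|S_{T}^{*,(i)}(u)-S_{T}^{*,(i)}(v)|^{2}\le C|u-v|+o_{p}(1)$, again exploiting block independence and the variance estimates; combining marginal tightness with the asymptotic independence of the four components yields the joint FCLT $S_{T}^{*}\Rightarrow\bW$ in probability.

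The main obstacle I expect is the variance/covariance ledger: showing that the normalizations $\sigma_{i}^{*2}$ converge in probability to their correct limits, and that the off-diagonal cross-covariances between different periodic weighting schemes are $o_{p}(1)$. This requires propagating the asymptotic orthogonality of the deterministic weights $c_{i},c_{j}$ through the randomly sampled but phase-aligned block index while simultaneously controlling fluctuations introduced by the empirical autocovariances of $\{\check{V}_{4t+s}\}$ in the seasonally heterogeneous regime; only here does the full strength of Assumption 1.A and the $4+\delta$-moment condition get used. Once this accounting is secured, the conditional Lindeberg CLT and the tightness step are comparatively routine, and the FCLT follows irrespective of which of the null or alternative hypotheses generated $\{Y_{4t+s}\}$, since the above argument never used the value of the true $\pi_{j,s}$'s beyond consistency of their estimators.
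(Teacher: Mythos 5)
Your proposal is correct and follows essentially the same route as the paper: conditional on the data the length-$b$ blocks are iid, the residuals $\hat{V}_t$ are replaced by the true $V_t$ using super-consistency of $\hat{\pi}_{j,s}$ under the null (consistency under the alternative), the end-of-block remainder is absorbed via $b/\sqrt{T}\to 0$, and the crux is exactly the variance/cross-covariance ledger you identify, which the paper settles through its Lemma on summable fourth-order covariances. The only cosmetic difference is that you invoke Cram\'er--Wold plus Lindeberg--Feller where the paper appeals to Helland's martingale FCLT generalized to the bootstrap world (and Wooldridge--White under the mixing assumption); for the tightness step you should use the product-of-squared-increments criterion (Billingsley) or the packaged FCLT rather than a bare second-moment bound, but this is a standard repair, not a gap.
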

By the FCLT given by Proposition \ref{SBB FCLT}, the proof of Theorem \ref{unaug real}, and the convergence of the bootstrap standard deviation $\sigma_{j}^{*}$ in \cite{dudek2014generalized}, we have that the conditional distributions of $t^{*}_{j}$, $\hat{\pi}^{*}_{j}$, $j=1,2,$ and $F_{\cB}^{*}$ in probability converges to the limiting distributions of $\hat{\pi}_{j}^{U}$, $t_{j}^{U}$, $j=1,2,$ and $F_{\cB}^{U}$, respectively. Hence the consistency of the bootstrap.
\begin{Theo}
Suppose the assumptions in Theorem \ref{unaug real} hold. Let $P^{\cB}$ be the probability measure corresponding to the null hypothesis $H_{0}^{\cB}$. For example, $P^{1,2}$ corresponds to the null hypothesis $H_{0}^{1,2}$. If $b\rightarrow \infty$, $T\rightarrow \infty$, $b/\sqrt{T}\rightarrow 0$, then
$$\sup_{x}|P^{\circ}(\pi_{j}^{*}\leq x)-P^{j}(\hat{\pi}^{U}_{j}\leq x)|\stackrel{p}\rightarrow 0,\ j=1,2,$$
$$\sup_{x}|P^{\circ}(t_{j}^{*}\leq x)-P^{j}(t_{j}^{U}\leq x)|\stackrel{p}\rightarrow 0,\ j=1,2,$$
$$\sup_{x}|P^{\circ}(F_{\cB}^{*}\leq x)-P^{\cB}(F_{\cB}^{U}\leq x)|\stackrel{p}\rightarrow 0, \ \text{where} \ \cB=\{1,2\},\{3,4\},\{1,3,4\},\{2,3,4\},\ \text{or} \ \{1,2,3,4\}.$$
\end{Theo}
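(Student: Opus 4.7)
The plan is to mirror, within the bootstrap world, the proof of Theorem \ref{unaug real}. In the original argument each of $(4T)\hat{\pi}_{i}$, $t_{i}$, and $F_{A}$ is expressed as a continuous functional of four normalized partial sums $(4T)^{-1/2}\sum_{t\le 4Tu} w_{i}(t) V_{t}$ with weights $w_{1}(t)=1$, $w_{2}(t)=(-1)^{t}$, $w_{3}(t)=\sqrt{2}\sin(\pi t/2)$, $w_{4}(t)=\sqrt{2}\cos(\pi t/2)$, together with the nonrandom nuisance constants $\lambda_{i}^{2}$, $\tgamma(0)$, and $\Gamma^{(i)}$; the FCLT for the four partial sums plus the continuous mapping theorem then deliver the limits $\mathscr{D}_{i}$. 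My task is to reproduce this chain with $V^{*}$, $Y^{*}$, and $P^{\circ}$ replacing $V$, $Y$, and the unconditional probability.

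The FCLT half is already handed to us by Proposition \ref{SBB FCLT}: under $b/\sqrt{T}\to 0$ the four weighted partial sums of $V_{t}^{*}$ converge, in $P^{\circ}$-probability, to a four-dimensional standard Brownian motion. I would then identify the bootstrap analogues of the three nuisance constants and verify their convergence: $\sigma_{i}^{*2}\stackrel{p}{\rightarrow}\lambda_{i}^{2}$ is the consistency of the seasonal block-bootstrap long-run variance estimator of Dudek, Le\'skow, Paparoditis, and Politis (2014); $(4T)^{-1}\sum_{t}V_{t}^{*2}\stackrel{p}{\rightarrow}\tgamma(0)$ follows by a direct calculation on the circular-block construction; and the one-sided-sum analogue $\Gamma^{(i),*}$, which appears after Abel summation inside $\sum_{t} Y^{*}_{i,t-1} V_{t}^{*}$, converges to $\Gamma^{(i)}$ by a similar calculation that exploits Assumption 1.A together with the imposed-null data-generating rule of step 4. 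Feeding all these ingredients into the continuous mapping theorem reproduces, conditionally on $\{Y_{4t+s}\}$, the limiting distributions $\mathscr{D}_{i}$ and their $F$-statistic combinations.

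A preliminary step to the above is to check that using the fitted residuals $\hat{V}_{4t+s}$ rather than the unobservable $V_{4t+s}$ as the input to Algorithm \ref{seasonal block boot} is harmless. Under $H_{0}^{A}$ the relevant $\pi_{j,s}$ vanish by Proposition \ref{prop:HEGY}; the season-by-season OLS estimators $\hat{\pi}_{j,s}$ that get reset to zero in step 4 are consistent for $0$, and the remaining estimators are $\sqrt{T}$-consistent for their targets, so $\max_{s} T^{-1}\sum_{t}(\hat{V}_{4t+s}-V_{4t+s})^{2}=o_{p}(1)$. Combined with the $L^{2}$-isometry of the block-bootstrap resampling operator, this makes the $\hat{V}$-based and the $V$-based block-bootstrap partial sums differ by $o_{p}(1)$ uniformly on $[0,1]$, so Proposition \ref{SBB FCLT} applies as if the residuals were the true noise.

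Finally, to convert conditional weak convergence into uniform convergence of the distribution functions, I would invoke P\'olya's theorem: the limiting random variables in Theorem \ref{unaug real} have continuous distribution functions because their denominators are almost surely strictly positive Brownian functionals, so pointwise convergence of the conditional CDFs to these continuous CDFs upgrades to the sup-norm convergence stated in the theorem. The main obstacle will be the middle step, in particular showing that the bootstrap one-sided sums play the role of $\Gamma^{(i)}$ in the cross-product $\sum_{t}Y^{*}_{i,t-1} V_{t}^{*}$ under only the VMA$(\infty)$ assumption and the mild block-size restriction $b/\sqrt{T}\to 0$; the joint control across the four seasonal frequencies is where the technical work will concentrate.
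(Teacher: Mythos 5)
Your proposal is correct and follows essentially the same route as the paper, which proves this theorem by combining the bootstrap FCLT of Proposition \ref{SBB FCLT}, a rerun of the argument for Theorem \ref{unaug real} in the bootstrap world, and the convergence of the bootstrap variances from Dudek et al.\ (2014); the residual-versus-true-noise issue you flag is handled inside the paper's proof of Proposition \ref{SBB FCLT} via the $(\hat{\pi}_{j,s}-\pi_{j,s})$ decomposition. Your explicit invocation of P\'olya's theorem for the sup-norm statement is a detail the paper leaves implicit but is the standard closing step.
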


\section{Simulation}
\subsection{Data generating process}\label{section:dgp}

We focus on the hypothesis testing for root at 1, \change{i.e.,} $H_{0}^{1}$ against $H_{1}^{1}$, roots at $\pm i$, \change{i.e.,} $H_{0}^{3,4}$ against $H_{1}^{3,4}$, and \change{roots at $1$, $-1$, and $\pm i$, i.e., $H_{0}^{1,2,3,4}$ against $H_{1}^{1,2,3,4}$}. In the first two hypothesis tests, we equip one sequence with all nuisance unit roots, and the other with none of the nuisance unit roots. The detailed data generating processes are listed in Table \ref{DGP}. \change{In an unreported simulation we have simulated the hypothesis test for root at $-1$, i.e., $H_{0}^{2}$ against $H_{1}^{2}$, but found the simulation result to a large extent similar to the result of root at $1$.}  

To produce power curves, we let parameter \change{$\rho=0.00, 0.02, 0.04, 0.06, 0.08,$ and $0.10$}. Notice that $\rho$ is set to be seasonally homogeneous for the sake of simplicity. Further, we generate six types of innovations $\{V_{4t+s}\}$ according to Table \ref{noise}, where $\ep_{t}\sim iid \: N(0,1)$. The values of $\phi_{s}$ \change{in Table \ref{noise}} are assigned so that the misspecified constant parameter representation (see Section \ref{sec:settings}) of the ``ar\textsubscript{per}'' sequence has almost the same AR structure as the ``ar\textsubscript{pos}'' sequence. \change{Notice that in the ``ma\textsubscript{per}'' setting in Table \ref{noise}, $V_{4t+s}=(1-\theta_{s}L)^{-1}(1-\theta_{s}\theta_{s-1}L^{2})\ep_{4t+s}$; the values of $\theta_{s}$ are assigned such that a potential seasonal unit root filter $(1+L^{2})$ is partially cancelled out by the MA filter $(1-\theta_{s}\theta_{s-1}L^{2})$ above.} 
\begin{table}[H]
\centering
\caption{Data generation processes}
\begin{tabular}{cccc}

\hline
\multicolumn{2}{c}{\multirow{2}{*}{\begin{tabular}[c]{@{}c@{}}Data Generating\\ Processes\end{tabular}}} & \multicolumn{2}{c}{Nuisance Root}                                  \\ \cline{3-4} 
\multicolumn{2}{c}{}                                                                                     & No                          & Yes                                   \\ \hline
\multirow{3}{*}{Root}                                       & 1                                            & $(1-(1-\rho) L)Y_{\tau}=V_{\tau}$     & $(1+L)(1+L^{2})(1-(1-\rho) L)Y_{\tau}=V_{\tau}$ \\ \cline{2-4} 
                                                             & $\pm i$                                      & $(1+(1-\rho) L^{2})Y_{\tau}=V_{\tau}$ & $(1+L)(1-L)(1+(1-\rho) L^{2})Y_{\tau}=V_{\tau}$ \\ \cline{2-4} 
                                                            &  \change{$1, -1, \pm i$}                                      &  \change{$(1-(1-\rho) L^{4})Y_{\tau}=V_{\tau}$} &  \\ \hline
\end{tabular}
\label{DGP}
\end{table}
\begin{table}[H]
\centering
\caption{Types of noises}
\label{noise}
\begin{tabular}{ccc}
\hline
\multirow{6}{*}{\begin{tabular}[c]{@{}c@{}}Noise\\ Type\end{tabular}} & iid                     & $V_{\tau}=\ep_{\tau}$                                                                                                                                       \\ \cline{2-3} 
                                                                      & heter                   & \begin{tabular}[c]{@{}c@{}}$V_{4t+s}=\sigma_{s}\ep_{4t+s}$,\\ $\sigma_{1}=10$, $\sigma_{2}=\sigma_{3}=\sigma_{4}=1$\end{tabular}                      \\ \cline{2-3} 
                                                                                                                            & ar\textsubscript{pos}                      & $V_{\tau}=\ep_{\tau}+0.5V_{\tau-1}$                                                            \\ \cline{2-3} 
                                                                                                                                        & ma\textsubscript{neg} & $V_{\tau}=\ep_{\tau}-0.5\ep_{\tau-1}$                                                             \\ \cline{2-3} 
                                                                      & ar\textsubscript{per}                  & \begin{tabular}[c]{@{}c@{}}$V_{4t+s}=\ep_{4t+s}+\phi_{s}V_{4t+s-1}$, \\ $\phi_{1}=0.2$, $\phi_{2}=0.45$, $\phi_{3}=0.65$, $\phi_{4}=0.8$\end{tabular} \\ \cline{2-3}
                                                                      
                                                                                                                                           &  \change{ma\textsubscript{per}}                      & \change{\begin{tabular}[c]{@{}c@{}}$V_{4t+s}=\ep_{4t+s}+\theta_{s}\ep_{4t+s-1}$, \\ $\theta_{1}=0.5$, $\theta_{2}=-1.8$, $\theta_{3}=0.5$, $\theta_{4}=-1.8$\end{tabular}} \\ \hline
\end{tabular}
\end{table}

\subsection{Testing procedure}
Here we give additional implemental details for the algorithms of the seasonal iid bootstrap augmented HEGY test described in Algorithm \ref{seasonal iid bootstrap}, the seasonal block bootstrap unaugmented HEGY test described in Algorithm \ref{seasonal block boot}, the non-seasonal bootstrap augmented HEGY test by \cite{burridge2004bootstrapping}, and the Wald test by \cite{ghysels1996periodic}.
\subsubsection{Seasonal iid bootstrap augmented HEGY test}
To improve the empirical performance of seasonal iid bootstrap algorithm (Algorithm \ref{seasonal iid bootstrap}), we select stepwise, truncate the coefficient estimators, and apply \eqref{algor} when testing roots at 1 or $-1$. Firstly, a stepwise selection procedure is applied to the regression in step 2 of Algorithm \ref{seasonal iid bootstrap}. To begin with, we choose a maximal order of lag $k_{max}$. $k_{max}$ may be chosen by AIC, BIC, or modified information criterion by \cite{ng2001} (for further discussions, see \cite{del2016performance}). In our simulation we fix $k_{max}=4$ for simplicity. Afterward, we apply a backward stepwise selection with Variance Inflating Factor (VIF) criterion to solve the multicollinearity between the regressors. In this selection, we locate the regressor with the largest VIF, remove this regressor from the regression if its VIF is larger than 10, and rerun the regression. Then we implement another stepwise selection on lags $(1-L^{4})Y_{4t+s-i}$, $i=1,2,\dots,k$, by iteratively removing lags of which the absolute values of the t-statistics are smaller than 1.65; see also \cite{burridge2004bootstrapping}. Then the estimated coefficients of the deleted regressors are set to be zero, while the estimated coefficients of the remaining regressors are recorded and used in step 2 and 4. The backward stepwise selection of the lags based on their t-statistics is also applied to step 1 and 5.

Secondly, notice that in step 2, the true parameters $\pi_{j,s}$, $j=1,2,3,$ are smaller or equal to zero under both null and alternative hypotheses. However, the OLS estimators $\hat{\pi}_{j,s}^{A}$, $j=1,2,3,$ are often positive, especially when $\pi_{j,s}=0$. This positivity not only renders the estimation of $\pi_{j,s}$ inaccurate, 
but also makes the equation in step 4 of Algorithm \ref{seasonal iid bootstrap} non-causal, and the bootstrap sequence $\{Y_{4t+s}^{\star}\}$ explosive. The solution of this problem is to truncate the OLS estimator. Let $\check{\pi}_{j,s}^{A}=\min(0,\hat{\pi}_{j,s}^{A})$, $j=1,2,3$. Immediately we get $|\check{\pi}_{j,s}^{A}-\pi_{j,s}^{A}|\leq|\hat{\pi}_{j,s}^{A}-\pi_{j,s}|$. After we substitute $\check{\pi}_{j,s}^{A}$ for $\hat{\pi}_{j,s}^{A}$ in step 4, the empirical performance of seasonal iid bootstrap improves significantly. 

\change{Thirdly, by Assumption \ref{assump 1b}, intuitively the true parameters $\phi_{i,s}$ in step 2 should make the roots of polynomial $\phi_{s}(z)\coloneqq1-\phi_{1,s}z-\phi_{2,s}z^{2}-\dots-\phi_{k,s}z^{k}$ staying outside the unit circle. On the other hand, the roots of polynomial $\hphi_{s}(z)\coloneqq 1-\hphi_{1,s}z-\hphi_{2,s}z^{2}-\dots-\hphi_{k,s}z^{k}$ are sometimes on or inside the unit circle. To correct $\hphi_{s}(z)$, suppose that $\hphi_{s}(z)$ can be factored out as $\hphi_{s}(z)=(1-r_{1,s}z)(1-r_{2,s}z)\cdots(1-r_{k,s}z)$, where $r_{j,s}$, $j = 1,2,\dots, k,$ are complex numbers. Let $\tr_{j,s}=(r_{j,s}/|r_{j,s}|)\cdot\min(1/1.1, |r_{j,s}|)$, $j=1,2,\dots,k$. Then  $1-\thphi_{1,s}z-\thphi_{2,s}z^{2}-\dots-\thphi_{k,s}z^{k}\coloneqq\thphi_{s}(z)\coloneqq(1-\tr_{1,s}z)(1-\tr_{2,s}z)\cdots(1-\tr_{k,s}z)$ has all roots outside the unit circle. After we substitute $\thphi_{i,s}$ for $\hphi_{i,s}$ in step 4, the simulation result improves.}

Fourthly, we apply the original step 4 of Algorithm \ref{seasonal iid bootstrap} when testing roots at $\pm i$, but apply the alternative step \eqref{algor} to the test of the root at 1 or $-1$. (When apply the alternative step \eqref{algor}, we select the lags and truncate the coefficients similarly.) Unpublished simulation result shows an advantage of \eqref{algor} when testing root at 1 or $-1$. This advantage occurs especially when all nuisance roots occur, or equivalently when all of the true $\pi_{j,s}$'s are zero, since in this case the inclusion of $Y^{\star}_{j,4t+s-1}$ in the original step 4 becomes redundant. 

\subsubsection{Seasonal block bootstrap unaugmented HEGY test}
To improve the empirical performance of the seasonal block bootstrap algorithm (Algorithm \ref{seasonal block boot}), we truncate the coefficient estimators, taper the blocks, and optimize the block size. Firstly, as in the seasonal iid bootstrap algorithm, we let $\check{\pi}_{j,s}^{U}=\min(0,\hat{\pi}_{j,s}^{U})$, $j=1,2,3$, and substitute $\check{\pi}_{j,s}^{U}$ for $\hat{\pi}_{j,s}^{U}$ in step 4. 

Secondly, it is known that the bootstrapped data around the edges of the bootstrap blocks are not good imitations of the original data. To reduce this ``edge effect", we apply tapered seasonal block bootstrap proposed by \cite{dudek2016generalized}, which puts less weight on the bootstrapped data around the edges. In our simulation the weight function is set identical to the function suggested by \cite{dudek2016generalized}.

Thirdly, both test statistics $\hat{\pi}_{j}^{U}$ and $t_{j}^{U}$ can be employed to run the seasonal block bootstrap unaugmented HEGY test. So do various block sizes. In an unreported simulation we check the impact of test statistics and block sizes on the empirical size and power. It turns out that, first, the choice of statistics and block sizes does not affect the empirical size and the power very much; second, the distortion of the empirical size becomes the worst when testing root at $-1$ with the presence of nuisance roots and $ma_{pos}$ noise; third, the bootstrap test based on the t-statistics and block size four gives the best result in the aforementioned worst scenario. Hence, we base our test on the t-statistics and let the block size be four in the succeeding simulations. For a thorough discussion on an optimal block size, see \cite{paparoditis2003}.

\subsubsection{Non-seasonal bootstrap augmented HEGY test}
\change{For a brief description of the non-seasonal bootstrap augmented HEGY test by \cite{burridge2004bootstrapping}, see Remark \ref{Re:seasonal iid bootstrap}. To improve its empirical performance, as in the seasonal iid bootstrap algorithm, we apply a backward stepwise selection of the lags based on their t-statistics and correct $\hphi_{j}$, $j=1,2,\dots, k$ with a polynomial factorization.  
}

\subsubsection{Wald test}
We find it necessary to pass the data through a $(1-L^{4})$ filter before sending it to the Wald test by \cite{ghysels1996periodic}; otherwise the nuisance roots in our data will result in a non-stationary noise sequence in the regression of the Wald test and a ill-behaved test statistic; see also \cite{boswijk1996unit,osborn2002asymptotic}. When selecting the order of lag of the regression, we refer to the AIC and set the largest possible order of lag to “be four. 

\subsection{Results}
Now we present in Figure \ref{fig:root1_F}, \ref{fig:root1_T}, \ref{fig:root34_F}, \ref{fig:root34_T}, and \ref{fig:root1234_F} the main simulation result. The simulation includes five types of data generating processes (see Table \ref{DGP}) and six types of noises (see Table \ref{noise}). In simulation we let sample size be $T=30$ or $T=120$, number of bootstrap replicates $B=500$, number of iterations \change{$N=2400$}, and nominal size $\alpha=0.05$. 
\subsubsection{Root at 1}
Figure \ref{fig:root1_F} gives the simulation result when our data has a potential root at 1 but no other nuisance roots at $-1$ or $\pm i$. \change{In this scenario, the seasonal block bootstrap test and the non-seasonal bootstrap test suffer from a slight size distortion in (f) and (l), where the seasonal iid test enjoys more accurate size. Except that,} the power curves of the three bootstrap tests almost overlap; they start at the correct size and tend to one when $\rho$ departs from zero, \change{get higher when the sample size grows from $T=30$ to $T=120$}, and are far above the curves of the Wald test in all of (a)-(l) but (b) and (h), where the Wald test suffers a upward size distortion. 

Figure \ref{fig:root1_T} gives the result when data has a potential root at 1 and all nuisance roots at $-1$ and $\pm i$. Notice that the size of the seasonal block bootstrap unaugmented HEGY test is distorted in (b) and (h) in Figure \ref{fig:root1_T}; this may result from the errors in estimating $\pi_{j,s}$ and the need to recover $\{Y_{4t+s}\}$ with the estimated $\pi_{j,s}$. Moreover, the size of both the seasonal block bootstrap test and \change{the non-seasonal bootstrap test} is distorted in (d) and (j); this is in part due to the fact that the unit root filter $(1-L)$ is partially cancelled by the MA filter $(1-0.5L)$. See also \cite{perron1996}. 

In contrast, the seasonal iid bootstrap augmented HEGY test has less size distortion when data has nuisance roots. This is partially because the seasonal iid bootstrap test recovers $\{Y_{4t+s}\}$ using the true values of $\pi_{j,s}$, namely zero, instead of using the estimated values. Moreover, compared to the Wald test, the seasonal iid bootstrap augmented HEGY test has much higher power. Therefore, when testing the root at 1, the seasonal iid bootstrap augmented HEGY test is recommended.

\subsubsection{Root at $\pm i$}
Figure \ref{fig:root34_F} and Figure \ref{fig:root34_T} present the results of the simulation when data has potential roots at $\pm i$ but has no or all nuisance roots at 1 and $-1$, respectively. \change{In both Figure \ref{fig:root34_F} and Figure \ref{fig:root34_T}, it turns out that all the three bootstrap tests have size distortions in (f) and (l), where, as discussed in Section \ref{section:dgp}, the seasonal unit root filter $(1+L^{2})$ is partially cancelled out by the MA filter $(1-\theta_{s}\theta_{s-1}L^{2})$. Other than that, the bootstrap tests overall achieve the correct size. Since in Figure \ref{fig:root34_F} and Figure \ref{fig:root34_T} the seasonal block bootstrap test overall has higher power than other bootstrap tests and than the Wald test, we recommend it for testing roots at $\pm i$.}

\change{
\subsubsection{Root at $1$, $-1$, and $\pm i$}
Figure \ref{fig:root1234_F} illustrates the result when we test the concurrence of roots at $1$, $-1$, and $\pm i$. Notice that all the three bootstrap tests have distorted sizes in (f) and (l), where the seasonal unit root filter $(1+L^{2})$ is partially cancelled out by the MA filter $(1-\theta_{s}\theta_{s-1}L^{2})$. In addition, when the sample size $T=30$, all the three bootstrap tests suffer size distortion in (c), (d), and (e). However, when the sample size rises to $T=120$, the seasonal iid bootstrap augmented HEGY test restores the correct size; see (i), (j), and (k). Since overall in Figure \ref{fig:root1234_F} the seasonal iid bootstrap test prevails over the Wald test, we recommend it for testing joint roots at $1$, $-1$, and $\pm i$. 
}

\begin{figure}[H]
\centering

\subfloat[][noise=iid, $T$=30]{
\includegraphics[width=0.25\textwidth]{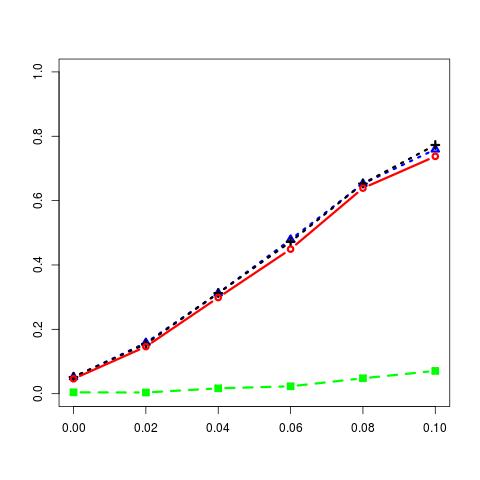}}
\qquad
\subfloat[][noise=heter, $T$=30]{
\includegraphics[width=0.25\textwidth]{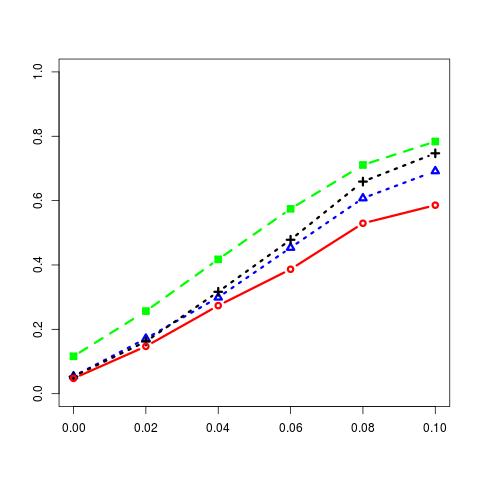}}
\qquad
\subfloat[][noise=$ar_{pos}$, $T$=30]{
\includegraphics[width=0.25\textwidth]{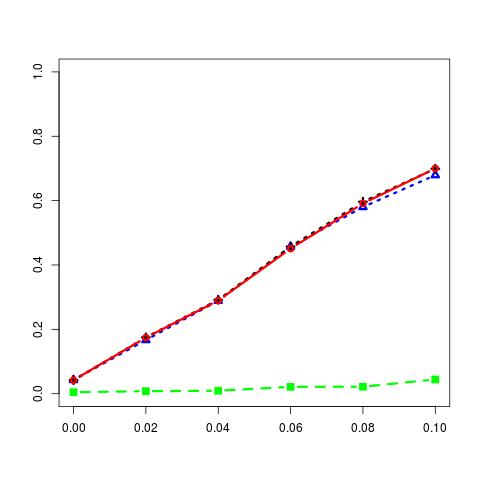}}

\subfloat[][noise=$ma_{neg}$, $T$=30]{
\includegraphics[width=0.25\textwidth]{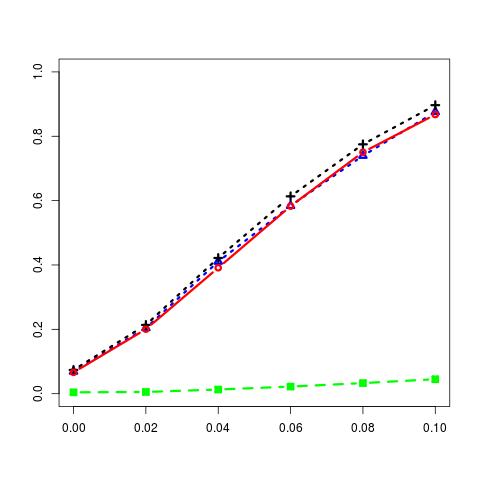}}
\qquad
\subfloat[][noise=$ar_{per}$, $T$=30]{
\includegraphics[width=0.25\textwidth]{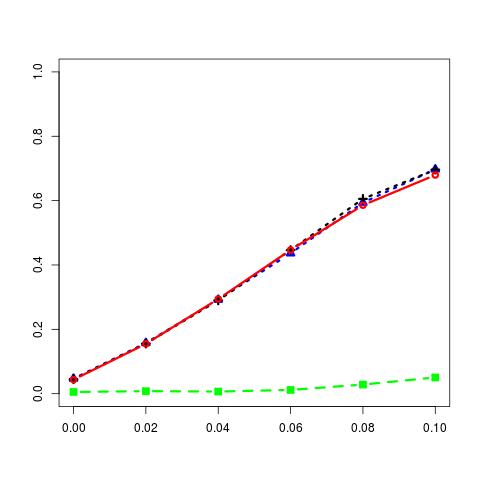}}
\qquad
\subfloat[][noise=$ma_{per}$, $T$=30]{
\includegraphics[width=0.25\textwidth]{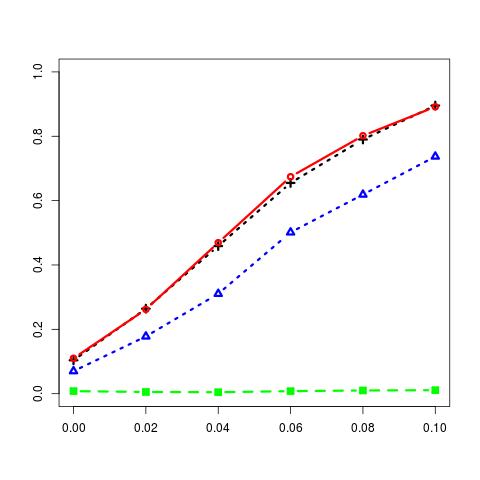}}
\qquad

\subfloat[][noise=iid, $T$=120]{
\includegraphics[width=0.25\textwidth]{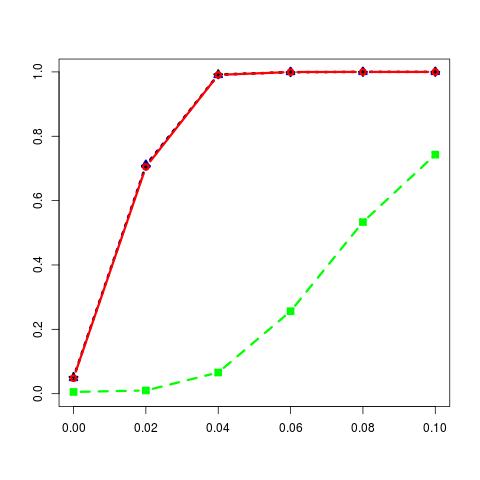}}
\qquad
\subfloat[][noise=heter, $T$=120]{
\includegraphics[width=0.25\textwidth]{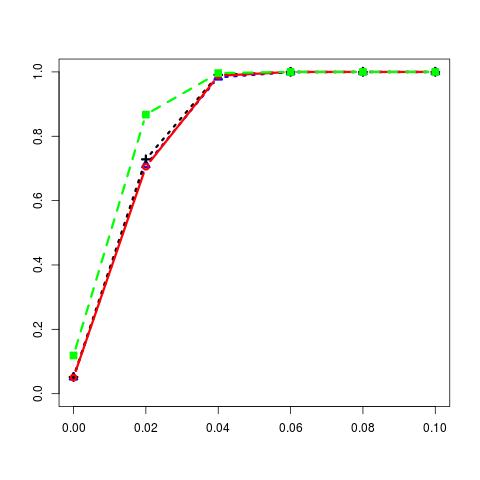}}
\qquad
\subfloat[][noise=$ar_{pos}$, $T$=120]{
\includegraphics[width=0.25\textwidth]{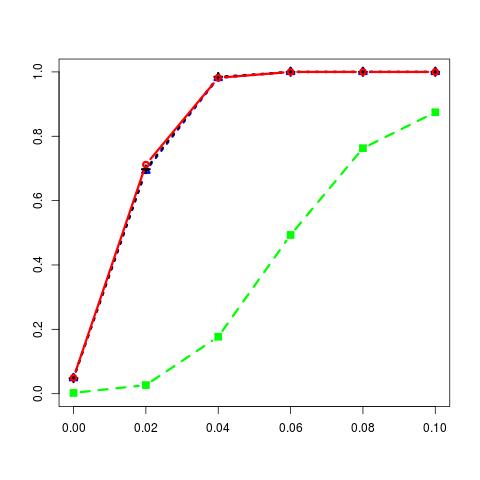}}

\subfloat[][noise=$ma_{neg}$, $T$=120]{
\includegraphics[width=0.25\textwidth]{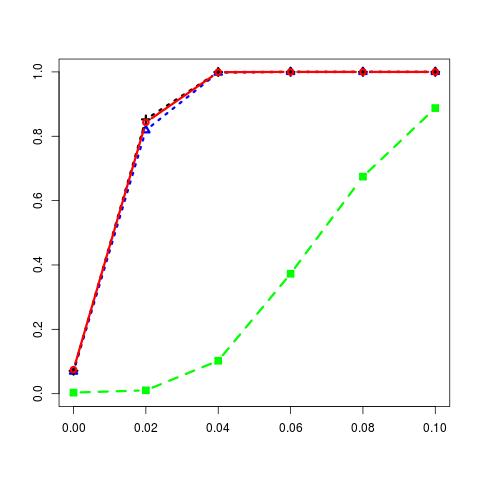}}
\qquad
\subfloat[][noise=$ar_{per}$, $T$=120]{
\includegraphics[width=0.25\textwidth]{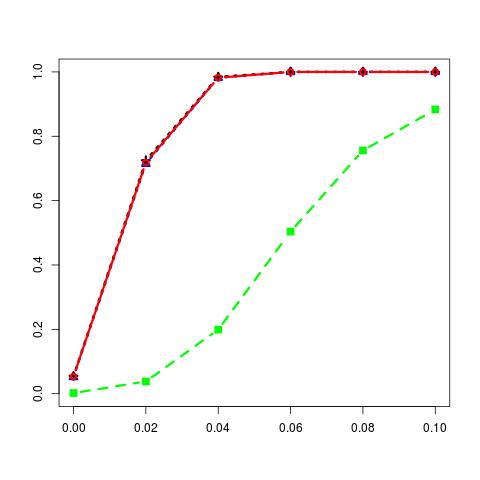}}
\qquad
\subfloat[][noise=$ma_{per}$, $T$=120]{
\includegraphics[width=0.25\textwidth]{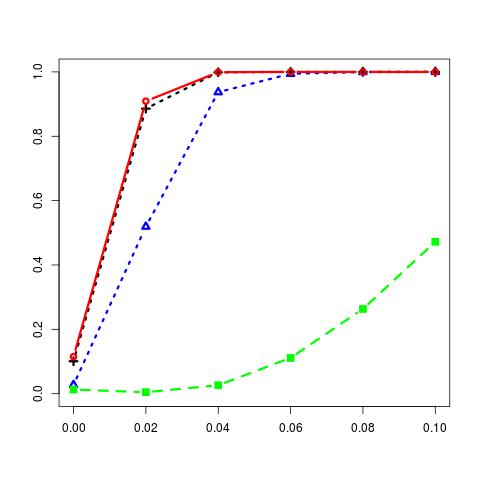}}
\qquad
\caption{Power as a function of $\rho$ when testing roots at 1 with no nuisance root. Blue dotted curve with triangle knot is for the seasonal iid bootstrap test. Red solid curve with circle knot is for the seasonal block bootstrap test. Black dotted curve with ``+" knot is for the non-seasonal bootstrap test. Green dashed curve with square knot is for the Wald test. In (a)-(f) sample size $T=30$. In (g)-(l) sample size $T=120$.  }
\label{fig:root1_F}
\end{figure}

\begin{figure}[H]
\centering
\subfloat[][noise=iid, $T$=30]{
\includegraphics[width=0.25\textwidth]{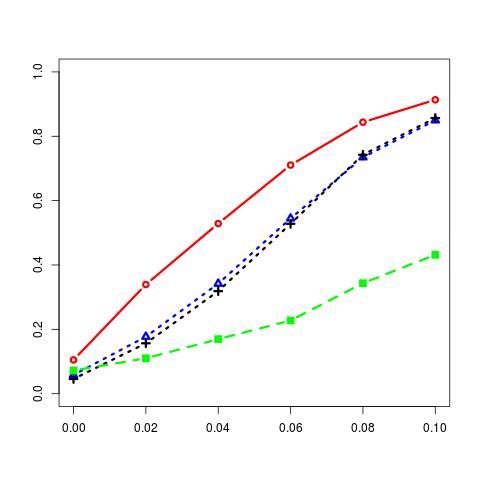}}
\qquad
\subfloat[][noise=heter, $T$=30]{
\includegraphics[width=0.25\textwidth]{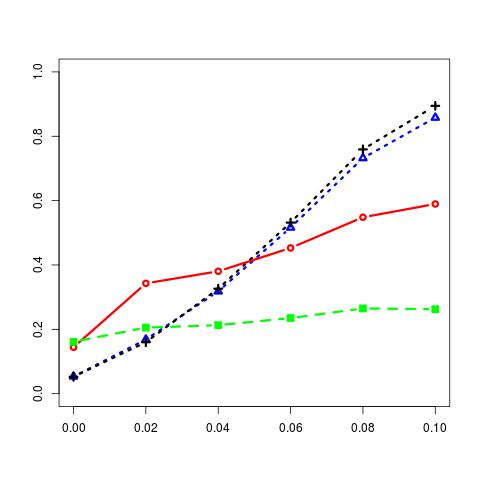}}
\qquad
\subfloat[][noise=$ar_{pos}$, $T$=30]{
\includegraphics[width=0.25\textwidth]{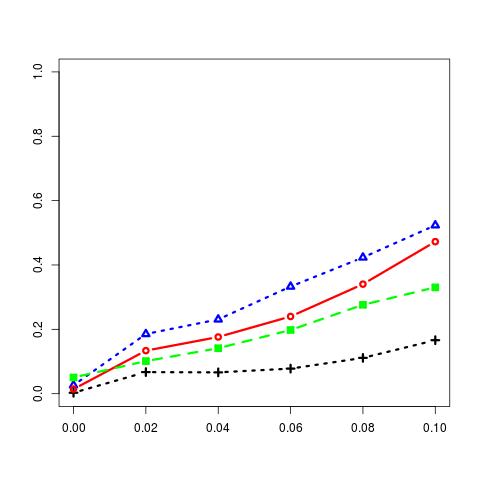}}

\subfloat[][noise=$ma_{neg}$, $T$=30]{
\includegraphics[width=0.25\textwidth]{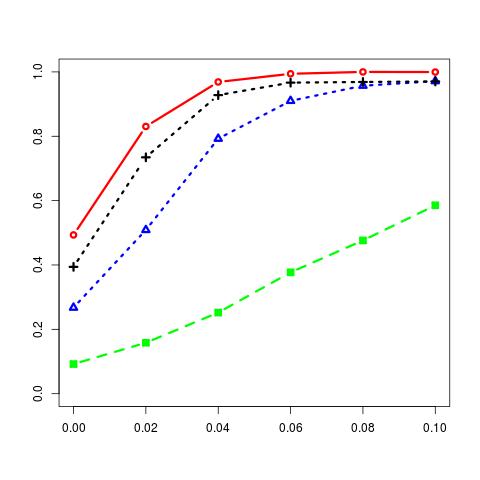}}
\qquad
\subfloat[][noise=$ar_{per}$, $T$=30]{
\includegraphics[width=0.25\textwidth]{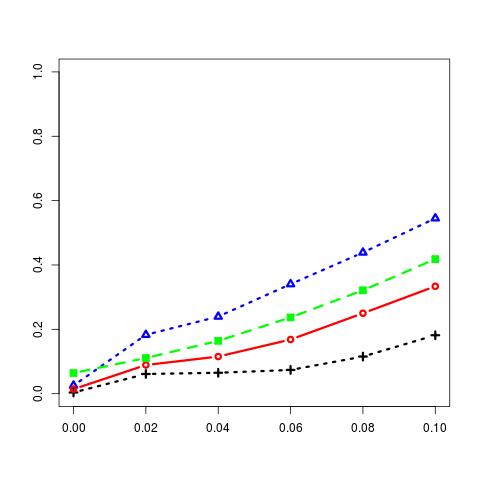}}
\qquad
\subfloat[][noise=$ma_{per}$, $T$=30]{
\includegraphics[width=0.25\textwidth]{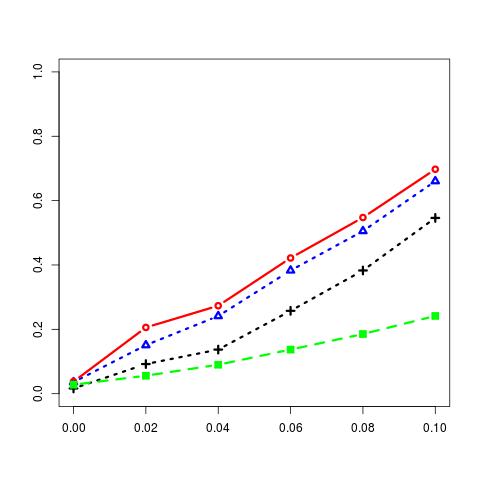}}
\qquad

\subfloat[][noise=iid, $T$=120]{
\includegraphics[width=0.25\textwidth]{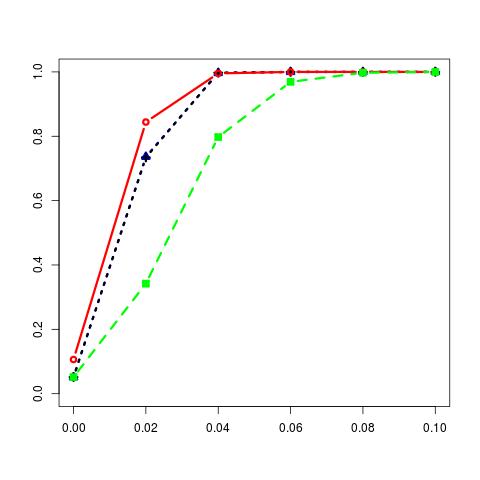}}
\qquad
\subfloat[][noise=heter, $T$=120]{
\includegraphics[width=0.25\textwidth]{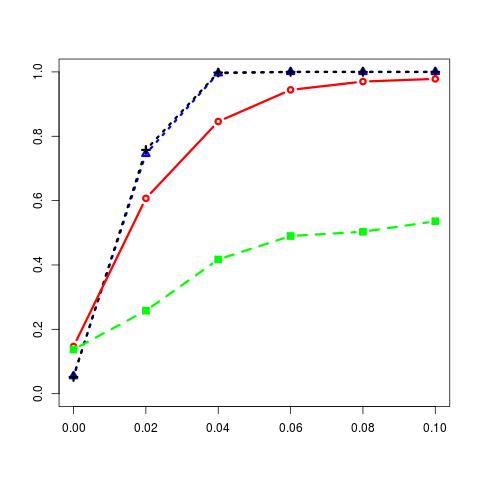}}
\qquad
\subfloat[][noise=$ar_{pos}$, $T$=120]{
\includegraphics[width=0.25\textwidth]{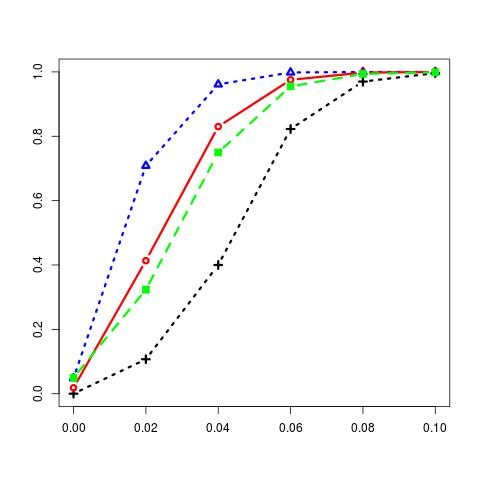}}

\subfloat[][noise=$ma_{neg}$, $T$=120]{
\includegraphics[width=0.25\textwidth]{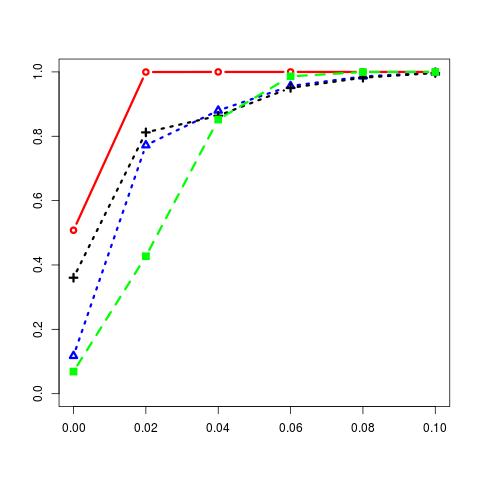}}
\qquad
\subfloat[][noise=$ar_{per}$, $T$=120]{
\includegraphics[width=0.25\textwidth]{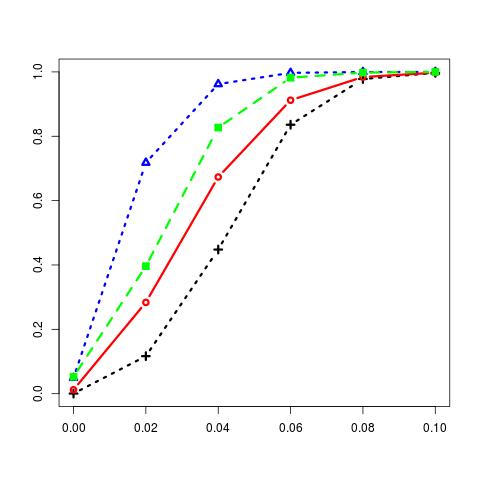}}
\qquad
\subfloat[][noise=$ma_{per}$, $T$=120]{
\includegraphics[width=0.25\textwidth]{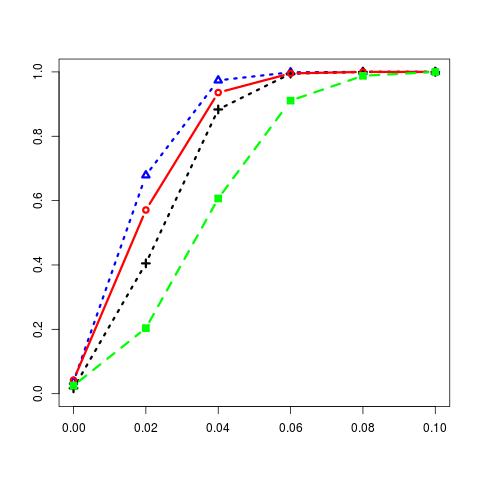}}
\qquad
\caption{Power as a function of $\rho$ when testing roots at 1 with all nuisance roots. Blue dotted curve with triangle knot is for the seasonal iid bootstrap test. Red solid curve with circle knot is for the seasonal block bootstrap test. Black dotted curve with ``+" knot is for the non-seasonal bootstrap test. Green dashed curve with square knot is for the Wald test. In (a)-(f) sample size $T=30$. In (g)-(l) sample size $T=120$.}
\label{fig:root1_T}
\end{figure}

\begin{figure}[H]
\centering

\subfloat[][noise=iid, $T$=30]{
\includegraphics[width=0.25\textwidth]{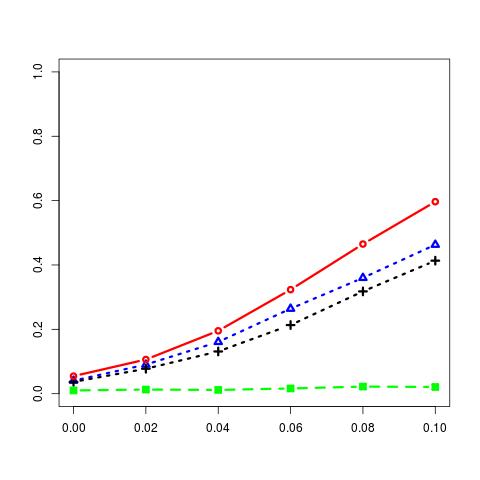}}
\qquad
\subfloat[][noise=heter, $T$=30]{
\includegraphics[width=0.25\textwidth]{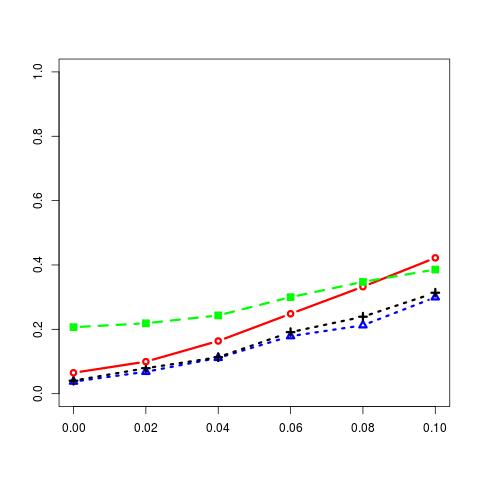}}
\qquad
\subfloat[][noise=$ar_{pos}$, $T$=30]{
\includegraphics[width=0.25\textwidth]{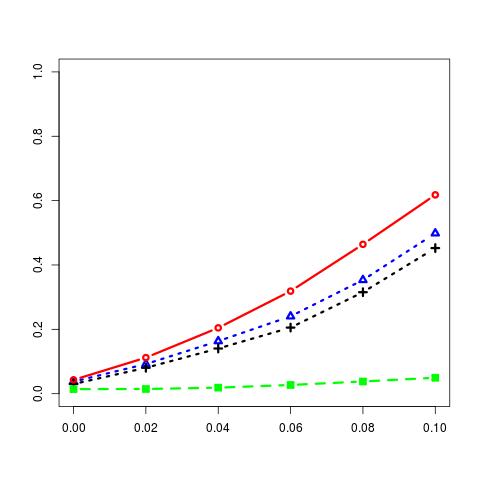}}

\subfloat[][noise=$ma_{neg}$, $T$=30]{
\includegraphics[width=0.25\textwidth]{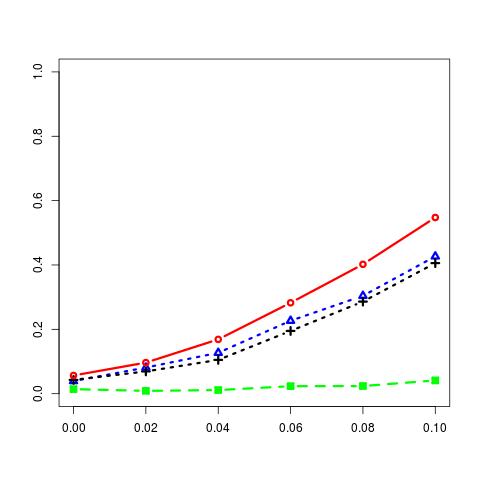}}
\qquad
\subfloat[][noise=$ar_{per}$, $T$=30]{
\includegraphics[width=0.25\textwidth]{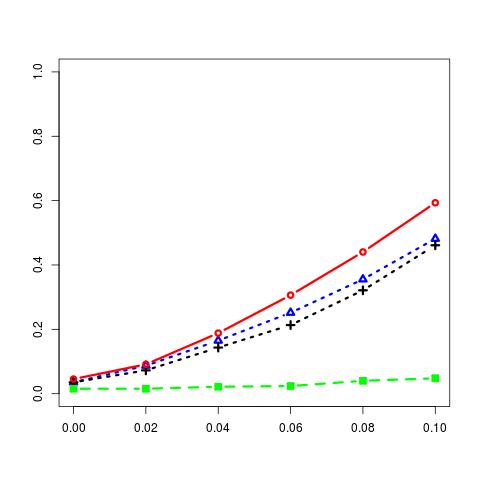}}
\qquad
\subfloat[][noise=$ma_{per}$, $T$=30]{
\includegraphics[width=0.25\textwidth]{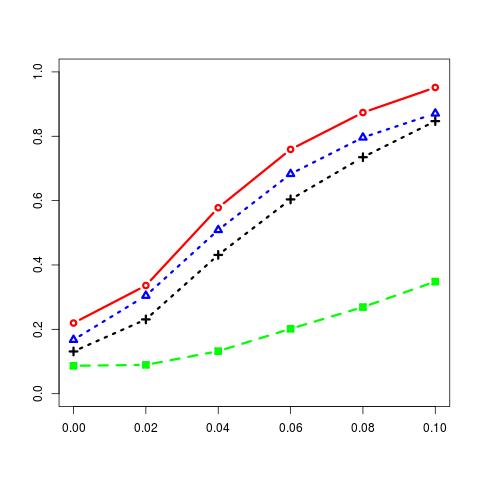}}
\qquad

\subfloat[][noise=iid, $T$=120]{
\includegraphics[width=0.25\textwidth]{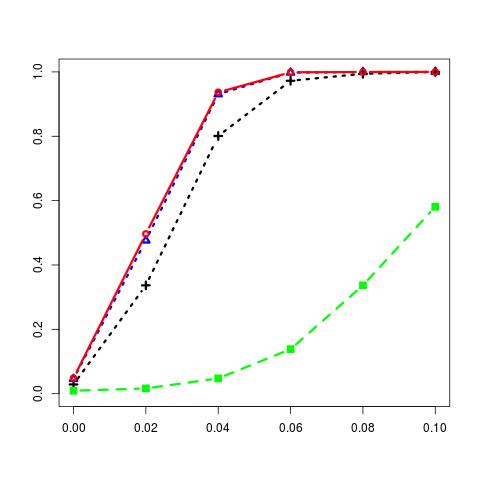}}
\qquad
\subfloat[][noise=heter, $T$=120]{
\includegraphics[width=0.25\textwidth]{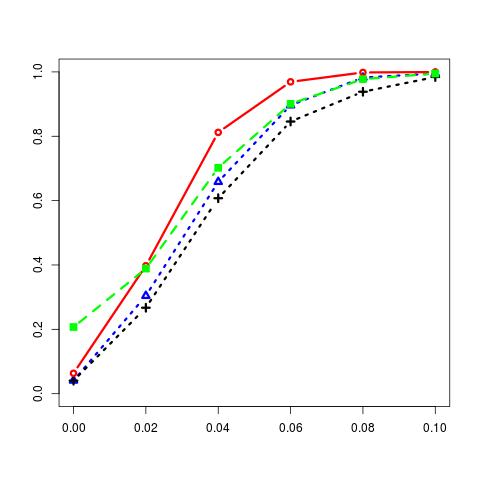}}
\qquad
\subfloat[][noise=$ar_{pos}$, $T$=120]{
\includegraphics[width=0.25\textwidth]{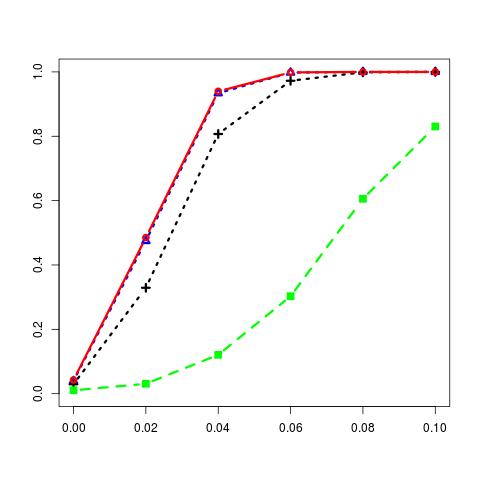}}

\subfloat[][noise=$ma_{neg}$, $T$=120]{
\includegraphics[width=0.25\textwidth]{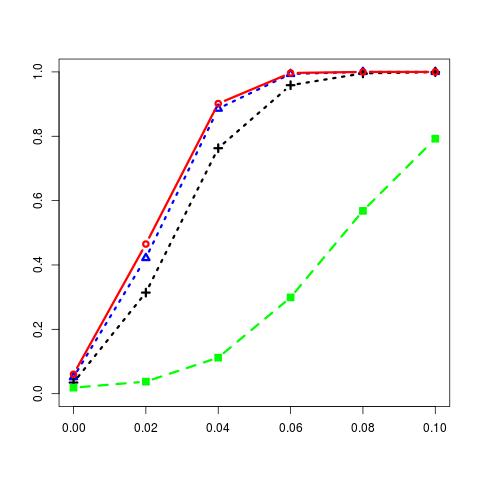}}
\qquad
\subfloat[][noise=$ar_{per}$, $T$=120]{
\includegraphics[width=0.25\textwidth]{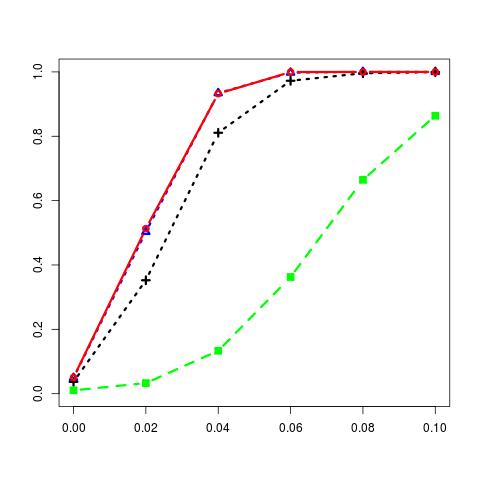}}
\qquad
\subfloat[][noise=$ma_{per}$, $T$=120]{
\includegraphics[width=0.25\textwidth]{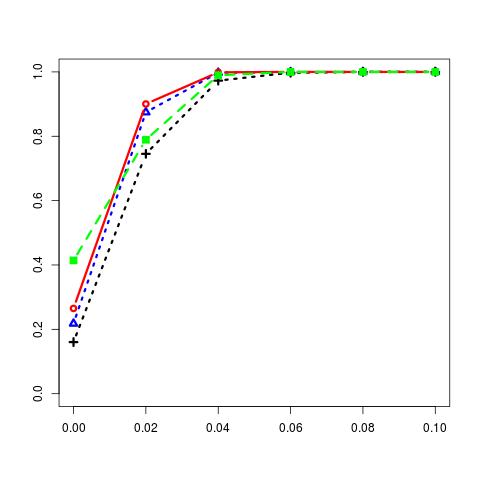}}
\qquad
\caption{Power as a function of $\rho$ when testing roots at $\pm i$ with no nuisance root. Blue dotted curve with triangle knot is for the seasonal iid bootstrap test. Red solid curve with circle knot is for the seasonal block bootstrap test. Black dotted curve with ``+" knot is for the non-seasonal bootstrap test. Green dashed curve with square knot is for the Wald test. In (a)-(f) sample size $T=30$. In (g)-(l) sample size $T=120$.  }
\label{fig:root34_F}
\end{figure}

\begin{figure}[H]
\centering
\subfloat[][noise=iid, $T$=30]{
\includegraphics[width=0.25\textwidth]{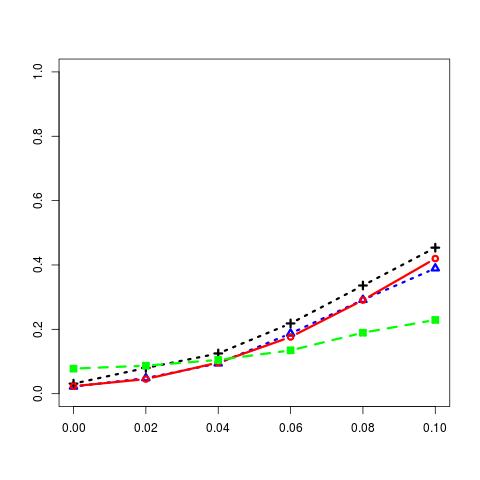}}
\qquad
\subfloat[][noise=heter, $T$=30]{
\includegraphics[width=0.25\textwidth]{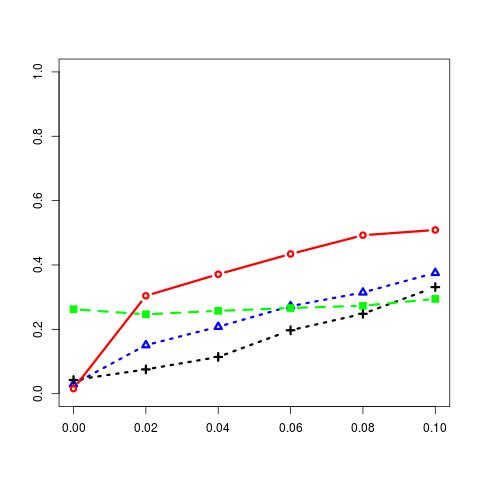}}
\qquad
\subfloat[][noise=$ar_{pos}$, $T$=30]{
\includegraphics[width=0.25\textwidth]{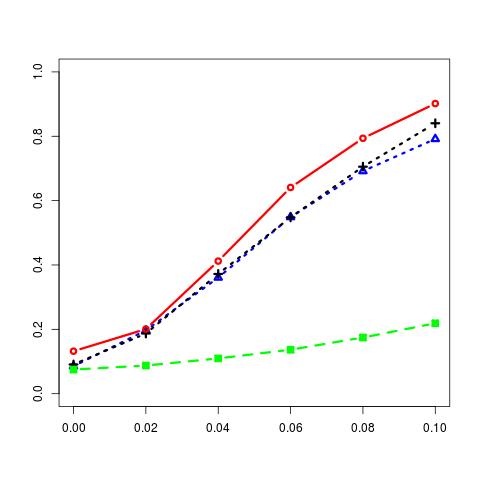}}

\subfloat[][noise=$ma_{neg}$, $T$=30]{
\includegraphics[width=0.25\textwidth]{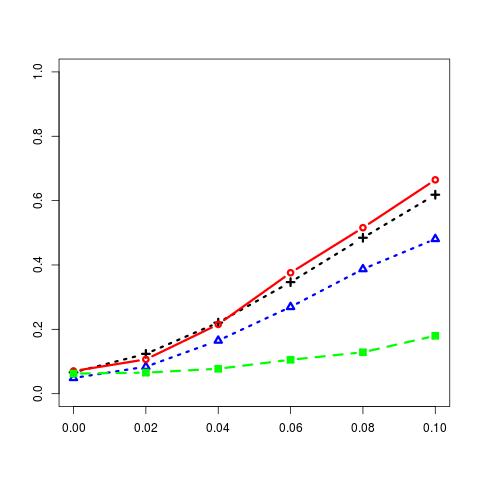}}
\qquad
\subfloat[][noise=$ar_{per}$, $T$=30]{
\includegraphics[width=0.25\textwidth]{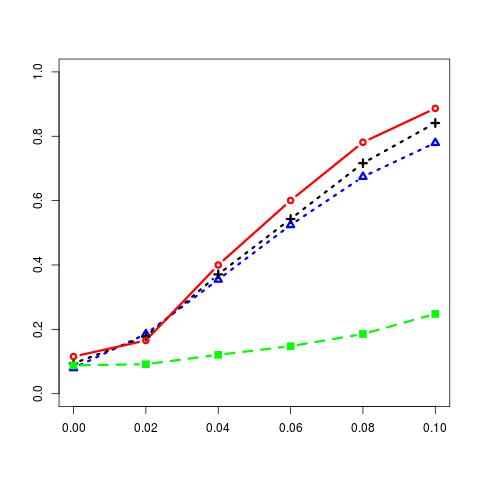}}
\qquad
\subfloat[][noise=$ma_{per}$, $T$=30]{
\includegraphics[width=0.25\textwidth]{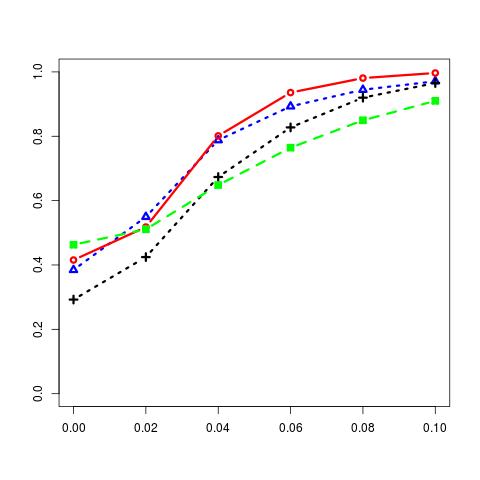}}
\qquad

\subfloat[][noise=iid, $T$=120]{
\includegraphics[width=0.25\textwidth]{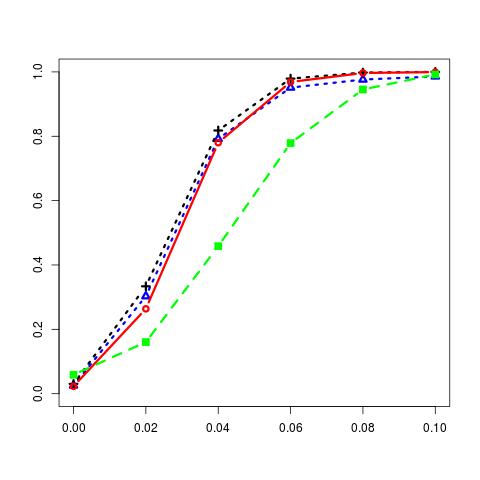}}
\qquad
\subfloat[][noise=heter, $T$=120]{
\includegraphics[width=0.25\textwidth]{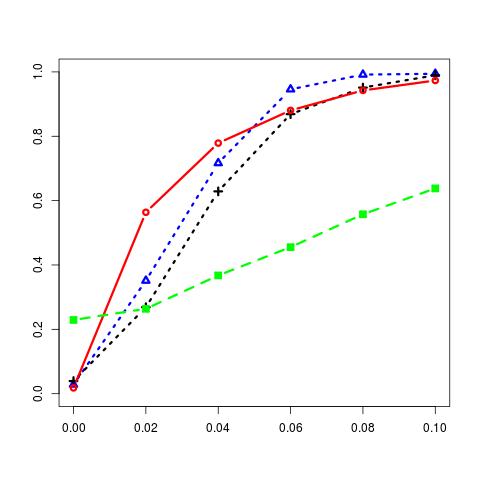}}
\qquad
\subfloat[][noise=$ar_{pos}$, $T$=120]{
\includegraphics[width=0.25\textwidth]{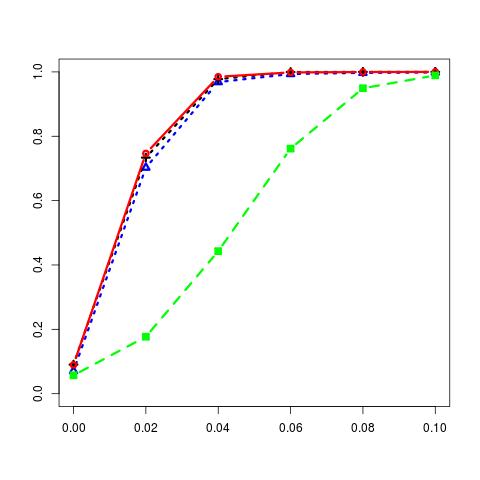}}

\subfloat[][noise=$ma_{neg}$, $T$=120]{
\includegraphics[width=0.25\textwidth]{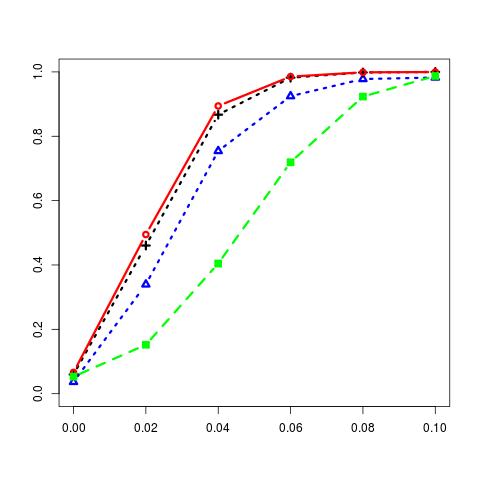}}
\qquad
\subfloat[][noise=$ar_{per}$, $T$=120]{
\includegraphics[width=0.25\textwidth]{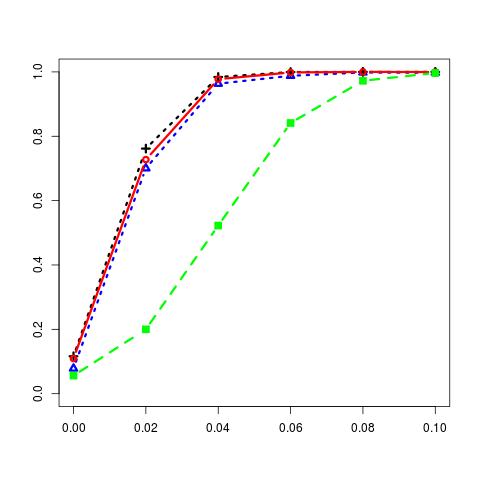}}
\qquad
\subfloat[][noise=$ma_{per}$, $T$=120]{
\includegraphics[width=0.25\textwidth]{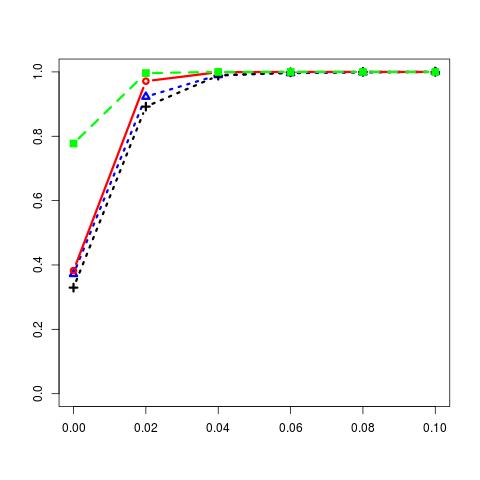}}
\qquad
\caption{Power as a function of $\rho$ when testing roots at $\pm i$ with all nuisance roots. Blue dotted curve with triangle knot is for the seasonal iid bootstrap test. Red solid curve with circle knot is for the seasonal block bootstrap test. Black dotted curve with ``+" knot is for the non-seasonal bootstrap test. Green dashed curve with square knot is for the Wald test. In (a)-(f) sample size $T=30$. In (g)-(l) sample size $T=120$.  }
\label{fig:root34_T}
\end{figure}

\begin{figure}[H]
\centering

\subfloat[][noise=iid, $T$=30]{
\includegraphics[width=0.25\textwidth]{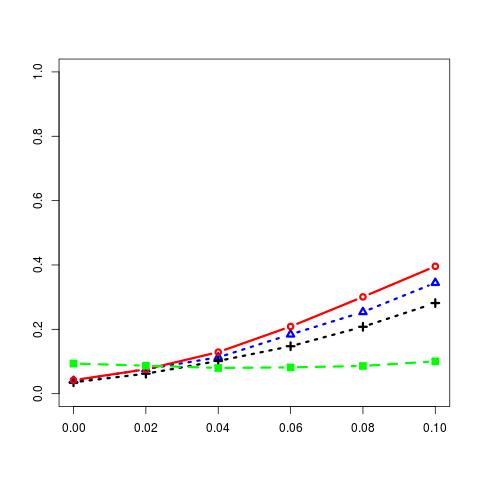}}
\qquad
\subfloat[][noise=heter, $T$=30]{
\includegraphics[width=0.25\textwidth]{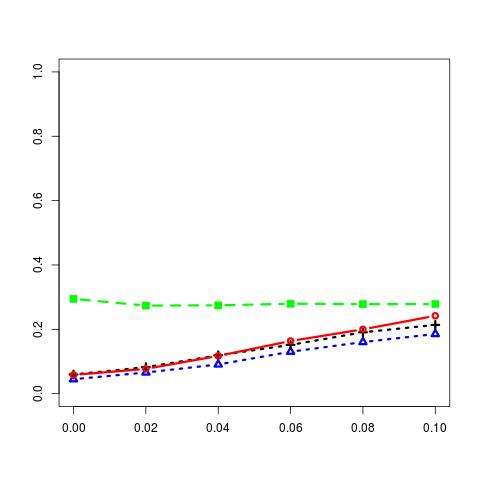}}
\qquad
\subfloat[][noise=$ar_{pos}$, $T$=30]{
\includegraphics[width=0.25\textwidth]{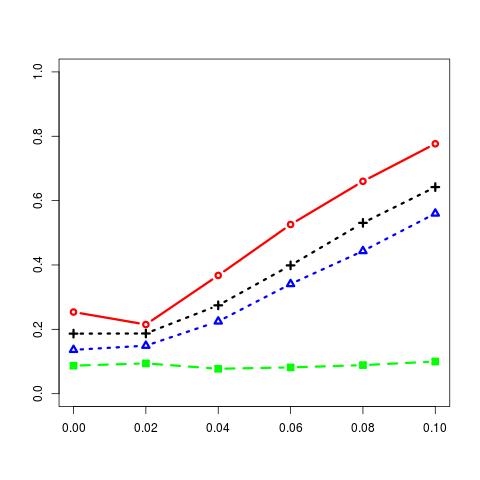}}

\subfloat[][noise=$ma_{neg}$, $T$=30]{
\includegraphics[width=0.25\textwidth]{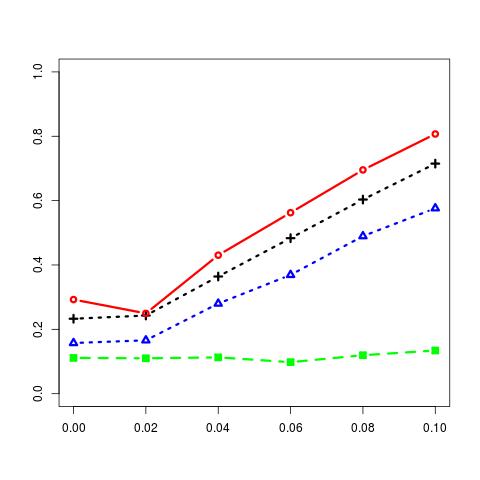}}
\qquad
\subfloat[][noise=$ar_{per}$, $T$=30]{
\includegraphics[width=0.25\textwidth]{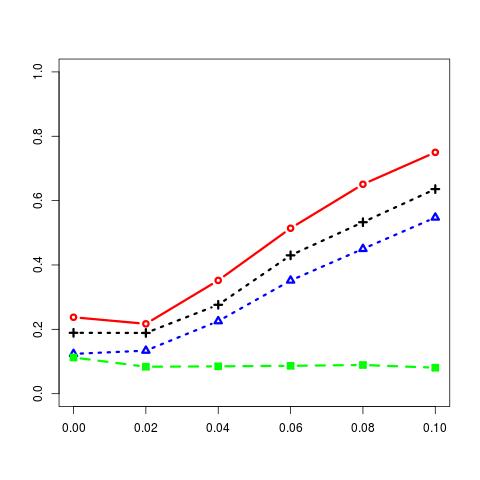}}
\qquad
\subfloat[][noise=$ma_{per}$, $T$=30]{
\includegraphics[width=0.25\textwidth]{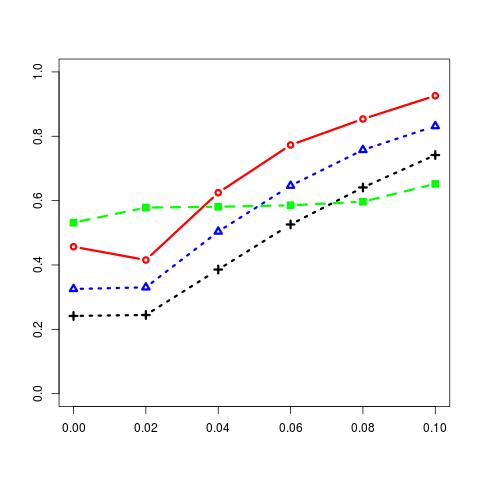}}
\qquad

\subfloat[][noise=iid, $T$=120]{
\includegraphics[width=0.25\textwidth]{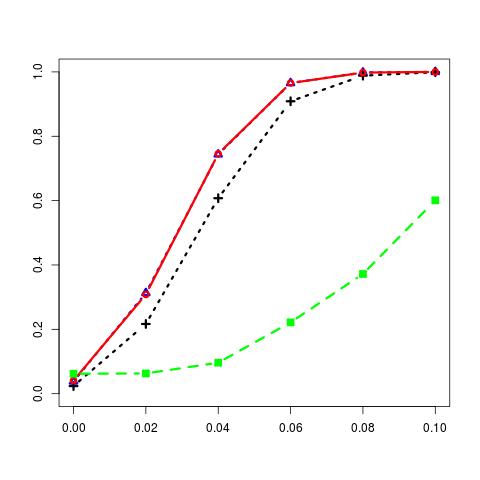}}
\qquad
\subfloat[][noise=heter, $T$=120]{
\includegraphics[width=0.25\textwidth]{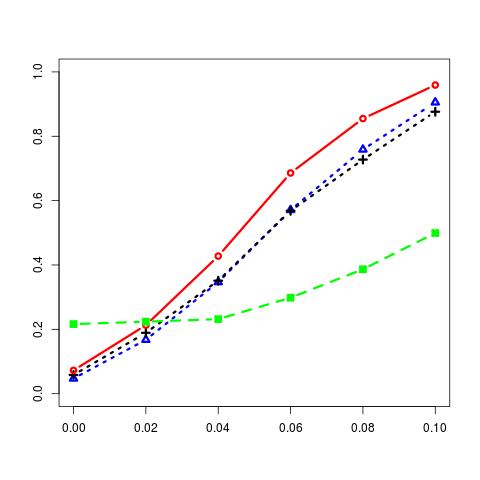}}
\qquad
\subfloat[][noise=$ar_{pos}$, $T$=120]{
\includegraphics[width=0.25\textwidth]{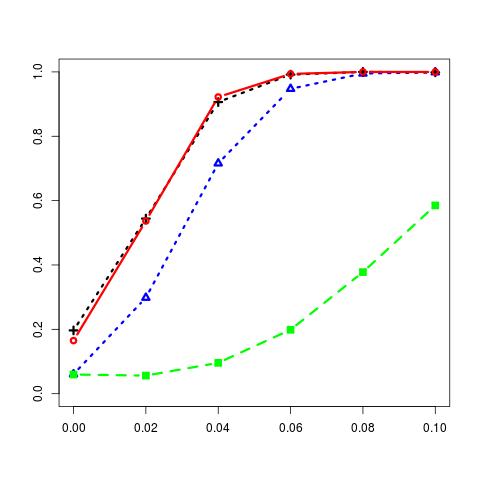}}

\subfloat[][noise=$ma_{neg}$, $T$=120]{
\includegraphics[width=0.25\textwidth]{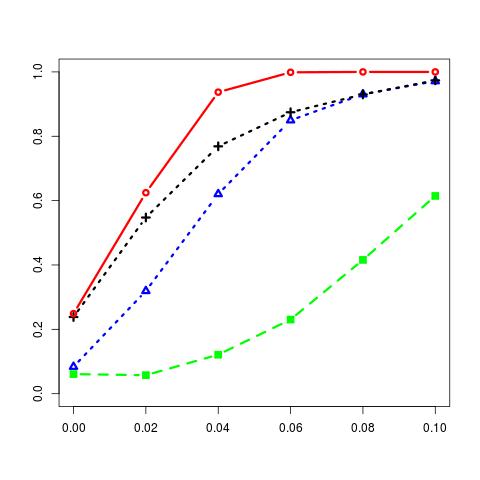}}
\qquad
\subfloat[][noise=$ar_{per}$, $T$=120]{
\includegraphics[width=0.25\textwidth]{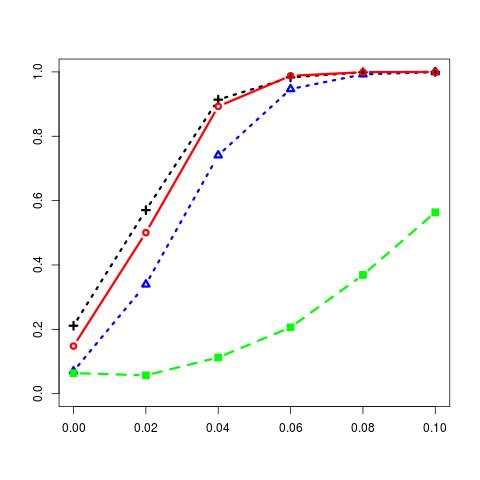}}
\qquad
\subfloat[][noise=$ma_{per}$, $T$=120]{
\includegraphics[width=0.25\textwidth]{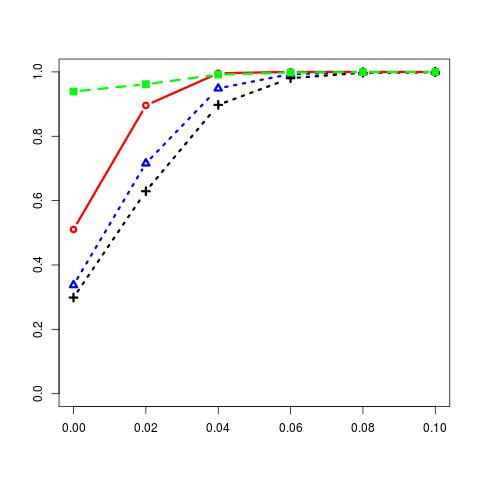}}
\qquad
\caption{Power as a function of $\rho$ when testing roots at $1$, $-1$, and $\pm i$. Blue dotted curve with triangle knot is for the seasonal iid bootstrap test. Red solid curve with circle knot is for the seasonal block bootstrap test. Black dotted curve with ``+" knot is for the non-seasonal bootstrap test. Green dashed curve with square knot is for the Wald test. In (a)-(f) sample size $T=30$. In (g)-(l) sample size $T=120$.  }
\label{fig:root1234_F}
\end{figure}

\section{Real Data Application of Seasonal Unit Root Test}

\subsection{Datasets}
Here we present the result of the seasonal unit root tests on four quarterly economic time series that have not been seasonally adjusted. The first dataset contains gas consumption in millions of therms in United Kingdom from quarter one, 1960 to quarter four, 1986. The second dataset gives the E-commerce retail sales as a percent of total sales in United States from quarter four, 1999 to quarter three, 2016. The third dataset presents the owned and securitized outstanding student loans in billions of dollars in United States from quarter one, 2006 to quarter four, 2016. The fourth includes \change{the logarithms of} the earnings per Johnson\&Johnson share in dollars from quarter one, 1960 to quarter four, 1980. The deterministic linear and quadratic trends and the deterministic seasonal component of these time series are first estimated with OLS and then removed from the data. The detrended and deseasonalized time series are presented in Figure \ref{fig:real_data}. Since \cite{osborn1988seasonality,osborn1989performance} have indicated possible periodic structure in economic time series, when investigating the stochastic seasonality of these time series we include tests catering to periodicity. Specifically, we implement the seasonal iid bootstrap augmented HEGY test (SIB), the seasonal block bootstrap unaugmented HEGY test (SBB), \change{the non-seasonal bootstrap augmented HEGY test (NSB) by \cite{burridge2004bootstrapping}, }and the Wald test (WALD) by \cite{ghysels1996periodic}.

\begin{figure}[H]
\centering
\subfloat[Subfigure 1 list of figures text][Gas]{
\includegraphics[width=0.20\textwidth]{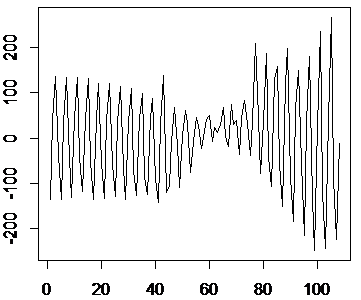}}
\qquad
\subfloat[Subfigure 2 list of figures text][E-Commerce]{
\includegraphics[width=0.20\textwidth]{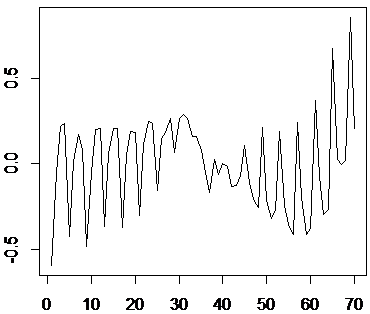}}
\qquad
\subfloat[Subfigure 3 list of figures text][Student Loan]{
\includegraphics[width=0.20\textwidth]{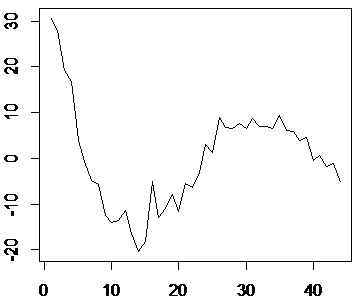}}
\qquad
\subfloat[Subfigure 4 list of figures text][Johnson\&Johnson]{
\includegraphics[width=0.20\textwidth]{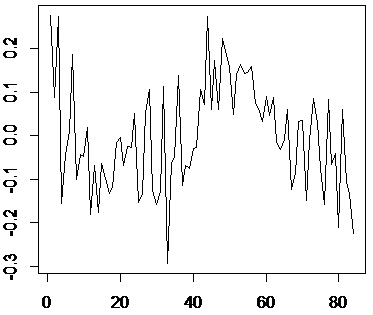}}
\caption{Quarterly time series with deterministic trend and seasonal component removed}
\label{fig:real_data}
\end{figure}

\subsection{Results}

\begin{table}[H]
\centering
\caption{\change{P-values of seasonal unit root tests on economic data}}
\label{table:real data}
\begin{tabular}{ccccccccc}
\hline
     & \multicolumn{4}{c}{Gas}                                                                                                                       & \multicolumn{4}{c}{E-Commerce}                                                                                                             \\ \cline {2-9} 
     & $H_{0}^{1}$ & $H_{0}^{2}$ & $H_{0}^{3,4}$ & $H_{0}^{1,2,3,4}$ & $H_{0}^{1}$ & $H_{0}^{2}$ & $H_{0}^{3,4}$ & $H_{0}^{1,2,3,4}$ \\ \hline  
SIB  & 0.068                           & 0.000                           & 0.944                             & 0.020                                 & 0.322                           & 0.162                           & 0.290                             & 0.540                                 \\
SBB  & 0.038                           & 0.000                           & 0.876                             & 0.026                                 & 0.544                           & 0.014                           & 0.252                             & 0.306                                 \\
NSB & 0.042                           & 0.000                           & 0.988                             & 0.208                                 & 0.476                           & 0.192                           & 0.508                             & 0.668                                 \\
WALD & 0.719                           & 0.013                           & 0.440                             & 0.108                                 & 0.438                           & 0.967                           & 0.473                             & 0.027                                 \\ \hline  
     & \multicolumn{4}{c}{Student Loan}                                                                                                              & \multicolumn{4}{c}{Johnson\&Johnson}                                                                                                          \\ \cline {2-9}   
     & $H_{0}^{1}$ & $H_{0}^{2}$ & $H_{0}^{3,4}$ & $H_{0}^{1,2,3,4}$ & $H_{0}^{1}$ & $H_{0}^{2}$ & $H_{0}^{3,4}$ & $H_{0}^{1,2,3,4}$ \\ \hline  
SIB  & 0.136                           & 0.002                           & 0.000                             & 0.000                                 & 0.226                           & 0.012                           & 0.002                             & 0.000                                 \\
SBB  & 0.128                           & 0.000                           & 0.000                             & 0.000                                 & 0.092                           & 0.000                           & 0.002                             & 0.000                                 \\
NSB & 0.110                           & 0.000                           & 0.000                             & 0.000                                 & 0.286                           & 0.036                           & 0.002                             & 0.006                                 \\
WALD & 0.362                           & 0.204                           & 0.979                             & 0.611                                 & 0.513                           & 0.028                           & 0.011                             & 0.002    \\ \hline                            
\end{tabular}
\end{table}

\subsubsection{Gas Consumption}
First, we investigate the p-values of the gas consumption time series. \change{Overall, from Table \ref{table:real data}, we observe that the seasonal iid bootstrap test, which is recommended for testing the concurrence of roots at $1$, $-1$, and $\pm i$, rejects at the size of 5\% the hypothesis that the gas consumption series has all roots at $1$, $-1$, and $\pm i$.} In a root-by-root analysis, we found that at the size of 5\%, all the tests unanimously reject root at $-1$, but on the other hand none of the tests rejects root at $\pm i$. Hence, the gas consumption time series may possess roots at $\pm i$. \change{Notice that in the gas consumption time series, the sample size $T=27$ is fairly small. In the test of root at 1, if the time series possesses nuisance roots and the sample size is small, the Wald test loses power; see Figure \ref{fig:root1_T}. Hence, when testing root at 1, we consider the high p-value from the Wald test unreliable. Since, in testing the root at 1, the p-values of all tests other than the Wald test are around 5\%, at the size of 5\% we conclude that the gas consumption process may have roots at $\pm i$, may not have a root at $-1$, and may or may not have a root at $1$.} 
\subsubsection{E-Commerce Sales}
\change{At a size of 5\%, none of the tests, except for the Wald test, can reject the hypothesis that the e-commerce sales series has all roots at $1$, $-1$, and $\pm i$. Notice that in the e-commerce sales context, the sample size $T=18$ is fairly small. When testing jointly the roots at $1$, $-1$, and $\pm i$ and when the sample size is small, the Wald test suffers severe upward size distortions, see Figure \ref{fig:root1234_F}. Hence, we ignore the small p-value of the Wald test when testing jointly the roots at $1$, $-1$, and $\pm i$ and conclude that the e-commerce sales series may simultaneously have roots at $1$, $-1$, and $\pm i$. This conclusion is consistent with the high p-values of the root-by-root tests.}
\subsubsection{Student Loans}
\change{When analyzing the student loans series, we focus on the p-values of the three bootstraps tests, whose superiority over the Wald test has been illustrated in the simulation. We observe that all the bootstrap tests reject at the size of 5\% the hypothesis that the student loans series has all roots at $1$, $-1$, and $\pm i$. In a root-by-root analysis, all the bootstrap tests unanimously reject the root at $-1$ and $\pm i$, but fail to reject the root at $1$. Hence, we conclude that the student loans series may have a root at $1$, but may not have roots at $-1$ or $\pm i$.} 
\subsubsection{Johnson\&Johnson Earnings}
\change{According to Table \ref{table:real data}, all of the tests reject, at the size of 5\%, the hypothesis that the Johnson\&Johnson earnings series has all roots at $1$, $-1$, and $\pm i$. In a root-by-root analysis, all of the tests reject the roots at $-1$ and $\pm i$, but fail to reject the root at $1$. Hence, we conclude that the Johnson\&Johnson earnings series may have a root at $1$, but may not have roots at $-1$ or $\pm i$.}

\section{Conclusion}
In this paper we analyze the augmented and the unaugmented HEGY tests in the periodically varying setting. For root at 1 or $-1$, the asymptotic distributions of the testing statistics are standard. However, for any combinations of roots at 1, $-1$, $i$, and $-i$, the asymptotic distributions are not standard, not pivotal, and cannot be easily pivoted. Therefore, when periodic variation exists, the HEGY test can be applied to test any single real roots, but cannot be directly applied to any combinations of roots. 

Bootstrap proves to be an effective remedy for the HEGY test in the periodically varying setting. The two bootstrap approaches, namely 1) the seasonal iid bootstrap augmented HEGY test and 2) the seasonal block bootstrap unaugmented HEGY test, turn out to be theoretically solid. In the simulation study, \change{we compare these two bootstrap tests with the non-seasonal bootstrap augmented HEGY test by \cite{burridge2004bootstrapping} and the Wald test by \cite{ghysels1996periodic}.} It turns out that the seasonal iid bootstrap augmented HEGY test has the best performance when we test root at 1, $-1$ and \change{when we test the concurrence of roots at $1$, $-1$, and $\pm i$}; on the other hand, the seasonal block bootstrap unaugmented HEGY test prevails when we test roots at $\pm i$. Real data application shows the importance of our bootstrap approaches in constructing powerful tests.
\section*{Acknowledgment}
\change{We are grateful to the Associate Editor and two anonymous referees for their insightful feedback.}
\bibliographystyle{elsarticle-num}
\bibliography{GSBB}

\newpage
\begin{center}

{\bfseries Appendix to: ``Bootstrap seasonal unit root test under periodic variation"}
\vspace{.5cm}

{\textsc{By Nan Zou and Dimitris N. Politis}}

\vspace{.28cm}

{\textit{University of Toronto and University of California-San Diego}}

\vspace{.28cm}

\begin{center}
\begin{minipage}{1\textwidth}
{\small \hspace{.5cm}The appendix includes the proofs of the theorems in the main manuscript. We first present the proof for the asymptotics of the unaugmented HEGY test, then the asymptotics of the augmented HEGY test, then the consistency of the seasonal iid bootstrap augmented HEGY test, and finally the consistency of the seasonal block bootstrap unaugmented HEGY test. Thoughout the appendix, let \linebreak $\bY_{t}=(Y_{4t-3},Y_{4t-2},Y_{4t-1},Y_{4t})'$,  $\bm{\Gamma}_{j}=E[\bm{V}_{t}\bm{V}_{t-j}']$, $\int \bW d \bW'$ denotes $\int_{0}^{1} \bW(r) d \bW(r)'$, and  $\int\bW\bW'$ denotes
$\int_{0}^{1}\bW(r)\bW(r)'dr$.}
\end{minipage}
\end{center}
\end{center}

\appendix

\section{Proof of Theorem \ref{unaug real}.}
\begin{Lemma}\label{le:unaug real1}
Suppose one of Assumption \ref{assump 1a} and Assumption \ref{assump 1b} and one of Assumption \ref{assump 2a} and Assumption \ref{assump 2b} hold. Then under $H_{0}^{1,2,3,4}$,
\begin{align*}
&T^{-1}\sum_{t=1}^{T}\bY_{t-1}\bm{V}_{t}'\Rightarrow \bTheta(1)\bOmega^{1/2}\{\int \bW d \bW'\}\bOmega^{1/2} \bTheta(1)'+\sum_{j=1}^{\infty}\bm{\Gamma}_{j}'\equiv\bm{Q}_{1},\\
&T^{-2}\sum_{t=1}^{T}\bY_{t-1}\bY_{t-1}'\Rightarrow \bTheta(1)\bOmega^{1/2} \{\int\bW\bW'\}\bOmega^{1/2}\bTheta(1)'\equiv\bm{Q}_{2},\\
&T^{-1}\sum_{t=1}^{T}\bm{V}_{t}\bm{V}_{t-j}'\stackrel {p}\rightarrow \bm{\Gamma}_{j}.
\end{align*}
\end{Lemma}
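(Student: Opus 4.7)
Under $H_0^{1,2,3,4}$ every $\alpha_s(L)$ must coincide with $1-L^4$, so at the four-vector level $\bY_t - \bY_{t-1} = \bV_t$ with $\bY_0 = \bm{0}$; hence $\bY_t = \sum_{i=1}^t \bV_i$ is a vector random walk driven by the VMA($\infty$) innovation $\bV_t = \bTheta(B)\bep_t$. The three displayed limits are then standard multivariate ``integrated linear process'' results, and I will derive them by combining the Beveridge--Nelson (BN) decomposition of $\bTheta(B)$ with the functional CLT supplied by Assumption~2.A or 2.B.

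\textbf{Main steps.} First, write $\bTheta(B) = \bTheta(1) + (1-B)\tilde{\bTheta}(B)$ with $\tilde{\bTheta}_k = -\sum_{j>k}\bTheta_j$; the $1$-summability in Assumption~1.A makes $\tilde{\bep}_t := \tilde{\bTheta}(B)\bep_t$ an $L^2$-convergent stationary linear process. This gives $\bV_t = \bTheta(1)\bep_t + (\tilde{\bep}_t - \tilde{\bep}_{t-1})$, which telescopes to $\bY_{\lfloor Tr\rfloor} = \bTheta(1)\sum_{i=1}^{\lfloor Tr\rfloor}\bep_i + \tilde{\bep}_{\lfloor Tr\rfloor} - \tilde{\bep}_0$. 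The martingale/mixing FCLT under either 2.A or 2.B gives $T^{-1/2}\sum_{i=1}^{\lfloor Tr\rfloor}\bep_i \Rightarrow \bOmega^{1/2}\bW(r)$, and the $\tilde{\bep}_t$ terms are uniformly $o_p(T^{1/2})$, yielding $T^{-1/2}\bY_{\lfloor T\cdot\rfloor} \Rightarrow \bTheta(1)\bOmega^{1/2}\bW$ in $D[0,1]^4$. For the second claim, $T^{-2}\sum\bY_{t-1}\bY_{t-1}'$ is a Riemann sum of a continuous functional of $T^{-1/2}\bY_{\lfloor T\cdot\rfloor}$, so continuous mapping delivers the limit $\bm{Q}_2$. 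For the first claim, BN applied inside the sum splits $\sum\bY_{t-1}\bV_t' = \bigl(\sum\bY_{t-1}\bep_t'\bigr)\bTheta(1)' + \sum\bY_{t-1}(\tilde{\bep}_t - \tilde{\bep}_{t-1})'$; the first piece, after $T^{-1}$ scaling, converges to $\bTheta(1)\bOmega^{1/2}\{\int\bW\,d\bW'\}\bOmega^{1/2}\bTheta(1)'$ via the joint FCLT together with a stochastic-integral convergence theorem, and Abel summation applied to the second piece, using $\bY_0 = \bm{0}$ and $\bY_t - \bY_{t-1} = \bV_t$, rewrites it as $\bY_{T-1}\tilde{\bep}_T' - \sum_{t=1}^{T-1}\bV_t \tilde{\bep}_t'$; the boundary term is $O_p(T^{1/2})$, while an ergodic/mixing WLLN on the stationary $\bV_t\tilde{\bep}_t'$ gives the bias $-E[\bV_t\tilde{\bep}_t']$, which a direct computation with $\bV_t = \sum_i\bTheta_i\bep_{t-i}$ and $\tilde{\bTheta}_k = -\sum_{j>k}\bTheta_j$ identifies as $\sum_{j=1}^\infty\bGamma_j'$. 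The third claim is a weak law of large numbers for the stationary sequence $\bV_t\bV_{t-j}'$, whose variance is controlled by part~(ii) of Assumption~2.A (which 2.B implies via Lemma~\ref{boundedness}), followed by Chebyshev.

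\textbf{Main obstacle.} The delicate step is the stochastic-integral convergence $T^{-1}\sum \bY_{t-1}\bep_t' \Rightarrow \bTheta(1)\bOmega^{1/2}\int \bW\, d\bW'\, \bOmega^{1/2}$: because the It\^o integral is not a continuous functional of its integrand, continuous mapping does not apply directly. Under Assumption~2.A this is handled by a Chan--Wei (1988)-style martingale embedding of $\sum\bep_i$; under Assumption~2.B one invokes a Kurtz--Protter stochastic-integral convergence theorem; in both cases the moment and dependence hypotheses on $\bep_t$ are what guarantee the required semimartingale uniform tightness. The non-symmetric off-diagonal structure of the bias (producing $\sum_{j=1}^\infty\bGamma_j'$ rather than a symmetric object) is precisely what makes the multivariate calculation strictly harder than the univariate Phillips--Solo one.
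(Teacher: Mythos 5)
Your proposal is correct and is essentially the same argument the paper relies on: the paper's proof of this lemma is a one-line citation to Hamilton (1994, Prop.\ 18.1), Chan and Wei (1988) for Assumption 2.A, and De Jong and Davidson (2000) for Assumption 2.B, and those references prove the result by exactly the Beveridge--Nelson decomposition, FCLT, continuous mapping, and stochastic-integral convergence route you lay out (your identification of the bias $-E[\bV_t\tilde{\bep}_t']=\sum_{j\ge1}\bGamma_j'$ checks out). You have simply written out in full the standard argument that the paper delegates to the literature.
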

\begin{proof}
See \cite{hamilton1994time} (Proposition 18.1, pp. 547-548) for the proof with iid innovations, \cite{chan1988limiting} for the proof under Assumption \ref{assump 2a}, and \cite{de2000} for the proof under Assumption \ref{assump 2b}. 
\end{proof}

\begin{Lemma} \label{le:unaug real2}
Let $\bX_{U,j}=(Y_{j,0},\dots,Y_{j,4T-1})' $, $\bX_{U}=(\bX_{U,1},\bX_{U,2},\bX_{U,3},\bX_{U,4})$, where $U$ stands for unaugmented HEGY, and $\{Y_{j,4t+s}\}$ be defined by \eqref{PHEGY2}. Let $\bV=(V_{1},\dots,V_{4T})'$ and $\bUp$ be a $4\times 4$ matrix such that
$$
\bUp_{ij}=
\begin{cases}
(\bGamma_{0})_{ij}&\ \ \text{if} \ i<j,\\
0 &\ \ \text{if} \ i\geq j.
\end{cases}
$$
Then, under $H_{0}^{1,2,3,4}$,
\begin{align*}
(a)\\
&(4T)^{-2}(\bX_{U}'\bX_{U})_{11}\Rightarrow\frac{1}{4}\bm{c}_{1}'\bm{Q}_{2}\bm{c}_{1}\equiv\eta_{1},\\
&(4T)^{-2}(\bX_{U}'\bX_{U})_{22}\Rightarrow\frac{1}{4}\bm{c}_{2}'\bm{Q}_{2}\bm{c}_{2}\equiv\eta_{2},\\
&(4T)^{-2}(\bX_{U}'\bX_{U})_{33}\Rightarrow\frac{1}{8}(\bm{c}_{3}'\bm{Q}_{2}\bm{c}_{3}+\bm{c}_{4}'\bm{Q}_{2}\bm{c}_{4})\equiv\eta_{3},\\
&(4T)^{-2}(\bX_{U}'\bX_{U})_{44}\Rightarrow\frac{1}{8}(\bm{c}_{3}'\bm{Q}_{2}\bm{c}_{3}+\bm{c}_{4}'\bm{Q}_{2}\bm{c}_{4})\equiv\eta_{3},\\
&(4T)^{-1}(\bX_{U}'\bX_{U})_{ij} \stackrel{p}\rightarrow0, \text{ for } i\neq j.\\
\\
(b)\\
&(4T)^{-1}\bX_{U,1}'\bm{V}\Rightarrow\frac{1}{4}(\bm{c}_{1}'\bm{Q}_{1}\bm{c}_{1}+\bm{c}_{1}'\bUp\bm{c}_{1})\equiv\xi_{1},\\
&(4T)^{-1}\bX_{U,2}'\bm{V}\Rightarrow\frac{1}{4}(\bm{c}_{2}'\bm{Q}_{1}\bm{c}_{2}+\bm{c}_{2}'\bUp\bm{c}_{2})\equiv\xi_{2},\\
&(4T)^{-1}\bX_{U,3}'\bm{V}\Rightarrow\frac{1}{4}(\bm{c}_{3}'\bm{Q}_{1}\bm{c}_{3}+\bm{c}_{4}'\bm{Q}_{1}\bm{c}_{4}+\bm{c}_{3}'\bUp\bm{c}_{3}+\bm{c}_{4}'\bUp\bm{c}_{4})\equiv\xi_{3},\\
&(4T)^{-1}\bX_{U,4}'\bm{V}\Rightarrow\frac{1}{4}(\bm{c}_{3}'\bm{Q}_{1}\bm{c}_{4}-\bm{c}_{4}'\bm{Q}_{1}\bm{c}_{3}+\bm{c}_{3}'\bUp\bm{c}_{4}-\bm{c}_{4}'\bUp\bm{c}_{3})\equiv\xi_{4}.
\end{align*}
\end{Lemma}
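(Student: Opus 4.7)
The plan is to reduce both parts of the lemma to the vector-level convergence in Lemma~\ref{le:unaug real1} by rewriting each scalar $Y_{j,4\tau+s-1}$ in terms of the vector partial sum $\bY_{\tau-1}$. Under $H_{0}^{1,2,3,4}$ the identity $(1-L^{4})Y_{t}=V_{t}$ gives $\bY_{\tau}-\bY_{\tau-1}=\bV_{\tau}$, so $\bY_{\tau-1}=\sum_{k=1}^{\tau-1}\bV_{k}$. Iterating the first-order scalar recursions implied by $(1-L)Y_{1,t}=V_{t}$, $(1+L)Y_{2,t}=-V_{t}$, $(1+L^{2})Y_{4,t}=-V_{t}$ and $Y_{3,t}=Y_{4,t-1}$, and splitting each running sum into the completed earlier blocks plus the partial contribution from the current block $\tau$, I would obtain the decomposition
\[
Y_{j,4\tau+s-1}=(\bc_{j}^{(s)})'\bY_{\tau-1}+\bm{p}_{j,s}'\bV_{\tau},\qquad s=-3,\ldots,0,
\]
where each leading direction $\bc_{j}^{(s)}\in\{\pm\bc_{1},\pm\bc_{2},\pm\bc_{3},\pm\bc_{4}\}$ encodes the season-dependent sign of the relevant HEGY frequency, and the partial-block mask $\bm{p}_{j,s}\in\mathbb{R}^{4}$ has support on coordinates of $\bV_{\tau}$ strictly before position $s+4$.

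For part~(a), plugging this representation into $(\bX_{U}'\bX_{U})_{ij}=\sum_{\tau,s}Y_{i,4\tau+s-1}Y_{j,4\tau+s-1}$ isolates the leading $O_{p}(T^{2})$ contribution as $\sum_{\tau}\bY_{\tau-1}'\bm{A}_{ij}\bY_{\tau-1}$ with $\bm{A}_{ij}=\sum_{s}\bc_{i}^{(s)}(\bc_{j}^{(s)})'$. A direct tabulation of the sixteen pairs $\bc_{j}^{(s)}$ yields $\bm{A}_{11}=4\bc_{1}\bc_{1}'$, $\bm{A}_{22}=4\bc_{2}\bc_{2}'$, and $\bm{A}_{33}=\bm{A}_{44}=2(\bc_{3}\bc_{3}'+\bc_{4}\bc_{4}')$; combined with $T^{-2}\sum_{\tau}\bY_{\tau-1}\bY_{\tau-1}'\Rightarrow\bm{Q}_{2}$ and the continuous mapping theorem, the $(4T)^{-2}$-scaled diagonal quadratic forms converge to $\eta_{1},\eta_{2},\eta_{3}$. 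For $i\neq j$ the same tabulation shows that $\bm{A}_{ij}$ is either identically zero (all pairs other than $(3,4)$) or antisymmetric ($\bm{A}_{34}=2(\bc_{4}\bc_{3}'-\bc_{3}\bc_{4}')$); in either case $\bY_{\tau-1}'\bm{A}_{ij}\bY_{\tau-1}=0$ pointwise in $\tau$, so the off-diagonal entries are of lower order than the $O_{p}(T^{2})$ diagonal, and the stated negligibility follows from the LLN applied to the surviving $\bY\bV$ and $\bV\bV$ remainders.

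For part~(b), the analogous expansion of $\bX_{U,j}'\bV=\sum_{\tau,s}Y_{j,4\tau+s-1}V_{4\tau+s}$ splits into a between-block piece $\sum_{\tau}\bY_{\tau-1}'\bm{M}_{j}\bV_{\tau}$ with $\bm{M}_{j}=\sum_{s=-3}^{0}\bc_{j}^{(s)}\bm{e}_{s+4}'$ ($\bm{e}_{r}$ the $r$-th standard basis vector of $\mathbb{R}^{4}$) and a within-block piece $\sum_{\tau,s}\bm{p}_{j,s}'\bV_{\tau}\,\bm{e}_{s+4}'\bV_{\tau}$. Using the expansions $\bc_{1}=\bm{e}_{1}+\bm{e}_{2}+\bm{e}_{3}+\bm{e}_{4}$, $\bc_{2}=\bm{e}_{1}-\bm{e}_{2}+\bm{e}_{3}-\bm{e}_{4}$, $\bc_{3}=\bm{e}_{4}-\bm{e}_{2}$, $\bc_{4}=\bm{e}_{3}-\bm{e}_{1}$, direct matrix computation collapses the sums to $\bm{M}_{1}=\bc_{1}\bc_{1}'$, $\bm{M}_{2}=\bc_{2}\bc_{2}'$, $\bm{M}_{3}=\bc_{3}\bc_{3}'+\bc_{4}\bc_{4}'$, and $\bm{M}_{4}=\bc_{3}\bc_{4}'-\bc_{4}\bc_{3}'$. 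Writing the between-block piece as a trace, invoking $T^{-1}\sum_{\tau}\bY_{\tau-1}\bV_{\tau}'\Rightarrow\bm{Q}_{1}$, and using the scalar symmetry $\bc'\bm{Q}_{1}'\bc=\bc'\bm{Q}_{1}\bc$ when the relevant $\bm{M}_{j}$ is symmetric deliver the $\bm{Q}_{1}$ parts of each $\xi_{j}$. For the within-block piece, the ergodic theorem for $\{\bV_{\tau}\}$ gives $T\sum_{s}\bm{e}_{s+4}'\bm{\Gamma}_{0}\bm{p}_{j,s}+o_{p}(T)$; since $\bm{p}_{j,s}$ has support strictly before position $s+4$, this sum picks out precisely the strictly-above-diagonal entries of $\bm{\Gamma}_{0}$ in the pattern encoded by $\bUp$, matching $\bc_{j}'\bUp\bc_{j}$ for $j=1,2$, $\bc_{3}'\bUp\bc_{3}+\bc_{4}'\bUp\bc_{4}$ for $j=3$, and $\bc_{3}'\bUp\bc_{4}-\bc_{4}'\bUp\bc_{3}$ for $j=4$.

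The main obstacle is the combinatorial bookkeeping: correctly identifying the sixteen pairs $(\bc_{j}^{(s)},\bm{p}_{j,s})$ by careful accounting of the alternating-sign patterns for each season and each frequency, and then verifying the compact matrix identities for $\bm{A}_{ii}$, the vanishing or antisymmetry of $\bm{A}_{ij}$ ($i\neq j$), the closed-form expressions for $\bm{M}_{j}$, and the translation of the partial-block sums into $\bUp$. Once these algebraic reductions are in place, the probabilistic content is entirely contained in Lemma~\ref{le:unaug real1}, the continuous mapping theorem, and the law of large numbers for the stationary sequence $\{\bV_{\tau}\}$; no further limit theorems are required.
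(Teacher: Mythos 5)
Your strategy is essentially the paper's own. For part (b) the paper uses exactly your decomposition $Y_{j,4t+s-1}=(\bc_{j}^{(s)})'\bY_{t-1}+\bm{p}_{j,s}'\bV_{t}$ (written out only for $j=1$, where $\bc_{1}^{(s)}\equiv\bc_{1}$ and the mask is $\sum_{i=-2}^{s}V_{4t-1+i}$), splits the sum into a between-block term handled by Lemma \ref{le:unaug real1} and a within-block term handled by the law of large numbers, and dismisses the remaining three statements as ``proven in similar ways''; for part (a) the paper simply cites Lemma 3.2(a) of Burridge and Taylor (2001a), so your explicit tabulation of the matrices $\bm{A}_{ij}$ and $\bm{M}_{j}$ is a legitimate filling-in of that citation. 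The identification of the season-dependent leading directions, the closed forms for $\bm{A}_{ii}$ and $\bm{M}_{j}$, and the translation of the partial-block sums into $\bUp$ all check out structurally, so the diagonal statements of (a) and all of (b) are handled correctly modulo the sign bookkeeping you defer.

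The one genuine gap is the off-diagonal claim in part (a). You correctly show that the $O_{p}(T^{2})$ leading term $\sum_{t}\bY_{t-1}'\bm{A}_{ij}\bY_{t-1}$ vanishes (either $\bm{A}_{ij}=0$ or it is antisymmetric), but the surviving cross terms $\sum_{t}\bY_{t-1}'(\cdot)\bV_{t}$ and the $\bV\bV$ remainders are of order exactly $O_{p}(T)$ with non-degenerate limits: they are of precisely the same form as the quantities you evaluate in part (b), which after division by $4T$ converge to non-zero limits such as $\tfrac{1}{4}\bc'\bm{Q}_{1}\bm{u}$ and combinations of entries of $\bm{\Gamma}_{0}$. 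Concretely, for $(i,j)=(1,2)$ the masks satisfy $\sum_{s}\bm{p}_{2,s}=(-1,0,-1,0)'\neq 0$, so the term $(4T)^{-1}\sum_{t}(\bc_{1}'\bY_{t-1})\bV_{t}'\sum_{s}\bm{p}_{2,s}$ converges weakly to a non-degenerate random variable, not to zero. What your argument actually delivers is $(4T)^{-2}(\bX_{U}'\bX_{U})_{ij}\stackrel{p}{\rightarrow}0$ for $i\neq j$, i.e.\ the off-diagonals are $o_{p}(T^{2})$; that weaker statement is all the proof of Theorem \ref{unaug real} uses (it only needs $(4T)^{-2}\bX_{U}'\bX_{U}$ to converge to $\mathrm{diag}(\eta_{1},\eta_{2},\eta_{3},\eta_{3})$), and the $(4T)^{-1}$ normalization in the lemma's statement looks like a misprint. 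You should either prove the $(4T)^{-2}$ version and say so, or exhibit the additional cancellation the $(4T)^{-1}$ version would require; as written, the assertion that ``the stated negligibility follows from the LLN'' is not supported, since the LLN and Lemma \ref{le:unaug real1} show the remainders are exactly of the borderline order.
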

\begin{proof}
For the proof of part (a), see the Lemma 3.2(a) of \cite{burridge2001} and its proof. For part (b), we only present the proof of the first statement. Other statements are proven in similar ways. By Lemma \ref{le:unaug real1}, 
\begin{align*}
(4T)^{-1}\bX_{U,1}'\bm{V}&=(4T)^{-1}\sum_{t=1}^{T}\sum_{s=-3}^{0}Y_{1,4t+s-1}V_{4t+s}\\ &=(4T)^{-1}\sum_{t=1}^{T}\sum_{s=-3}^{0}(\bm{c}_{1}'\bY_{t-1}+\sum_{i=-2}^{s}V_{4t-1+i})V_{4t+s}\\
&=(4T)^{-1}\sum_{t=1}^{T}\bm{c}_{1}'\bY_{t-1}\bm{V}_{t}'\bm{c}_{1}+(4T)^{-1}\sum_{t=1}^{T}\sum_{s=-3}^{0}\sum_{i=-2}^{s}V_{4t-1+i}V_{4t+s}\\
&\Rightarrow\frac{1}{4}(\bm{c}_{1}'\bm{Q}_{1}\bm{c}_{1}+\bm{c}_{1}'\bUp\bm{c}_{1}).&\mbox{\qedhere}
\end{align*} 
\end{proof}

\begin{proof}[Proof of Theorem \ref{unaug real}]
Under $H_{0}^{1,2,3,4}$, we have $(1-L^{4})Y_{\tau}=V_{\tau}$. Let $\bm{\hat{\pi}}=(\hat{\pi}_{1}^{U},\hat{\pi}_{2}^{U},\hat{\pi}_{3}^{U},\hat{\pi}_{4}^{U})'$, $\bm{t}=(t_{1}^{U},t_{2}^{U},t_{3}^{U},t_{4}^{U})'$, and $\hat{\sigma}^{2}=(4T)^{-1}(\bm{V}-\bX_{U}\bm{\hat{\pi}})'(\bm{V}-\bX_{U}\bm{\hat{\pi}})$. Then by Lemma \ref{le:unaug real2},
\begin{align*}
(4T)\bm{\hat{\pi}}
&=(\bX_{U}'\bX_{U})^{-1}\bX_{U}'\bm{V}\Rightarrow[\text{diag}(\eta_{1},\eta_{2},\eta_{3},\eta_{4})]^{-1}(\xi_{1},\xi_{2},\xi_{3},\xi_{4})'.
\end{align*}
Hence, $\bm{\hat{\pi}}=o_{p}(1)$. By Lemma \ref{le:unaug real1} and \ref{le:unaug real2},
\begin{align*}
\hat{\sigma}^{2}
=(4T)^{-1}\bm{V}'\bm{V}+o_{p}(1)= \text{tr}(\bm{\Gamma}_{0})/4+o_{p}(1).
\end{align*}
Hence,
\begin{align*}
\bm{t}&=\hat{\sigma}^{-1}[\text{diag}(\bX_{U}'\bX_{U})^{-1}]^{-1/2}(\bX_{U}'\bX_{U})^{-1}\bX_{U}'\bm{V}\\
&\Rightarrow (\text{tr}(\bm{\Gamma}_{0})/4)^{-1/2}[\text{diag}(\eta_{1},\eta_{2},\eta_{3},\eta_{4})]^{-1/2}(\xi_{1},\xi_{2},\xi_{3},\xi_{4})'.
\end{align*}
Further, Lemma \ref{le:unaug real2} (a) indicates an asymptotic orthogonality in the design matrix. Hence,  asymptotically the F-statistics equal the averages of the squares of the corresponding t-statsitics, e.g., $F_{3,4}^{U}-\frac{1}{2}((t_{3}^{U})^{2}+(t_{4}^{U})^{2})
\stackrel{p}\rightarrow 0$.  
\end{proof}

\section{Proof of Theorem \ref{aug real}.}

The proof follows the lines of \cite{said1984testing} and contains two parts. Firstly, we show when $T \rightarrow \infty$ and $k=k_{T} \rightarrow \infty$ simultaneously, the statistic of interest approximates a quantity free of $k$, and then we prove this quantity tends to a certain distribution as $T \rightarrow \infty$.

To begin with, notice that when $k\rightarrow \infty$, the error term of regression \eqref{aug HEGY} tends to a limit. Surprisingly, this limit is in general not $\ep_{\tau}$, because the regression \eqref{aug HEGY} falsely assumes non-periodic coefficients and thus in general cannot find the correct residuals $\ep_{\tau}$. To find the limit, recall that $\{\tV_{\tau}\}$ is defined as a misspecified constant parameter representation of $\{V_{4t+s}\}$. Under Assumption \ref{assump 1b}, the spectral densities of $\{\tV_{\tau}\}$ are finite and positive everywhere, so $\{\tV_{\tau}\}$ has AR$(\infty)$ and MA$(\infty)$ expressions 
\begin{equation}
\tpsi(L)\tV_{\tau}=\tzeta_{\tau} \ \text{and} \ \tV_{\tau}=\ttheta(L)\tzeta_{\tau},    
\label{misspec ARMA}
\end{equation}
where $\tpsi(z)=1-\sum_{i=1}^{\infty}\tpsi_{i}z^{i}$, $\ttheta(z)=1+\sum_{i=1}^{\infty}\ttheta_{i}z^{i}$. 
Let 
\begin{equation}\label{zeta_0}
\zeta_{\tau}^{(k)}=V_{\tau}-\sum_{i=1}^{k}\tpsi_{i}V_{\tau-i},    
\end{equation}
and $\zeta_{\tau}=V_{\tau}-\sum_{i=1}^{\infty}\tpsi_{i}V_{\tau-i}$. Since a misspecified constant parameter representation of $\zeta_{\tau}$ is $\tV_{\tau}-\sum_{i=1}^{\infty}\tpsi_{i}\tV_{\tau-i}$, which is exactly $\tzeta_{\tau}$ defined in \eqref{misspec ARMA}, no ambiguity arises. We can straightforwardly show that 
\begin{equation}
\frac{1}{4}\sum_{s=-3}^{0}Cov(\zeta_{4t+s-i},\zeta_{4t+s})=0, \ \forall i=1,2,\dots, \label{zeta_1}
\end{equation}
and
\begin{equation}
\frac{1}{4}\sum_{s=-3}^{0}Cov(V_{4t+s-i},\zeta_{4t+s})=0, \ \forall i=1,2,\dots. \label{zeta_2}
\end{equation}
Now we show when $T\rightarrow \infty$ and $k\rightarrow\infty$ simultaneously, the statistics of interest approximates quantities free of $k$. Let $\bm{X}$ be the design matrix of regression equation \eqref{aug HEGY},  $\bm{\hat{\beta}}=(\hat{\pi}_{1}^{A},\hat{\pi}_{2}^{A},\hat{\pi}_{3}^{A},\hat{\pi}_{4}^{A},\hat{\phi}_{1},\dots,\hat{\phi}_{k})'$ be the estimated coefficient vector of regression equation \eqref{aug HEGY}, and $\bm{\beta}=(0,0,0,0,\tpsi_{1},\dots,\tpsi_{k})'$, where $\tpsi_{i}$ is defined in \eqref{misspec ARMA}. Let $\bm{\zeta}^{(k)}=(\zeta^{(k)}_{1+k},\dots,\zeta^{(k)}_{4T})'$ and $\bm{\zeta}=(\zeta_{1+k},\dots,\zeta_{4T})'$. Define a $(4+k)\times(4+k)$ dimensional scaling matrix $\bm{D}_{T}=diag((4T-k)^{-1},(4T-k)^{-1},(4T-k)^{-1},(4T-k)^{-1},(4T-k)^{-1/2},\dots,(4T-k)^{-1/2})$. Then 
$$\bm{D}_{T}^{-1}(\bm{\hat{\beta}}-\bm{\beta})=(\bm{D}_{T}\bm{X}'\bm{X}\bm{D}_{T})^{-1}\bm{D}_{T}\bm{X}'\bm{\zeta}^{(k)}.$$
Let $\|\cdot\|$ be the $L_{2}$ induced norm of matrices. Now we define a diagonal matrix $\bm{R}$ such that  $\|\bm{D}_{T}\bm{X}'\bm{X}\bm{D}_{T}-\bm{R}\|$ converges to 0 in probability. Specifically, let 
$$\bm{R}=diag(R_{1},R_{2},R_{3},R_{4},\tbGamma),$$
where
\begin{align*}
R_{1}&=\frac{c_{1}'\bm{\Theta}(1)\sum_{\tau=k+1}^{4T} \bm{S}_{\tau}\bm{S}_{\tau}'\bm{\Theta}(1)'c_{1}}{(4T-k)^2}\\
R_{2}&=\frac{c_{2}'\bm{\Theta}(1)\sum_{\tau=k+1}^{4T} \bm{S}_{\tau}\bm{S}_{\tau}'\bm{\Theta}(1)'c_{2}}{(4T-k)^2}\\
R_{3}&=\frac{c_{3}'\bm{\Theta}(1)\sum_{\tau=k+1}^{4T} \bm{S}_{\tau}\bm{S}_{\tau}'\bm{\Theta}(1)'c_{3}+c_{4}'\bm{\Theta}(1)\sum_{\tau=k+1}^{4T} \bm{S}_{\tau}\bm{S}_{\tau}'\bm{\Theta}(1)'c_{4}}{2(4T-k)^{2}}\\
R_{4}&=R_{3}, \quad \bm{S}_{\tau}=\sum_{i=k+1}^{\tau}\bm{\ep}_{i}, \quad \tbGamma_{i,j}=\tgamma(|i-j|).
\end{align*}
The definition of $R_{j}$, $j=1,2,3,4$, follows from the multivariate Beveridge-Nielson Decomposition; see \cite{hamilton1994time}, pp. 545-546. The definition of $\tbGamma$ is due to the fact that $(4T-k)^{-1}\sum_{\tau=1+k}^{4T}V_{\tau-i}V_{\tau-j}$ converges in probability to the seasonal average of autocovariance of $\{V_{\tau}\}$ of lag $|i-j|$.

Following the definition of $\bm{R}$, we make the following decomposition:
\begin{equation}
\begin{aligned}
\bm{D}_{T}^{-1}(\bm{\hat{\beta}}-\bm{\beta})
&=(\bm{D}_{T}\bm{X}'\bm{X}\bm{D}_{T})^{-1}\bm{D}_{T}\bm{X}'\bm{\zeta}^{(k)}\\
&=[(\bm{D}_{T}\bm{X}'\bm{X}\bm{D}_{T})^{-1}-\bm{R}^{-1}]\bm{D}_{T}\bm{X}'\bm{\zeta}^{(k)}+\bm{R}^{-1}\bm{D}_{T}\bm{X}'(\bm{\zeta}^{(k)}-\bm{\zeta)}+\bm{R}^{-1}\bm{D}_{T}\bm{X}'\bm{\zeta}.
\end{aligned}
\label{Said decomposition}
\end{equation}
Notice the last term in the right hand side summation, $\bm{R}^{-1}\bm{D}_{T}\bm{X}'\bm{\zeta}$, is free of $k$. Later we will find out its asymptotic distribution as $T\rightarrow\infty$. But now we need to prove the first two terms in the right hand side of \eqref{Said decomposition} converge to zero as $T\rightarrow\infty$ and $k\rightarrow\infty$. To do so, it suffices to show
\begin{align}
\|(\bm{D}_{T}\bm{X}'\bm{X}\bm{D}_{T})^{-1}-\bm{R}^{-1}\|&=o_{p}(k^{-1/2})\label{norm 1},\\
\|\bm{D}_{T}\bm{X}'(\bm{\zeta}^{(k)}-\bm{\zeta})\|&=o_{p}(1)\label{norm 2},\\
\|\bm{D}_{T}\bm{X}'\bm{\zeta}\|&=O_{p}(k^{1/2})\label{norm 3},\\
\|\bm{R}^{-1}\|&=O_{p}(1).\label{norm 4}
\end{align}
Equation \eqref{norm 1} can be proven straightforwardly; see \cite{said1984testing}. For \eqref{norm 2}, notice
\begin{align*}
&E\|\bm{D}_{T}\bm{X}'(\bm{\zeta}^{(k)}-\bm{\zeta})\|^{2}\\
=&E[(4T-k)^{-2}\sum_{j=1}^{4}(\sum_{\tau=k+1}^{4T}Y_{j,\tau-1}(\zeta_{\tau}^{(k)}-\zeta_{\tau}))^{2}+(4T-k)^{-1}\sum_{i=1}^{k}(\sum_{\tau=k+1}^{4T}V_{\tau-i}(\zeta_{\tau}^{(k)}-\zeta_{\tau}))^{2}].
\end{align*}
Notice that $\zeta_{\tau}^{(k)}-\zeta_{\tau}=\sum_{i=k+1}^{\infty}\tpsi_{i}V_{\tau-i}$. Under Assumption \ref{assump 1b}, $\{V_{4t+s}\}$ is a VARMA sequence with finite orders, thus $\{\tV_{\tau}\}$ has an ARMA expression with finite orders; see \cite{osborn1991implications}. 
Hence, $\tpsi(L)$ has exponentially decaying coefficient $\tpsi_{i}$. It follows straightforwardly that $E\|\bm{D}_{T}\bm{X}'(\bm{\zeta}^{(k)}-\bm{\zeta})\|^{2}\rightarrow 0$. For \eqref{norm 3}, notice that
$$E\|\bm{D}_{T}\bm{X}'\bm{\zeta}\|^{2}=E[(4T-k)^{-2}\sum_{j=1}^{4}(\sum_{\tau=k+1}^{4T}Y_{j,\tau-1}\zeta_{\tau})^{2}+(4T-k)^{-1}\sum_{i=1}^{k}(\sum_{\tau=k+1}^{4T}V_{\tau-i}\zeta_{\tau})^{2}].$$
By \eqref{zeta_1}, \eqref{zeta_2}, and the stationarity of $\{\bep_{\tau}\}$,
\begin{equation}\label{norm 3_1}
\begin{aligned}
&E[(4T-k)^{-1}(\sum_{\tau=k+1}^{4T}V_{\tau-i}\zeta_{\tau})^{2}]\\
&=\frac{1}{4}\sum_{s=-3}^{0}\sum_{h=-\infty}^{\infty}Cov(V_{4t+s-i}\zeta_{4t+s},V_{4t+s-h-i}\zeta_{4t+s-h})+o(1)\\
&=\frac{1}{4}\sum_{s=-3}^{0}\sum_{h=-\infty}^{\infty}Cov(V_{s-i}\zeta_{s},V_{s-h-i}\zeta_{s-h})+o(1).
\end{aligned}
\end{equation}
Without loss of generality we can let $i=1$ and $s=0$ in \eqref{norm 3_1}. By Assumption \ref{assump 2a}, \ref{assump 2b}, and \eqref{zeta_0}, we can write $V_{\tau}$ and $\zeta_{\tau}$ as linear combinations of $\ep_{\tau}$. By doing so, we can straightforwardly get 
\begin{equation}\label{norm 3_2}
\sum_{h=-\infty}^{\infty}Cov(V_{-1}\zeta_{0},V_{-h-1}\zeta_{-h})\leq \text{const. } sup_{i_{1},j_{1},i_{2},j_{2}}\sum_{h=-\infty}^{\infty} |Cov(\ep_{i_{1}-1}\ep_{j_{1}},\ep_{i_{2}-h-1}\ep_{j_{2}-h})|.
\end{equation}
The right hand side of this inequality is assumed to be bounded under Assumption \ref{assump 2a}. On the other hand, the right hand side is also bounded under Assumption \ref{assump 2b}, by the lemma below.
\begin{Lemma}
\label{boundedness}
Under Assumption \ref{assump 2b}, there exists $K>0$ such that for all $i_1$, $i_2$, $j_1$, and $j_2$, $$\sum_{h=-\infty}^{\infty} |Cov(\ep_{i_{1}}\ep_{j_{1}},\ep_{i_{2}-h}\ep_{j_{2}-h})|<K.$$
\end{Lemma}
\begin{proof}
Without loss of generality, assume that $\{\ep_{t}\}_{t=1}^{n}$ is a strictly stationary strong mixing time series, and $\{\ep_{t}\}$'s strong mixing coefficient $\alpha(h)$ satisfies $\sum_{h=1}^{\infty}\alpha^{\delta/(4+\delta)}(h)<\infty$. 
By Lemma A.0.1 of \cite{politis1999springer},
\begin{align*}
&\mathrel{\phantom{\leq}}|Cov(\ep_{i_{1}}\ep_{j_{1}},\ep_{i_{2}-h}\ep_{j_{2}-h})|\\
&\leq\text{const.}\ \alpha(min(|i_{1}-(i_{2}-h)|,|i_{1}-(j_{2}-h)|,|j_{1}-(i_{2}-h)|,|j_{1}-(j_{2}-h)|)))^{1-\frac{1}{(4+\delta)/2}-\frac{1}{(4+\delta)/2}}.
\end{align*}
Hence,
\begin{align*}
&\sum_{h=-\infty}^{\infty} |Cov(\ep_{i_{1}}\ep_{j_{1}},\ep_{i_{2}-h}\ep_{j_{2}-h})|\\
&\leq 
\text{const. } \sum_{h=-\infty}^{\infty}(\alpha(min(|i_{1}-(i_{2}-h)|,|i_{1}-(j_{2}-h)|,|j_{1}-(i_{2}-h)|,|j_{1}-(j_{2}-h)|)))^{\frac{\delta}{4+\delta}}\\
&\leq \text{const. } \sum_{h=-\infty}^{\infty}(\alpha(|i_{1}-(i_{2}-h)|)^{\frac{\delta}{4+\delta}}+\alpha(|i_{1}-(j_{2}-h)|)^{\frac{\delta}{4+\delta}}+\alpha(|j_{1}-(i_{2}-h)|)^{\frac{\delta}{4+\delta}}+\alpha(|j_{1}-(j_{2}-h)|)^{\frac{\delta}{4+\delta}})\\
&\leq \text{const. }
\sum_{h=-\infty}^{\infty}\alpha(|h|)^{\frac{\delta}{4+\delta}}<\infty. &\mbox{\qedhere}
\end{align*}
\end{proof}
By \eqref{norm 3_1}, \eqref{norm 3_2}, Assumption \ref{assump 2a}, \ref{assump 2b}, and Lemma \ref{boundedness}, we have $E[(4T-k)^{-1}(\sum_{\tau=k+1}^{4T}V_{\tau-i}\zeta_{\tau})^{2}]=O(1)$. Similarly,  $E[((4T-k)^{-1}\sum_{\tau=k+1}^{4T}Y_{j,\tau-1}\zeta_{\tau})^{2}]=O(1)$. Hence, \eqref{norm 3} follows. 
Now justify \eqref{norm 4}. By Assumption \ref{assump 1b}, the determinant of $\bPsi(z)$ has all its roots outside unit circle. Hence, $\{\tV_{\tau}\}$ is invertible; hence, $\|\tbGamma^{-1}\|=O(1)$. In addition,
\begin{equation}\label{norm 4_1}
\frac{\bc_{j}'\bm{\Theta}(1)\bm{\Omega}^{1/2}\sum_{\tau=k+1}^{4T} \bm{S}_{\tau}\bm{S}_{\tau}'\bm{\Omega}^{1/2}\bm{\Theta}(1)'\bc_{j}}{(4T-k)^2}\Rightarrow \bc_{j}'\bm{\Theta}(1)\bm{\Omega}^{1/2}\int \bW\bW'\bm{\Omega}^{1/2}\bm{\Theta}(1)'\bc_{j}.
\end{equation}
Since
$$P(\bc_{j}'\bm{\Theta}(1)\bm{\Omega}^{1/2}\int \bW\bW'\bm{\Omega}^{1/2}\bm{\Theta}(1)'\bc_{j}=0)=0,$$
we have that for all $\ep>0$, there exists $M_{\ep}>0,$ such that $P(\bc_{j}'\bm{\Theta}(1)\bm{\Omega}^{1/2}\int \bW\bW'\bm{\Omega}^{1/2}\bm{\Theta}(1)'\bc_{j}<M_{\ep})<\ep$.
\eqref{norm 4} follows from the definition of $O_{p}(1)$.

Combining \eqref{Said decomposition}, \eqref{norm 1}, \eqref{norm 2}, \eqref{norm 3}, and \eqref{norm 4}, we have
\begin{equation*}
\bm{D}_{T}^{-1}(\bm{\hat{\beta}}-\bm{\beta})=\bm{R}^{-1}\bm{D}_{T}\bm{X}'\bm{\zeta}+o_{p}(1).    
\end{equation*}
Now we find the asymptotic distribution of $\bm{R}^{-1}\bm{D}_{T}\bm{X}'\bm{\zeta}$. In a straightforward way, the limiting distribution of $\bm{R}^{-1}$ can be derived by \eqref{norm 4_1}. Since 
$$\bm{D}_{T}\bm{X}'\bm{\zeta}=(4T-k)^{-1}\sum_{\tau=k+1}^{4T}Y_{j,\tau-1}\zeta_{\tau}+(4T-k)^{-1}\sum_{\tau=k+1}^{4T}V_{\tau-i}\zeta_{\tau},$$
we can derive the limiting distribution of $(4T-k)^{-1}\sum_{\tau=k+1}^{4T}Y_{j,\tau-1}\zeta_{\tau}$ by the lemma below, and the limiting distribution of $(4T-k)^{-1}\sum_{\tau=k+1}^{4T}V_{\tau-i}\zeta_{\tau}$ in a similar way. 
\begin{Lemma}\label{le:unaug real3}
\begin{align*}
&\frac{1}{4T}\sum_{\tau=1}^{4T}Y_{1,\tau-1}\zeta_{\tau}\Rightarrow  Var(\tzeta_{\tau})\tilde{\theta}(1)\int_{0}^{1} W_{1}(r)dW_{1}(r), \\
&\frac{1}{4T}\sum_{\tau=1}^{4T}Y_{2,\tau-1}\zeta_{\tau}\Rightarrow  Var(\tzeta_{\tau})\tilde{\theta}(-1)\int_{0}^{1} W_{2}(r)dW_{2}(r),
\end{align*}
\begin{align*}
&(\frac{1}{4T}\sum_{\tau=1}^{4T}Y_{3,\tau-1}\zeta_{\tau})^{2}+(\frac{1}{4T}\sum_{\tau=1}^{4T}Y_{4,\tau-1}\zeta_{\tau})^{2}\\
&\Rightarrow\frac{ Var(\tzeta_{\tau})[\frac{1}{4}\bc_{4}'\bTheta(1)\bOmega\bTheta(1)'\bc_{4}\int W_{4}(r)dW_{4}(r)+\frac{1}{4}\bc_{3}'\bTheta(1)\bOmega\bTheta(1)'\bc_{3}\int W_{3}(r)dW_{3}(r)]^{2}}{\frac{1}{4}(\bc_{4}'\bTheta(1)\bOmega\bTheta(1)'\bc_{4}+\bc_{3}'\bTheta(1)\bOmega\bTheta(1)'\bc_{3})}\\
&\mathrel{\phantom{\Rightarrow}}+\frac{ Var(\tzeta_{\tau})[\sqrt{\frac{1}{4}\bc_{4}'\bTheta(1)\bOmega\bTheta(1)'\bc_{4}\frac{1}{4}\bc_{3}'\bTheta(1)\bOmega\bTheta(1)'\bc_{3}}(\int_{0}^{1} W_{3}(r)d W_{4}(r)-\int W_{4}(r)dW_{3}(r))]^{2}}{\frac{1}{4}(\bc_{4}'\bTheta(1)\bOmega\bTheta(1)'\bc_{4}+\bc_{3}'\bTheta(1)\bOmega\bTheta(1)'\bc_{3})}.
\end{align*}
\label{asy zeta}
\end{Lemma}
\begin{proof}[Proof of Lemma \ref{asy zeta}]
Firstly we focus on the convergence of $\frac{1}{4T}\sum_{\tau=1}^{4T}Y_{1,\tau-1}\zeta_{\tau}$. The convergence of $\frac{1}{4T}\sum_{\tau=1}^{4T}Y_{2,\tau-1}\zeta_{\tau}$ can be proven analogously. Define $\tbPsi(z)$ such that $\bzeta_{t}=\tbPsi(\bB)\bV_{t}$. Let $\xi_{\tau}=\tpsi(L)Y_{\tau}$, $\xi_{1,\tau}=\tpsi(L)Y_{1,\tau}$, $\bxi_{t}=(\xi_{4t-3},\xi_{4t-2},\xi_{4t-1},\xi_{4t})'$, $\bzeta_{t}=(\zeta_{4t-3},\zeta_{4t-2},\zeta_{4t-1},\zeta_{4t})'$. Then $\bB\bxi_{t}=\bzeta_{t}$, and 
\begin{align*}
&\frac{1}{4T}\sum_{\tau=1}^{4T}Y_{1,\tau-1}\zeta_{\tau}\\
&=\ttheta(1)\frac{1}{4T}\sum_{t=1}^{T}\sum_{s=-3}^{0}\xi_{1,4t+s-1}\zeta_{4t+s} \ \text{ (by Beveridge-Nielson Decomposition, up to }\op)\\
&=\ttheta(1)\frac{1}{4T}\sum_{t=1}^{T}[\bc_{1}'\bxi_{t-1}\bzeta_{t}'\bc_{1}+\sum_{s=-3}^{0}\sum_{k=-3}^{s-1}\zeta_{4t+k}\zeta_{4t+s}]\\
&\Rightarrow\frac{1}{4}\ttheta(1)\bc_{1}'\tbPsi(1)\bTheta(1)\bOmega^{1/2}\int \bW d \bW' \bOmega^{1/2}\bTheta(1)'\tbPsi(1)'\bc_{1} \ \text{(by \eqref{zeta_1}, \eqref{zeta_2}, and FCLT)}\\
&\mathrel{\phantom{\Rightarrow}}+\frac{1}{4}\ttheta(1)[\sum_{s=-3}^{0}\sum_{k=-3}^{s-1}E\zeta_{4t+k}\zeta_{4t+s}+\bc_{1}'\sum_{i=1}^{\infty}E\bzeta_{t-i}\bzeta_{t}'\bc_{1}]\\
&=\frac{1}{4}\ttheta(1)\bc_{1}'\tbPsi(1)\bTheta(1)\bOmega^{1/2}\int \bW d \bW' \bOmega^{1/2}\bTheta(1)'\tbPsi(1)'\bc_{1} \ \text{(by \eqref{zeta_1}, } \{\tzeta_{\tau}\} \text{ is white noise)}\\
&=\frac{1}{4}\ttheta(1)(\tpsi(1))^{2}\bc_{1}'\bTheta(1)\bOmega^{1/2}\int \bW d \bW' \bOmega^{1/2}\bTheta(1)'\bc_{1} \ (\text{since } \bc_{1}'\tbPsi(1)=\tpsi(1)\bc_{1}')\\
&= Var(\tzeta_{\tau})\tilde{\theta}(1)\int_{0}^{1} W_{1}(r)dW_{1}(r)\\
&\text{(by \cite{osborn1991implications}, p. 378, } \frac{1}{4}\bc_{1}'\bTheta(1)\bOmega\bTheta(1)'\bc_{1}= Var(\tzeta_{\tau})\tilde{\theta}(1)^{2}).
\end{align*}
Secondly we show the convergence of $(\frac{1}{4T}\sum_{\tau=1}^{4T}Y_{3,\tau-1}\zeta_{\tau})^{2}+(\frac{1}{4T}\sum_{\tau=1}^{4T}Y_{4,\tau-1}\zeta_{\tau})^{2}$. Let $\xi_{3,\tau}=\tpsi(L)Y_{3,\tau}$, $\tpsi_{a}=(\tpsi(i)+\tpsi(-i))/2$, $\tpsi_{b}=(\tpsi(i)-\tpsi(-i))/2i$, 
$\ttheta_{a}=(\ttheta(i)+\ttheta(-i))/2$, and $\ttheta_{b}=(\ttheta(i)-\ttheta(-i))/2i$. Then 
\begin{align*}
\allowdisplaybreaks
&\frac{1}{4T}\sum_{\tau=1}^{4T}Y_{3,\tau-1}\zeta_{\tau}\\
&=\frac{1}{4T}\sum_{\tau=1}^{4T}(\ttheta_{a}\xi_{3,\tau-1}-\ttheta_{b}\xi_{4,\tau-1})\zeta_{\tau}\\
&(\text{by Beveridge-Nielson Decomposition, up to }\op) \\
&=\frac{1}{4T}\sum_{t=1}^{T}\ttheta_{a}[\bc_{3}'\bxi_{t-1}\bzeta_{t}'\bc_{3}+\bc_{4}'\bxi_{t-1}\bzeta_{t}'\bc_{4}-\sum_{s=-3}^{-2}\zeta_{4t+s}\zeta_{4t+s+2}]\\
&\mathrel{\phantom{=}}-\frac{1}{4T}\sum_{t=1}^{T}\ttheta_{b}[\bc_{3}'\bxi_{t-1}\bzeta_{t}'\bc_{4}-\bc_{4}'\bxi_{t-1}\bzeta_{t}'\bc_{3}-\sum_{s=-3}^{-1}\zeta_{4t+s}\zeta_{4t+s+1}+\zeta_{4t-3}\zeta_{4t}]\\
&\Rightarrow \frac{1}{4}\ttheta_{a}[\bc_{3}'\tbPsi(1)\bTheta(1)\bOmega^{1/2}\int \bW d \bW' \bOmega^{1/2}\bTheta(1)'\tbPsi(1)'\bc_{3}\\
&\mathrel{\phantom{\Rightarrow \frac{1}{4}\ttheta_{a}[}}+\bc_{4}'\tbPsi(1)\bTheta(1)\bOmega^{1/2}\int \bW d \bW' \bOmega^{1/2}\bTheta(1)'\tbPsi(1)'\bc_{4}]\\
&\mathrel{\phantom{=}}-\frac{1}{4}\ttheta_{b}[\bc_{3}'\tbPsi(1)\bTheta(1)\bOmega^{1/2}\int \bW d \bW' \bOmega^{1/2}\bTheta(1)'\tbPsi(1)'\bc_{4}\\
&\mathrel{\phantom{\mathrel{\phantom{=}}-\frac{1}{4}\ttheta_{b}[}}-\bc_{4}'\tbPsi(1)\bTheta(1)\bOmega^{1/2}\int \bW d \bW' \bOmega^{1/2}\bTheta(1)'\tbPsi(1)'\bc_{3}]\\
&(\text{by \eqref{zeta_1}, \eqref{zeta_2}, and FCLT, the covariances of }\{\zeta_{4t+s}\} \text{ cancel out since } \{\tzeta_{\tau}\} \text{ is white noise})\\
&= \frac{1}{4}\ttheta_{a}|\tpsi(i)|^{2}[\bc_{4}'\bTheta(1)\bOmega^{1/2}\int \bW d \bW' \bOmega^{1/2}\bTheta(1)'\bc_{4}+\bc_{3}'\bTheta(1)\bOmega^{1/2}\int \bW d \bW' \bOmega^{1/2}\bTheta(1)'\bc_{3}]\\
&\mathrel{\phantom{=}}-\frac{1}{4}\ttheta_{b}|\tpsi(i)|^{2}[\bc_{3}'\bTheta(1)\bOmega^{1/2}\int \bW d \bW' \bOmega^{1/2}\bTheta(1)'\bc_{4}-\bc_{4}'\bTheta(1)\bOmega^{1/2}\int \bW d \bW' \bOmega^{1/2}\bTheta(1)'\bc_{3}]\\
&(\text{since } \bc_{3}'\tbPsi(1)=\tpsi_{b}\bc_{4}'+\tpsi_{a}\bc_{3}', \ \bc_{4}'\tbPsi(1)=\tpsi_{a}\bc_{4}'-\tpsi_{b}\bc_{3}', \ \text{and } \tpsi_{a}^{2}+\tpsi_{b}^{2}=|\tpsi(i)|^{2}).
\end{align*}
Similarly,
\begin{align*}
&\frac{1}{4T}\sum_{\tau=1}^{4T}Y_{4,\tau-1}\zeta_{\tau}\\
&=\frac{1}{4}\ttheta_{b}|\tpsi(i)|^{2}[\bc_{4}'\bTheta(1)\bOmega^{1/2}\int \bW d \bW' \bOmega^{1/2}\bTheta(1)'\bc_{4}+\bc_{3}'\bTheta(1)\bOmega^{1/2}\int \bW d \bW' \bOmega^{1/2}\bTheta(1)'\bc_{3}]\\
&\mathrel{\phantom{=}}+\frac{1}{4}\ttheta_{a}|\tpsi(i)|^{2}[\bc_{3}'\bTheta(1)\bOmega^{1/2}\int \bW d \bW' \bOmega^{1/2}\bTheta(1)'\bc_{4}-\bc_{4}'\bTheta(1)\bOmega^{1/2}\int \bW d \bW' \bOmega^{1/2}\bTheta(1)'\bc_{3}]
\end{align*}
The lemma follows from  $|\tpsi(i)|^{2}=|\ttheta(i)|^{-2}$ and the fact that, by \cite{osborn1991implications}, $$ Var(\tzeta_{\tau}) |\ttheta(i)|^{2}=\frac{1}{4}(\bc_{4}'\bTheta(1)\bOmega\bTheta(1)'\bc_{4}+\bc_{3}'\bTheta(1)\bOmega\bTheta(1)'\bc_{3}). \eqno\mbox{\qedhere} $$ \end{proof}
As mentioned, the asymptotic distribution of $\bm{\hat{\beta}}$ can be derived straightforwardly from Lemma \ref{le:unaug real3}. Now we come to the asymptotic distribution of the t-statistics and the F-statistics. Notice,
\begin{align*}
t_{j}^{A}&=\hat{\sigma}^{-1}[[(\bm{X}'\bm{X})^{-1}]_{jj}]^{-1/2}[(\bm{X}'\bm{X})^{-1}\bm{X}'\bm{\zeta}^{(k)}]_{j}\\
&=\hat{\sigma}^{-1}[[[(4T-k)^{-2}(\bm{X}'\bm{X})^{-1}]_{jj}]^{-1/2}-[[\bm{R}^{-1}]_{jj}]^{-1/2}](4T-k)[(\bm{X}'\bm{X})^{-1}\bm{X}'\bm{\zeta}^{(k)}]_{j}\\
&\mathrel{\phantom{=}}+\hat{\sigma}^{-1}[[\bm{R}^{-1}]_{jj}]^{-1/2}((4T-k)(\bm{X}'\bm{X})^{-1}\bm{X}'\bm{\zeta}^{(k)}-\bm{R}^{-1}(4T-k)^{-1}\bm{X}'\bzeta)_{j}\\
&\mathrel{\phantom{=}}+\hat{\sigma}^{-1}[[\bm{R}^{-1}]_{jj}]^{-1/2}(\bm{R}^{-1}(4T-k)^{-1}\bm{X}'\bzeta)_{j}\\
&=\hat{\sigma}^{-1}[[\bm{R}^{-1}]_{jj}]^{-1/2}(\bm{R}^{-1}(4T-k)^{-1}\bm{X}'\bzeta)_{j}+o_{p}(1).
\end{align*}
By the consistency of $\bm{\hat{\beta}}$, we have $\hat{\sigma}^{2}\stackrel{p}\rightarrow Var(\tzeta_{\tau})$. The asymptotic distributions of the t-statistics follows straightforwardly from Lemma \ref{asy zeta}. Further, the asymptotic distributions of the F-statistics are identical with the asymptotic distributions of the averages of the squares of the corresponding t-statistics because of the asymptotic orthogonality of the regression. Hence, the proof of Theorem \ref{aug real} is complete.
\section{Proof of Proposition \ref{iid FCLT}.}
Define $\{\phi_{i,s}\}$ and $\{\ep_{4t+s}^{(k)}\}$ as the prediction coefficients and errors of the predictive equation
$$(1-L^{4})Y_{4t+s}=\sum_{j=1}^{4}\pi_{j,s}Y_{j,4t+s-1}+\sum_{i=1}^{k}\phi_{i,s}(1-L^{4})Y_{4t+s-i}+\ep_{4t+s}^{(k)}.$$ 
More specifically, $\{\phi_{i,s}\}$ are defined such that for each $s=1,2,3,4$,  $\ep_{4t+s}^{(k)}\stackrel{def}{=}(1-L^{4})Y_{4t+s}-\sum_{j=1}^{4}\pi_{j,s}Y_{j,4t+s-1}-\sum_{i=1}^{k}\phi_{i,s}(1-L^{4})Y_{4t+s-i}$ is orthogonal to $Y_{j,4t+s-1}$, $j=1,2,3,4$ and $(1-L^{4})Y_{4t+s-i}$, $i=1,\dots,k.$ Define $\{i_{\tau}\}$ and $\{I_{t}\}$ such that $\ep_{\tau}^{\star}=\check{\ep}_{i_{\tau}}$ and  $\ep^{\star}_{4t+s}=\check{\ep}_{4I_{t}+s}$. By Algorithm \ref{seasonal iid bootstrap}, $\{i_{\tau}\}$ is a sequence of independent but not identical random variables, while $\{I_{t}\}$ is a sequence of iid random variables. Let
\begin{align*}
\up_{T,\tau}^{(1)}&= 
(\ep_{i_{\tau}}^{(k)}-E^{\circ}\ep_{i_{\tau}}^{(k)})/Std^{\circ}(\ep_{i_{\tau}}^{(k)})\\
\up_{T,\tau}^{(2)}&= 
(-1)^{\tau}(\ep_{i_{\tau}}^{(k)}-E^{\circ}\ep_{i_{\tau}}^{(k)})/Std^{\circ}((-1)^{\tau}\ep_{i_{\tau}}^{(k)})\\
\up_{T,\tau}^{(3)}&= 
\sqrt{2}\sin(\frac{\pi \tau}{2})(\ep_{i_{\tau}}^{(k)}-E^{\circ}\ep_{i_{\tau}}^{(k)})/Std^{\circ}(\sqrt{2}\sin(\frac{\pi \tau}{2})\ep_{i_{\tau}}^{(k)})\\
\up_{T,\tau}^{(4)}&= 
\sqrt{2}\cos(\frac{\pi \tau}{2})(\ep_{i_{\tau}}^{(k)}-E^{\circ}\ep_{i_{\tau}}^{(k)})/Std^{\circ}(\sqrt{2}\cos(\frac{\pi \tau}{2})\ep_{i_{\tau}}^{(k)})
\end{align*}
Let $R_{T}^{\star}$ be the partial sum of $\up_{T,\tau}$ above. Formally,
$$R_{T}^{\star}(u_{1},u_{2},u_{3},u_{4})=(
\frac {1}{\sqrt{4T}}\sum_{\tau=1}^{\lfloor 4Tu_{1} \rfloor}\up_{T,\tau}^{(1)},
\frac {1}{\sqrt{4T}}\sum_{\tau=1}^{\lfloor 4Tu_{2}\rfloor}\up_{T,\tau}^{(2)},
\frac {1}{\sqrt{4T}}\sum_{\tau=1}^{\lfloor 4Tu_{3} \rfloor}\up_{T,\tau}^{(3)},
\frac {1}{\sqrt{4T}}\sum_{\tau=1}^{\lfloor 4Tu_{4} \rfloor}\up_{T,\tau}^{(4)})'.$$
Let $\|\cdot\|$ denote the $L_{2}$ norm. To justify Proposition \ref{iid FCLT}, it suffices to show  
\begin{equation}\label{iid FCLT1}
\|S_{T}^{\star}-R_{T}^{\star}\|\stackrel{p}\rightarrow 0 \text{ uniformly in $u_{1}$, $u_{2}$, $u_{3}$ and $u_{4}$, }    
\end{equation}
\begin{equation}\label{iid FCLT2}
\text{ and } R_{T}^{\star}\Rightarrow \bW^{\star} \text{ in probability, }
\end{equation}
because the unconditional convergence in \eqref{iid FCLT1} implies that in probability the conditional distribution of $\|S_{T}^{\star}-R_{T}^{\star}\|$ given $\{Y_{4t+s}\}$ converges to zero. To prove \eqref{iid FCLT1}, we can without loss of generality focus on the uniform convergence of the first coordinate, that is, uniformly in $u_{1}$, \
\begin{equation*}
|\frac{1}{\sqrt{4T}}\sum_{\tau=1}^{\lfloor 4Tu_{1} \rfloor}\ep_{\tau}^{\star}/\sigma_{1}^{\star}-\frac {1}{\sqrt{4T}}\sum_{\tau=1}^{\lfloor 4Tu_{1} \rfloor}\up_{T,\tau}^{(1)}|\stackrel{p}\rightarrow 0.
\end{equation*} 
Notice uniformly in $u_{1}$, 
\begin{equation}
\begin{aligned}
&\frac{1}{\sqrt{4T}}\sum_{\tau=1}^{\lfloor 4Tu_{1} \rfloor}\ep_{\tau}^{\star}\\
=&\frac{1}{\sqrt{4T}}\sum_{s=-3}^{0}\sum_{t=1}^{\lfloor Tu_{1}\rfloor}\ep_{4t+s}^{\star}+o_{p}(1)\\
=&\frac{1}{\sqrt{4T}}\sum_{s=-3}^{0}\sum_{t=1}^{\lfloor Tu_{1}\rfloor}(\hat{\ep}_{4I_{t}+s}-\frac{1}{T}\sum_{t=1}^{T}\hat{\ep}_{4t+s})+o_{p}(1)\\
=&\frac{1}{\sqrt{4T}}\sum_{s=-3}^{0}\sum_{t=1}^{\lfloor Tu_{1}\rfloor}(\ep_{4I_{t}+s}^{(k)}-\frac{1}{T}\sum_{t=1}^{T}\ep_{4t+s}^{(k)})\\
&-\frac{1}{\sqrt{4T}}\sum_{s=-3}^{0}\sum_{t=1}^{\lfloor Tu_{1}\rfloor}\sum_{j=1}^{4}(\hat{\pi}_{j,s}^{A}-\pi_{j,s})(Y_{j,4I_{t}+s-1}-\frac{1}{T}\sum_{t=1}^{T}Y_{j,4t+s-1})\\
&-\frac{1}{\sqrt{4T}}\sum_{s=-3}^{0}\sum_{t=1}^{\lfloor Tu_{1}\rfloor}\sum_{i=1}^{k}(\hat{\phi}_{i,s}-\phi_{i,s})((1-L^{4})Y_{4I_{t}+s-i}-\frac{1}{T}\sum_{t=1}^{T}(1-L^{4})Y_{4t+s-i})\\
&+o_{p}(1)\\
=&\frac{1}{\sqrt{4T}}\sum_{s=-3}^{0}\sum_{t=1}^{\lfloor Tu_{1}\rfloor}(\ep_{4I_{t}+s}^{(k)}-\frac{1}{T}\sum_{t=1}^{T}\ep_{4t+s}^{(k)})-B_{T}(u_{1})-C_{T}(u_{1})+o_{p}(1)\\
=&\frac{1}{\sqrt{4T}}\sum_{\tau=1}^{\lfloor 4Tu_{1} \rfloor}(\ep_{i_{\tau}}^{(k)}-E^{\circ}\ep_{i_{\tau}}^{(k)})-B_{T}(u_{1})-C_{T}(u_{1})+o_{p}(1),
\end{aligned}
\label{iid FCLT3}
\end{equation}
where $B_{T}(u_{1})$ and $C_{T}(u_{1})$ have obvious definitions. 

Now we show 
$B_{T}(u_{1})\stackrel{p}\rightarrow 0$, and $C_{T}(u_{1})\stackrel{p}\rightarrow 0$, uniformly in $u_{1}$. For $B_{T}(u_{1})$, notice if $\pi_{j,s}\neq 0$, then  $\{Y_{j,4t+s}\}$ is weakly stationary. Hence, by \cite{berk1974}, $\hat{\pi}_{j,s}^{A}-\pi_{j,s}=O_{p}(T^{-1/2})$. It follows straightforwardly that $B_{T}(u_{1})\stackrel{p}\rightarrow 0$ uniformly in $u_{1}$. On the other hand, if $\pi_{j,s}=0$, then by Theorem \ref{aug real}, $\hat{\pi}_{j,s}^{A}-\pi_{j,s}=O_{p}(T^{-1})$. Let  $$Q_{T}(u_{1})=\frac{1}{\sqrt{4T}}\sum_{t=1}^{\lfloor Tu_{1}\rfloor}( Y_{j,4I_{t}+s-1}-\frac{1}{T}\sum_{t=1}^{T}Y_{j,4t+s-1}).$$ It suffices to show that $\sup_{0\leq u_{1} \leq 1} Q_{T}(u_{1})=o_{p}(T)$. By continuous mapping theorem, it suffices to prove $(4T)^{-1}Q_{T}(\cdot) \Rightarrow 0(\cdot)$, where $0(\cdot)\equiv 0$. It is straightforward to show the weak convergence of the finite dimensional distributions of $(4T)^{-1}Q_{T}(\cdot)$. Furthermore, for all $r_{1}\leq r\leq r_{2}$,
\begin{equation*}
\begin{aligned}
&E[(\frac{Q_{T}(r_{2})}{T}-\frac{Q_{T}(r)}{T})^{2}(\frac{Q_{T}(r)}{T}-\frac{Q_{T}(r_{1})}{T})^{2}]\\
=&E[Var^{\circ}[\frac{Q_{T}(r_{2})}{T}-\frac{Q_{T}(r)}{T}]Var^{\circ}[\frac{Q_{T}(r)}{T}-\frac{Q_{T}(r_{1})}{T}]]\rightarrow 0.
\end{aligned}
\end{equation*}
By \cite{billingsley1999}, pp. 146-147, $(4T)^{-1}Q_{T}(\cdot)$ is tight. Hence $(4T)^{-1}Q_{T}(\cdot) \Rightarrow 0(\cdot)$, and consequently $B_{T}(u_{1})\stackrel{p}\rightarrow 0$ uniformly in $u_{1}$. For $C_{T}(u_{1})$, in light of the derivation of Theorem \ref{aug real}, it can be shown that $\hat{\phi}_{i,s}-\phi_{i,s}=O_{p}(T^{-1/2})$ holds not only under alternative hypotheses but also under the null. Hence, it follows that uniformly in $u_{1}$, $C_{T}(u_{1})\stackrel{p}\rightarrow 0$. Therefore, recalling \eqref{iid FCLT3}, we have
$$\sup_{0\leq u_{1}\leq 1}|\frac{1}{\sqrt{4T}}\sum_{\tau=1}^{\lfloor 4Tu_{1} \rfloor}\ep_{\tau}^{\star}-\frac{1}{\sqrt{4T}}\sum_{\tau=1}^{\lfloor 4Tu_{1} \rfloor}(\ep_{i_{\tau}}-E^{\circ}\ep_{i_{\tau}})|\stackrel{p}\rightarrow 0.$$
Further, it is straightforward to show $E[B_{T}^{2}(1)]\stackrel{p}\rightarrow 0$, and  $E[C_{T}^{2}(1)]\stackrel{p}\rightarrow 0$. Using the same decomposition as in \eqref{iid FCLT3}, we have $\sigma_{1}^{\star}-Std^{\circ}(\ep_{i_{\tau}}^{(k)})\stackrel{p}\rightarrow 0$. Hence we have proven \eqref{iid FCLT1}.

Secondly we prove \eqref{iid FCLT2}. Notice that the standard deviations in the definition of $\{\up_{T,\tau}^{(j)}\}$ are bounded in probability. For example, 
$$Std^{\circ}(\ep_{i_{\tau}}^{(k)})=Std^{\circ}(\ep_{4I_{t}+s}^{(k)})=Std(\ep_{4t+s}^{(k)})+o_{p}(1)=Std(\ep_{4t+s})+o_{p}(1),$$
Further, conditional on $\{Y_{4t+s}\}$, for fixed $j=1,\dots,4$, $\up_{T,1}^{(j)}, \up_{T,2}^{(j)},\dots,\up_{T,T}^{(j)}$ are row-wise independent random variables. Finally, for all $u\geq 0$,
$$Var^{\circ}[\frac{1}{\sqrt{4T}}\sum_{m=1}^{\lfloor 4Tu \rfloor}\up_{T,m}^{(j)}]\stackrel{p}\rightarrow u,$$
$$Cov^{\circ}(\frac{1}{\sqrt{4T}}\sum_{m=1}^{\lfloor 4Tu \rfloor}\up_{T,m}^{(j)},\frac{1}{\sqrt{4T}}\sum_{m=1}^{\lfloor 4Tu \rfloor}\up_{T,m}^{(i)})\stackrel{p}\rightarrow 0 \quad \text{for} \ i\neq j.$$
Hence, the conditions of Theorem 3.3 of \cite{helland1982central} are satisfied. From \cite{kreiss2016}, the convergence of $R_{T}^{\star}$ to $\bW^{\star}$ follows.
\section{Proof of Proposition \ref{SBB FCLT}.}
\begin{proof}
Without loss of generality, assume block size $b$ is a multiple of four. Let $i_{m}=I_{(m-1)b+1}$, where $I_{t}$ is defined in Algorithm \ref{seasonal block boot}. Then the $m$th block of $\{V_{t}^{*}\}$ starts from $\check{V}_{i_{m}}$. Recall $l=\lfloor 4T/b \rfloor$ denotes the number of blocks. Let $\up_{l,m}^{(j)}$ be the rescaled aggregation of the $m$th block, defined by
\begin{align*}
\up_{l,m}^{(1)}&= \frac {1}{\sqrt{b}} \sum _{h=1}^{b}(V_{i_{m}+h-1}-E^{\circ}V_{i_{m}+h-1})/Std^{\circ}(\frac {1}{\sqrt{b}}\sum _{h=1}^{b}V_{i_{m}+h-1})\\
\up_{l,m}^{(2)}&= \frac {1}{\sqrt{b}} \sum _{h=1}^{b}(-1)^{h}(V_{i_{m}+h-1}-E^{\circ}V_{i_{m}+h-1})/Std^{\circ}(\frac {1}{\sqrt{b}}\sum _{h=1}^{b}(-1)^{h}V_{i_{m}+h-1})\\
\up_{l,m}^{(3)}&= \frac {1}{\sqrt{b}} \sum _{h=1}^{b}\sqrt{2}\sin(\frac{\pi h}{2})(V_{i_{m}+h-1}-E^{\circ}V_{i_{m}+h-1})/Std^{\circ}(\frac {1}{\sqrt{b}}\sum _{h=1}^{b}\sqrt{2}\sin(\frac{\pi h}{2})V_{i_{m}+h-1})\\
\up_{l,m}^{(4)}&= \frac {1}{\sqrt{b}} \sum _{h=1}^{b}\sqrt{2}\cos(\frac{\pi h}{2})(V_{i_{m}+h-1}-E^{\circ}V_{i_{m}+h-1})/Std^{\circ}(\frac {1}{\sqrt{b}}\sum _{h=1}^{b}\sqrt{2}\cos(\frac{\pi h}{2})V_{i_{m}+h-1})
\end{align*}
Let $R_{T}^{*}$ be the partial sum of the block aggregations above. Formally,

$$R_{T}^{*}(u_{1},u_{2},u_{3},u_{4})=(
\frac {1}{\sqrt{l}}\sum_{m=1}^{\lfloor lu_{1} \rfloor}\up_{l,m}^{(1)},
\frac {1}{\sqrt{l}}\sum_{m=1}^{\lfloor lu_{2}\rfloor}\up_{l,m}^{(2)},
\frac {1}{\sqrt{l}}\sum_{m=1}^{\lfloor lu_{3} \rfloor}\up_{l,m}^{(3)},
\frac {1}{\sqrt{l}}\sum_{m=1}^{\lfloor lu_{4} \rfloor}\up_{l,m}^{(4)})'$$
To prove theorem \ref{SBB FCLT}, it suffices to show
\begin{equation}
\|S_{T}^{*}-R_{T}^{*}\|\stackrel{p}\rightarrow 0 \text{ uniformly in $u_{1}$, $u_{2}$, $u_{3}$ and $u_{4}$, }
\label{SBB FCLT1}
\end{equation}
\begin{equation}
\text{ and } R_{T}^{*}\Rightarrow \bW^{*} \text{ in probability,}
\label{SBB FCLT2}
\end{equation}
where $\|\cdot\|$ denotes the $L_{2}$ norm.  

The proof of \eqref{SBB FCLT1} is similar to that of \eqref{iid FCLT1}. Here we only present the proof of \eqref{SBB FCLT2}. First we assume Assumption \ref{assump 1a} and \ref{assump 2a}. In this scenario, it is sufficient to show that the following three properties hold:
\begin{align}
&\sum\limits_{m=1}^{\lfloor lu \rfloor}E^{\circ}[\up_{l,m}^{(i)^{2}}] \stackrel {p} \rightarrow u, \: \forall u\geq 0, \: and \: \forall \: i=1,\dots,4, \label{SBB FCLT3} \\
&\sum\limits_{m=1}^{\lfloor lu \rfloor}E^{\circ}[\up_{l,m}^{(i)^{2}}1(|\up_{l,m}|>\ep)] \stackrel {p} \rightarrow 0, \: \forall u\geq 0, \:  \: \forall \: i=1,\dots,4, \label{SBB FCLT4} \\
&\sum\limits_{m=1}^{\lfloor lu \rfloor}E^{\circ}[\up_{l,m}^{(i)}\up_{l,m}^{(j)}] \stackrel {p} \rightarrow 0, \: \forall u\geq 0, \  \forall \: i,j \in \{1,2,3,4\}, \: i\neq j.
\label{SBB FCLT5}
\end{align}
Given \eqref{SBB FCLT3}, \eqref{SBB FCLT4}, and \eqref{SBB FCLT5}, \cite{helland1982central} shows that if each row of $\{\up_{l,m}\}$ is a martingale difference sequence, then $$\sum_{m=1}^{\lfloor lu \rfloor}\up_{l,m}\Rightarrow \bW^{*}(u).$$ 
By Beveridge-Neilson Decomposition, e.g., Proposition 17.2, \cite{hamilton1994time},   p. 504, Helland's result can be generalized to the case when each row of $\{\up_{l,m}\}$ is a convolution of a constant sequence and a martingale difference array. Further, Helland's result can be generalized to the bootstrap world with \cite{kreiss2016}. Hence it suffices to show \eqref{SBB FCLT3}, \eqref{SBB FCLT4}, and \eqref{SBB FCLT5}. 

To verify \eqref{SBB FCLT3} and \eqref{SBB FCLT4}, notice that for all $u\geq 0$, $i=1,\dots,4$,
$$\sum_{m=1}^{\lfloor lu \rfloor}E^{\circ}[(\up_{l,m}^{(i)})^{2}]=\lfloor lu \rfloor / l \rightarrow u,$$
and, by the dominated convergence theorem,
$$\sum_{m=1}^{\lfloor lu \rfloor}E^{\circ}[(\up_{l,m}^{(i)})^{2}1(|\up_{l,m}|>\ep)] \stackrel {p} \rightarrow 0.$$
Hence, it remains to verify the \eqref{SBB FCLT5}, which indicates asymptotic independence between coordinates of $R_{T}^{*}$. Notice that the \eqref{SBB FCLT5} needs to be proved for all $i,j \in \{1,2,3,4\}, \: i\neq j$. Here we cite as an example the case $i=1$ and $j=3$. The rest of cases can be shown by similar calculations. Notice,
\begin{align*}
\sum_{m=1}^{\lfloor lt \rfloor}E^{\circ}[\up_{l,m}^{(1)}\up_{l,m}^{(3)}]
=&\frac{E^{\circ}[\frac{1}{\sqrt{b}}\sum_{h=1}^{b}V_{i_{1}+h-1}\frac{1}{\sqrt{b}}\sum_{r=1}^{b}\sqrt{2}\sin(\pi r/2)V_{i_{1}+r-1}]}
{Std^{\circ}[\frac{1}{\sqrt{b}}\sum_{h=1}^{b}V_{i_{1}+h-1}]Std^{\circ}[\frac{1}{\sqrt{b}}\sum_{r=1}^{b}\sqrt{2}\sin(\pi r/2)V_{i_{1}+r-1}]}\\
&-\frac{E^{\circ}[\frac{1}{\sqrt{b}}\sum_{h=1}^{b}V_{i_{1}+h-1}]E^{\circ}[\frac{1}{\sqrt{b}}\sum_{r=1}^{b}\sqrt{2}\sin(\pi r/2)V_{i_{1}+r-1}]}{Std^{\circ}[\frac{1}{\sqrt{b}}\sum_{h=1}^{b}V_{i_{1}+h-1}]Std^{\circ}[\frac{1}{\sqrt{b}}\sum_{r=1}^{b}\sqrt{2}\sin(\pi r/2)V_{i_{1}+r-1}]}.
\end{align*}
Since $$E^{\circ}[\frac{1}{\sqrt{b}}\sum_{h=1}^{b}V_{i_{1}+h-1}]\stackrel {p} \rightarrow 0,\: E^{\circ}[\frac{1}{\sqrt{b}}\sum_{r=1}^{b}\sqrt{2}\sin(\pi r/2)V_{i_{1}+r-1}]\stackrel {p} \rightarrow 0,$$
and both $Std^{\circ}[\frac{1}{\sqrt{b}}\sum_{h=1}^{b}V_{i_{1}+h-1}]$ and $Std^{\circ}[\frac{1}{\sqrt{b}}\sum_{r=1}^{b}\sqrt{2}\sin(\pi r/2)V_{i_{1}+r-1}]$ converge in probability to constants (\cite{dudek2014generalized}), we only need to show that $$E^{\circ}[\frac{1}{\sqrt{b}}\sum_{h=1}^{b}V_{i_{1}+h-1}\frac{1}{\sqrt{b}}\sum_{r=1}^{b}\sqrt{2}\sin(\pi r/2)V_{i_{1}+r-1}] \stackrel {p} \rightarrow 0.$$
Notice, 
\begin{align*}
&E^{\circ}[\frac{1}{\sqrt{b}}\sum_{h=1}^{b}V_{i_{1}+h-1}\frac{1}{\sqrt{b}}\sum_{r=1}^{b}\sqrt{2}\sin(\pi r/2)V_{i_{1}+r-1}]\\
&=\frac{\sqrt{2}}{b(T-b/4)}\sum_{i=1}^{T-b/4}\sum_{h=1}^{b}\sum_{r=1}^{b}\sin(\pi r/2)V_{4i+h-4}V_{4i+r-4}\\
&=-A+B+o_{p}(1), 
\end{align*}
where
\begin{align*}
A&=\frac{\sqrt{2}}{b(T-b/4)}\sum_{h=1}^{b/4}\sum_{j=1}^{T-b/4}V_{4j-3}V_{4j+4h-6},\\ B&=\frac{\sqrt{2}}{b(T-b/4)}\sum_{h=1}^{b/4}\sum_{j=1}^{T-b/4}V_{4j-3}V_{4j+4h-4}.
\end{align*}
By Assumptions \ref{assump 1a} and \ref{assump 2a}, it is straightforward to show
\begin{equation}
A\stackrel {p} \rightarrow 0, B\stackrel {p} \rightarrow 0.
\label{SBB FCLT6}
\end{equation}
Thus, we complete the proof under Assumption \ref{assump 1a} and \ref{assump 2a}. Now we assume Assumption \ref{assump 1a} and \ref{assump 2b}. Let $\bm{\up}_{l,m}=(\up_{l,m}^{(1)},\up_{l,m}^{(2)},\up_{l,m}^{(3)},\up_{l,m}^{(4)})'$. Let $\lambda_{l,j}$, $j=1,\dots,4$, be the eigenvalues of $Var\sum_{m=1}^{l}\bm{\up}_{l,m}$. By Corollary 4.2 of \cite{wooldridge1988some}, it is sufficient to show that the following two properties hold:
\begin{align}
E^{\circ}(\frac{1}{\sqrt{l}}\sum_{m=1}^{\lfloor lt \rfloor}\up_{l,m}^{(i)})(\frac{1}{\sqrt{l}}\sum_{m=1}^{\lfloor lt \rfloor}\up_{l,m}^{(j)})& \stackrel {p} \rightarrow t\mathbb{1}\{i=j\}, \ \text{for all} \ t\geq 0, \ \text{for all} \ i,j=1,\dots,4 ,\label{SBB FCLT7}\\
\lambda_{l,j}^{-1}&=O(l^{-1}), \ \text{for all} \ j=1,\dots,4.\label{SBB FCLT8}
\end{align}
Notice, to show \eqref{SBB FCLT7}, it suffices to show \eqref{SBB FCLT6}, which follows from Assumptions \ref{assump 1a} and \ref{assump 2b} and Lemma \ref{boundedness}. In addition, \eqref{SBB FCLT8} follows from the continuity of the eigenvalue function. Hence, we have completed the proof under \ref{assump 1a} and \ref{assump 2b}. 

Until now we assume that block size $b$ is a multiple of four. When $b$ is not a multiple of four, it is straightforward to show \eqref{SBB FCLT1}. For \eqref{SBB FCLT2}, let $$R_{T,s}^{*}=(\frac{1}{\sqrt{l/4}}\sum_{t=1}^{\lfloor\lfloor lu_{1}\rfloor/4\rfloor}\up_{l,4t+s}^{(1)},\frac{1}{\sqrt{l/4}}\sum_{t=1}^{\lfloor\lfloor lu_{2}\rfloor/4\rfloor}\up_{l,4t+s}^{(2)},\frac{1}{\sqrt{l/4}}\sum_{t=1}^{\lfloor\lfloor lu_{3}\rfloor/4\rfloor}\up_{l,4t+s}^{(3)},\frac{1}{\sqrt{l/4}}\sum_{t=1}^{\lfloor\lfloor lu_{4}\rfloor/4\rfloor}\up_{l,4t+s}^{(4)})'.$$
Since $\{R_{T,s}^{*}$, $s=-3,\dots,0\}$ are mutually independent with respect to $P^{\circ}$, and $R_{T,s}^{*}\Rightarrow \bW^{*}$ in probability for all $s=-3,\dots,0$, we have $R_{T}^{*}=\frac{1}{2}\sum_{s=-3}^{0}R_{T,s}^{*}+o_{p}(1)\Rightarrow \bW^{*}$ in probability.
\end{proof}
\end{document}